\tikzstyle{automaton} = [
\DeclareFontFamily{U}{mathb}{\hyphenchar\font45}
\DeclareFontShape{U}{mathb}{m}{n}{
<-6> mathb5 <6-7> mathb6 <7-8> mathb7
<8-9> mathb8 <9-10> mathb9
<10-12> mathb10 <12-> mathb12
}{}
\DeclareSymbolFont{mathb}{U}{mathb}{m}{n}
\DeclareMathSymbol{\lefttorightarrow}      {3}{mathb}{"FC}
\DeclareMathSymbol{\righttoleftarrow}      {3}{mathb}{"FD}
\title{On Star Expressions and \\Coalgebraic Completeness Theorems}
\author{Todd Schmid
\institute{Department of Computer Science\\
UCL\\
London, UK}
\email{todd.schmid.19@ucl.ac.uk}
\and
Jurriaan Rot
\institute{Department of Computer Science\\
Radboud University\\
Nijmegen, The Netherlands}
\email{jrot@cs.ru.nl}
\and
Alexandra Silva\footnote{Silva’s work was partially supported by ERC grant Autoprobe (grant agreement 101002697) and a Royal society Wolfson fellowship.}
\institute{Department of Computer Science\\
UCL\\
London, UK}
\email{alexandra.silva@ucl.ac.uk}
}
\begin{document}

%!TEX root = ./mfps/mfps-main.tex

% *** Theorem Environments *** %
\theoremstyle{plain}
\newtheorem{theorem}{Theorem}[section]
\newtheorem{lemma}{Lemma}[section]
\newtheorem{corollary}{Corollary}[section]
\newtheorem{proposition}{Proposition}[section]
\theoremstyle{definition}
\newtheorem{remark}{Remark}[section]
\newtheorem{definition}{Definition}[section]
\newtheorem{example}{Example}[section]
\newtheorem{conjecture}{Conjecture}[section]

% *** Quick Commands/Custom Symbols ** %
\newcommand	\sem[1]		{\llbracket#1\rrbracket}
\newcommand \trans[1]	{\mathrel{\raisebox{-2pt}{\(\xrightarrow{#1}\)}}}

\newcommand	\filter		{\mathop{@}}

\newcommand	\Cat[1] 	{\operatorname{{\bf #1}}} % for categories
\newcommand \define[1]  {\emph{\textbf{#1}}} % definition statement plus indexing

\newcommand	\N 			{{\mathbb N}} % natural numbers
\newcommand \SExp       {{\operatorname{SExp}}} % expressions
\newcommand \acro[1] 	{{\sf #1}}

\newcommand \Sets 		{{\Cat{Sets}}} % category of sets and functions
\newcommand \Coalg      {{\operatorname{Coalg}}} % for coalgebra categories
\newcommand \Cov		{{\operatorname{Cov}}} % for covarieties
\newcommand \Id 		{\operatorname{Id}} % identity functor
\newcommand \Pfin 		{\mathcal P_{\omega}}

\newcommand \coalg[1]	{{\mathcal #1}} % for coalgebras
\newcommand \coeq[1]	{{\mathit #1}} % for coeq
\newcommand \V 			{{\coalg V}} % generic coalgebra
\newcommand \X 			{{\coalg X}} % generic coalgebra
\newcommand \Y 			{{\coalg Y}} % generic coalgebra
\newcommand \Z 			{{\coalg Z}} % generic coalgebra
\newcommand \R  		{{\coalg R}} % bisimulation relation
\newcommand \Q  		{{\coalg Q}} % bisimulation relation

\newcommand \id 		{\operatorname{id}} % identity morphism
\newcommand \beh 		{{\operatorname{beh}}} % map into the final coalgebra
\newcommand \loopright  {\mathrel{\lefttorightarrow}}
\newcommand \loopleft   {\mathrel{\righttoleftarrow}}
\newcommand \bisim 		{\mathrel{\raisebox{1pt}{\(\underline{\leftrightarrow}\)}}} % bisimilar
\newcommand \bo         {\to_{\sf b}}%\trans{\mathsf b}} % body transition
\newcommand \eo 		{\mathbin{\contourlength{0.02em}\contour{black}{\(\to_{\sf e}\)}}}
\newcommand \diredge 	{\curvearrowright} % directed path
\newcommand \into 		{\mathrel{\hookrightarrow}}
\newcommand \onto 		{\mathrel{\twoheadrightarrow}}

\newcommand \Node 		{{\operatorname{Node}}} % nodes of an ordered tree
\newcommand \Leaf 		{{\operatorname{Leaf}}} % nodes of an ordered tree

\newcommand \llb 		{\llbracket} % blackboard [
\newcommand \rrb 		{\rrbracket} % blackboard ]
\newcommand \nin        {\mathbin{\not\in}} % not in
\newcommand \img 		{{\operatorname{img}}}
\newcommand \dom 		{{\operatorname{dom}}}
\newcommand \op 		{{\text{op}}}
% *** ----------------------------- ** %

\maketitle

\begin{abstract}
	An open problem posed by Milner asks for a proof that a certain axiomatisation, which Milner showed is sound with respect to bisimilarity for regular expressions, is also complete. 
One of the main difficulties of the problem is the lack of a full \emph{Kleene theorem}, since there are automata that can not be specified, up to bisimilarity, by an expression.
Grabmayer and Fokkink (2020) characterise those automata that can be expressed by regular expressions without the constant 1, and use this characterisation to give a positive answer to Milner's question for this subset of expressions.
In this paper, we analyse Grabmayer and Fokkink's proof of completeness from the perspective of universal coalgebra, and thereby give an abstract account of their proof method.
We then compare this proof method to another approach to completeness proofs from coalgebraic language theory.
This culminates in two abstract proof methods for completeness, what we call the \emph{local} and \emph{global} approaches, and a description of when one method can be used in place of the other.

%Milner gave a sound axiomatisation of bisimulation equivalence for regular expressions, and posed completeness as an open problem
%Gramayer and Fokkink (2020) provide such a correspondence between their so-called \acro{LLEE}-charts and \(1\)-free regular expressions, and use this correspondence to give a positive, partial answer to Milner's question.
%In this paper, we analyze Grabmayer and Fokkink's proof of completeness from the perspective of universal coalgebra.
%This perspective allows us to give a general coalgebraic account of their proof method, and prove that it provides completeness theorems in general. 
%% we put this into a general (I would say coalgbera here) perspective and connect it to
%We then compare Grabmayer and Fokkink's completeness proof method to another approach to completeness proofs from coalgebraic language theory, of which we also give a general coalgebraic account.
%Overall, we use \cite{grabmayerfokkink2020complete} as a case study in completeness proof methods from coalgebra that do not rely on a one-to-one correspondence between expressions and all finite automata.
%This culminates in two abstract proof methods for completeness, what we call the \emph{local} and \emph{global} approaches, and a description of those situations in which each method can be used in place of the other. 
\end{abstract}

%!TEX root = ./eptcs/eptcs-main.tex

% *** *** SECTION *** *** % = = = = = = = = = = = = = = = = = = = = = =
\section{Introduction}\label{sec:introduction}
	In 1984, Robin Milner gave a non-standard operational interpretation of regular expressions~\cite{milner1984complete}, viewing them as nondeterministic processes rather than regular languages.
	Comparing them for bisimilarity rather than language equivalence affects the semantics in two key ways.
	First, there are finite nondeterministic processes that do not behave like any regular expression 
	up to bisimilarity (the problem of characterising those that do was solved first in~\cite{BaetenCG07}).
	This draws a stark contrast with the language semantics of regular expressions, where Kleene's theorem gives a one-to-one correspondence between finite automata and expressions.
	 % that automata and expressions characterise the same class of languages.
	Second, there are axioms in Salomaa's complete axiomatisation of the algebra of regular expressions~\cite{salomaa1966two} that are unsound in the process interpretation.
	Milner offers a modified version of Salomaa's axioms and shows that they are sound with respect to bisimilarity.
	Completeness is left as an open problem in \cite{milner1984complete}, a full solution to which has yet to appear in the literature.
%	Secondly, a few of the axioms used to capture language equivalence are unsound in the process interpetation.
%	Milner offers a modified version of Salomaa's axiomatisation of the algebra of regular expressions from \cite{salomaa1966two} and shows that the new axiomatisation is sound with respect to bisimilarity.
%	Completeness of the new axiomatisation is left as an open problem in \cite{milner1984complete}, a full solution to which has yet to appear in the literature.

	Several partial solutions to Milner's problem are known.
%	The three that are of interest to us operate by restricting the problem to a fragment of the regular expressions where process semantics is better behaved and then rephrasing Milner's axioms so that they specifically apply to the expressions under consideration.
	For instance, by omitting the constants \(0\) and \(1\) from the language and replacing the Kleene star with its binary version,\footnote{In fact, Kleene's original star operation was binary. However, the binary star operation seems to have fallen out of fashion by the time \cite{salomaa1966two} was written.} interpreted as iteration on its first argument before proceeding with the second, one obtains the calculus introduced in \cite{bergstrabethkeponse1994process}.
	The corresponding axiomatisation was shown to be complete with respect to bisimilarity in \cite{fokkinkzantema1994basic,fokkinkzantema1997termination}.
	Later, soundness and completeness were shown for the fragments including either (or both) of \(0\) and \(1\), but with a perpetual loop operator \((-)^\omega\) in place of the star~\cite{fokkink1997perpetual}.
	These partial solutions led up to the recent paper of Grabmayer and Fokkink~\cite{grabmayerfokkink2020complete}, which solves the completeness problem for the fragment of \(1\)-free regular expressions, and so subsumes much of the previous work on the problem.
%	We are particularly interested in Grabmayer and Fokkink's contribution to Milner's problem. 
%	Their fragment contains \(0\), \(A\), the alternating and sequential composition operations, and the binary star operation mentioned above.
%	It contains the fragment considered in \cite{fokkinkzantema1994basic} and the the perpetual loop fragment in \cite{fokkink1997perpetual} without \(1\), and so subsumes much of the previous work on the problem.
	However, what specifically interests us in their work is that it perfectly illustrates a subtle difficulty in completeness proofs. % that may have gone unnoticed before.
	
%	The approach, as it is the case for many related completeness proofs, 
%	goes through automata. The operational semantics assigns to each expression an automaton; conversely,
%	each automaton has a unique solution in the expressions. 
%	An important step is then to explicitly relate solutions of semantically equivalent models, 
%	through the construction of homomorphic images. However, although proving uniqueness of solutions 
%	is often the highest hurdle in such completeness proofs, here the main issue is
%	to identify a class of 
	
	Grabmayer and Fokkink's approach consists of four key parts, as is the case for many related completeness proofs that go through automata.
	The first is the production of models from expressions through the operational semantics.
	The second is a sort of inverse to the first, a notion of \emph{solution to a model} in the class 
	of expressions. 
%	A solution to a model is essentially an expression that generates a semantically equivalent model.
	The third is the identification of a distinguished class of models that includes the semantics of the expressions, every member of which admits a \emph{unique} solution modulo the axioms.
	This gives a one-to-one correspondence between models in the distinguished class and expressions modulo the axioms.
	The fourth is the ability to combine or reduce and compare models (without leaving the distinguished class), as well as their solutions.
	
	The last part is subtler than the first three.
	In a classical proof such as~\cite{salomaa1966two}, but also in more recent coalgebraic formulations (e.g.,~\cite{silva2010kleene,jacobs2006bialgebraic}), the distinguished class typically consists of all finite (or locally finite) automata, and comparing automata consists of finding a bisimulation between them.
	Bisimulations between finite automata are trivially finite, so the fourth step is rarely worth mentioning in this situation.
	Here, the highest hurdle to clear seems to be the issue of proving that solutions are unique.
	This is in stark contrast with the setting of Grabmayer and Fokkink's paper~\cite{grabmayerfokkink2020complete}, where the class of models they consider are the so-called \emph{\acro{LLEE}-charts}. 
	The necessity of identifying such a non-trivial class is caused by the above-mentioned issue that, up to bisimilarity, not all finite processes are characterised by regular expressions. 
	For \acro{LLEE}-charts, uniqueness of solutions is not a triviality, but also does not warrant a proof in the main body of \cite{grabmayerfokkink2020complete}.\footnote{It appears in the extended version \cite{grabmayerfokkink2020extended}.} 
	In comparison, a great amount of ingenuity is involved in establishing the fourth of the moving parts mentioned above: The ability to reduce equivalent \acro{LLEE}-charts to a common \acro{LLEE}-chart. 
	
	Grabmayer and Fokkink's solution to this problem is highly innovative and technical, and makes use of new tools carefully crafted for proving the compositionality result mentioned above.
	The abstract view we present here is no replacement for the detailed combinatorial arguments found in \cite{grabmayerfokkink2020complete}.
	Instead, the intent of the present paper is to unpack its contents by situating them in the context of universal coalgebra.
	Universal coalgebra is a well-established general framework for state-based systems, subsuming constructs like bisimilarity and behaviour~\cite{rutten2000universal,jacobs2016introduction}.
	%Our coalgebraic spin on its central results allow us to strengthen a few of its results and techniques:
	We give a coalgebraic spin on some of the results of \cite{grabmayerfokkink2020complete}, strengthening some while simplifying the proofs of others:
	\begin{itemize}
		\item We show that solutions to automata are in one-to-one correspondence with coalgebra homomorphisms into the expressions modulo the axioms.
		\item We elucidate the four moving parts of Grabmayer and Fokkink's completeness proof mentioned above and prove that they are sufficient in a general coalgebraic setting.
		\item We generalise the \emph{connect-through-to} operation from \cite{grabmayerfokkink2020complete} to a purely coalgebraic construction. 
		We coin the term \emph{rerouting} for this construction, and show that a prevalence of reroutings can be used to establish the fourth moving part of completeness proofs.
		% \item We discuss the connection between Grabmayer and Fokkink's completeness proof and the completeness proofs found in \cite{jacobs2006bialgebraic} and \cite{silva2010kleene}.
		\item Finally, we give a general account of a related approach to completeness proofs found in \cite{jacobs2006bialgebraic,silva2010kleene,milius2010streamcircuits,bonsanguemiliussilva2013sound} and show how the method used by Grabmayer and Fokkink can be restructured to fit this mould.
	\end{itemize}
	Overall, we use the structure of the completeness proof in \cite{grabmayerfokkink2020complete} as a case study in completeness proof methods from coalgebra that do not rely on a one-to-one correspondence between expressions and all finite automata.	
	This culminates in two abstract proof methods for completeness, what we call the \emph{local} and \emph{global} approaches, and a description of those situations in which the latter method can be used in place of the former. 
%
%	The latter completeness proofs also fit into the framework characterised by the four moving parts mentioned above.
%	However, the distinguished classes of models in their methods require more combinations and reductions than are currently available in Grabmayer and Fokkink's distinguished class.

	% Coalgebra has made a particularly deep impact on the theory of Kleene algebra and its extensions~\cite{jacobs2006bialgebraic,silva2010kleene,smolkafosterhsukappekozensilva2019gkat,schmidkappekozensilva2021gkat,fosterkozenmilanosilvathompson2015netkat}, where it plays a notable role in structuring completeness proofs.
	
	% Grabmayer and Fokkink's approach to completeness in \cite{grabmayerfokkink2020complete} is highly innovative and technical, and makes use of a number of new tools carefully crafted specifically for their arguments.

	% Relying on existing coalgebraic theory, we are further able to simplify some of the proofs in \cite{grabmayerfokkink2020complete} and strenghten others.
	
	The paper is organized as follows:
	In \autoref{sec:coalgebras and 1-free star expressions}, we introduce the \(1\)-free fragment of regular expressions in parallel with its coalgebraic aspects.
	In \autoref{sec:coalgebras and completeness}, we discuss the four moving parts of Grabmayer and Fokkink's completeness proof and show that they are sufficient in a general coalgebraic setting. 
	In \autoref{sec:layered loop existence and elimination}, we give an alternative description of \acro{LLEE}-charts and show how Grabmayer and Fokkink's technique for reducing \acro{LLEE}-charts can be strengthened.
	It is in this section that we generalize their \emph{connect-through-to} operation.
	Lastly, in \autoref{sec:a global approach}, we give a general account of a related approach to completeness proofs, found in \cite{jacobs2006bialgebraic,silva2010kleene,milius2010streamcircuits,bonsanguemiliussilva2013sound}, and show how the method used by Grabmayer and Fokkink can be restructured to fit this mould.
	% We make a case for the possibility that Grabmayer and Fokkink's \acro{LLEE}-charts form a covariety, and show how the alternative completeness proof would play out if this were true.

	% In particular, we are able to show that the so-called \emph{layered loop existence and elimination} property is preserved under more general homomorphic images than mere bisimulation collapse.
	% This motivates us to ask the question of whether this property is preserved by all homomorphic images, and conjecture that such a theorem could be provable through the use of \emph{coequations}.  

	% Need to find balance between having unique solutions and having homomorphic images. Without homomorphic images there are extra steps to the proof.
	% Here is a motivating example: the story of an axiomatisation studyied by Milner where the completeness proof strategy of Salomaa does not go through without extra steps.

	% \medskip
	% \noindent\textbf{Our contributions.}
	% In a nutshell, the contributions of this paper are
	
% *** *** SECTION *** *** % = = = = = = = = = = = = = = = = = = = = = =
\section{Coalgebras and 1-free Star Expressions}\label{sec:coalgebras and 1-free star expressions}
	For a fixed finite set \(A\) of \emph{atomic actions}, the set of \emph{\(1\)-free star expressions}, or \emph{star expressions} for short, is generated by the BNF grammar
	\[
		\SExp \ni e,f ::= a \in A \mid 0 \mid e + f \mid ef \mid e * f
	\]
	The expression \(e*f\) denotes the regular expression \(e^* f\) from \cite{milner1984complete}, but we write \(*\) as an infix to emphasize that it is a binary operation in this formalism, as in Kleene's seminal paper~\cite{kleene1951representation}.

	Operationally, each star expression specifies a labelled transition system with outputs in \(2^A\), called its \emph{chart}.
	Following~\cite{grabmayerfokkink2020complete}, a \emph{chart} consists of a set of \emph{states} \(X\), a \emph{transition} relation \(\trans{(-)}\ \subseteq X \times A \times X\), an \emph{output} relation \(\Rightarrow\ \subseteq X \times A\), and a \emph{start} state \(x \in X\) from which every other state is reachable via a path of finite length.
	We impose the additional assumption that charts are \emph{finitely branching}, ie. for any \(x \in X\) and \(a \in A\), \(x \trans{a} y\) for finitely many \(y \in X\).
	
	Where \(\Pfin(X) = \{U \subseteq X \mid |U| < \omega\}\), a transition relation is equivalent to a function \(\partial: X \to \Pfin(X)^A\), an output relation is equivalent to a function \(o : X \to 2^A\), and they can be given together by a function
	\[
		\langle o, \partial\rangle : X \to 2^A \times \Pfin(X)^A.
	\]
	Since \(\langle o, \partial\rangle\) says nothing about a start state, we call a pair \((X, \langle o, \partial\rangle)\) a \emph{prechart}.

	Precharts fit nicely into the framework of universal coalgebra.
	For an endofunctor \(G\) on the category \(\Sets\) of sets and functions, a \emph{\(G\)-coalgebra} is a pair \((X, \delta_\X)\) consisting of a set \(X\) of \emph{states} and a \emph{structure map} \(\delta_\X : X \to GX\).
	Thus, if \(P(X) = 2^A \times \Pfin(X)^A\) and \(P(f : X \to Y)(o, h)(a) = (o(a), f(h(a)))\), precharts are precisely \(P\)-coalgebras.
	Given a prechart \((X, \langle o, \partial\rangle)\), its transition and output relations can be recovered by writing \(x \trans{a} y\) to denote \(y \in \partial(x)(a)\) and \(x \Rightarrow a\) to denote \(o(x)(a) = 1\).

	To obtain a chart from each star expression, Grabmayer and Fokkink begin by giving the set of star expressions \(\SExp\) the structure of a prechart. 
	The transitions of \(\SExp\) are built inductively from the interpretations of expressions as processes: The constant \(0\) is deadlock, \(a \in A\) is the process that performs the action \(a\) and then terminates, \(e + f\) and \(ef\) are alternative and sequential composition respectively, and \(e * f\) iterates \(e\) before executing \(f\).
	Formally, the transitions and outputs of \(\SExp\) are those derivable from the rules in \autoref{fig:prechart_structure_of_sexp}.

	\begin{figure}[!t]
		\begin{mathpar}
			\infer{a \in A}{a \Rightarrow a}
			\and
			\infer{e_i \Rightarrow a}{e_1 + e_2 \Rightarrow a}
			\and
			\infer{e_i \trans{a} f}{e_1 + e_2 \trans{a} f}
			\and
			\infer{e_1 \Rightarrow a}{e_1e_2 \trans{a} e_2}
			\and
			\infer{e_1 \trans{a} f}{e_1e_2 \trans{a} fe_2} 
			\\
			\infer{e_2 \Rightarrow a}{e_1*e_2 \Rightarrow a}
			\and
			\infer{e_2 \trans{a} f}{e_1*e_2\trans{a} f}
			\and
			\infer{e_1 \trans{a} f}{e_1*e_2 \trans{a} f(e_1*e_2)}
			\and
			\infer{e_1 \Rightarrow a}{e_1*e_2 \trans{a} e_1*e_2}
		\end{mathpar}
		\caption{The prechart \((\SExp, \langle o_\SExp, \partial_\SExp\rangle)\).}
		\label{fig:prechart_structure_of_sexp}
	\end{figure}

	Given an expression \(e \in \SExp\), the \emph{chart interpretation of \(e\)} is the smallest subset of \(\SExp\) containing \(e\) and closed under the transition and output relations.
	The resulting prechart is denoted \(\langle e \rangle\), and coincides with the smallest \emph{subcoalgebra} of \(\SExp\) containing \(e\), ie. if \(U \subseteq \SExp\) contains \(e\) and \((U, \langle o_U, \partial_U\rangle)\) is a \(P\)-coalgebra such that
	\begin{equation}\label{eq:inclusion homomrphism}
		\begin{tikzcd}
		U \ar[rr, hook, "{\text{in}_U}"] \ar[d, "{\langle o_U, \partial_U\rangle}"'] && \SExp \ar[d, "{\langle o_\SExp, \partial_\SExp\rangle}"] \\
		P(U) \ar[rr, hook, "{P(\text{in}_U)}"] && P(\SExp)
	\end{tikzcd}
	\end{equation}
	commutes, then \(\langle e \rangle \subseteq U\).
	In general, a \emph{chart} is a prechart of the form \(\langle x\rangle\) for some \((X, \langle o,\partial\rangle)\) with \(x \in X\).

	In coalgebraic terminology, \eqref{eq:inclusion homomrphism} states that the set \(U\) carries a \(P\)-coalgebra structure such that the inclusion of \(U\) into \(\SExp\) is a \emph{\(P\)-coalgebra homomorphism}.
	For a general endofunctor \(G\) on \(\Sets\), a \emph{\(G\)-coalgebra homomorphism} from \((X, \delta_X)\) to \((Y, \delta_Y)\) is a map \(h : X \to Y\) such that \(\delta_Y\circ h = G(h) \circ \delta_X\).
	We write \(X \cong Y\) if there is a bijective coalgebra homomorphism \(X\to Y\), and say that \(X\) and \(Y\) are \emph{isomorphic}.
	%As inclusion maps are monic, each \(U \subseteq X\) carries at most one \(G\)-coalgebra structure \((U,\delta_U)\) w.r.t. which it is a subcoalgebra of \((U,\delta_X)\).
	Homomorphisms coincide with the standard notion of \emph{functional bisimulation}.

	\begin{restatable}{lemma}{restatefunctionalbisimulations}\label{lem:functional bisimulations}
		A function \(h : X \to Y\) between precharts is a coalgebra homomorphism if and only if for any \(x \in X\), \(y' \in Y\), and \(a \in A\), (i) \(x \Rightarrow a\) if and only if \(h(x) \Rightarrow a\), and (ii) \(h(x) \trans{a} y'\) if and only if there is an \(x' \in X\) such that \(h(x') = y'\) and \(x \trans{a} x'\).
	\end{restatable}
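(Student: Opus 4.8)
The plan is to unfold the definitions on both sides and check the two biconditionals match the coalgebra-homomorphism equation componentwise. Recall that $P X = 2^A \times \Pfin(X)^A$ and that a $P$-coalgebra homomorphism $h : (X, \langle o_\X, \partial_\X\rangle) \to (Y, \langle o_\Y, \partial_\Y\rangle)$ is a map $h : X \to Y$ satisfying $\langle o_\Y, \partial_\Y\rangle \circ h = P(h) \circ \langle o_\X, \partial_\X\rangle$. Since $P(h)(o, g)(a) = (o(a), h[g(a)])$ (writing $h[-]$ for direct image, which is how $\Pfin$ acts on functions), this single equation in $2^A \times \Pfin(Y)^A$ splits, by taking the two projections, into: (a) $o_\Y(h(x)) = o_\X(x)$ for all $x$, and (b) $\partial_\Y(h(x))(a) = h[\partial_\X(x)(a)]$ for all $x \in X$ and $a \in A$.

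First I would handle the output condition. Condition (a) says $o_\Y(h(x))(a) = o_\X(x)(a)$ for all $a \in A$; under the translation $x \Rightarrow a \iff o_\X(x)(a) = 1$ and $h(x) \Rightarrow a \iff o_\Y(h(x))(a) = 1$, this is exactly clause (i) of the lemma. This direction is a pure rewriting of notation.

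Next I would treat the transition condition, which is the only place any genuine (if small) argument is needed. Condition (b) is an equality of finite subsets of $Y$, so I would prove it by mutual inclusion. Unfolding $y' \in \partial_\Y(h(x))(a)$ as $h(x) \trans{a} y'$ and $y' \in h[\partial_\X(x)(a)]$ as ``there exists $x' \in X$ with $h(x') = y'$ and $x \trans{a} x'$,'' the equality $\partial_\Y(h(x))(a) = h[\partial_\X(x)(a)]$ becomes precisely clause (ii): the $\subseteq$-inclusion gives the forward implication (if $h(x) \trans{a} y'$ then $y'$ comes from some $a$-successor $x'$ of $x$), and the $\supseteq$-inclusion gives the backward implication (any $a$-successor $x'$ of $x$ maps to an $a$-successor $h(x')$ of $h(x)$). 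Conversely, assuming clauses (i) and (ii) for all $x, y', a$, the same unfolding shows both inclusions hold, hence (a) and (b) hold, hence $h$ is a homomorphism.

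There is no real obstacle here; the statement is essentially a dictionary entry relating the abstract commuting square to the relational description, and the proof is a routine componentwise unfolding. The only point worth being careful about is that $\Pfin$ acts on morphisms by direct image and that equality of the finite-subset components must be checked in both directions — but since everything is finitely branching, the finiteness side conditions are automatically met and play no active role.
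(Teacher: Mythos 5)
Your proposal is correct and matches the paper's own proof: both split the homomorphism equation into its output and transition components via the projections, identify the output component with clause (i), and verify the finite-set equality $\partial_Y(h(x))(a) = h[\partial_X(x)(a)]$ by the two inclusions corresponding to the two directions of clause (ii). Nothing further is needed.
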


	\emph{Bisimulation} can also be captured coalgebraically.
	For a general endofunctor \(G\), a \emph{bisimulation} between two coalgebras \((X, \delta_X)\) and \((Y, \delta_Y)\) is a relation \(R \subseteq X \times Y\) carrying a coalgebra structure \((R, \delta_R)\) such that the projection maps \(\pi_1 : R \to X\) and \(\pi_2 : R \to Y\) are \(G\)-coalgebra homomorphisms.
	It follows from Lemma~\ref{lem:functional bisimulations} that a relation \(R \subseteq X \times Y\) between precharts is a bisimulation if and only if for any \((x,y) \in R\) and \(a \in A\), (i) \(x \Rightarrow a\) if and only if \(y \Rightarrow a\); (ii) if \(x \trans{a} x'\), then there is a \(y' \in Y\) such that \((x',y') \in R\) and \(y \trans{a} y'\); and (iii) if \(y \trans{a} y'\), then there is an \(x' \in X\) such that \(x \trans{a} x'\) and \((x',y') \in R\).
	Conversely, a map \(h\) is a coalgebra homomorphism if and only if its graph \(\operatorname{Gr}(h) = \{(x,h(x))\mid x \in X\}\) is a bisimulation.
	If there is a bisimulation \(R\) relating \(x \in X\) and \(y \in Y\), we say \(x\) and \(y\) are \emph{bisimilar} and write \(x \bisim y\).
	Restricted to a single coalgebra, \({\bisim} \subseteq X \times X\) is a \emph{bisimulation equivalence}, a bisimulation that is also an equivalence relation.
	
	Within \(\SExp\), bisimilarity satisfies a number of intuitive equivalences, keeping in mind the interpretation of star expressions as processes.
	For instance, \(0e \bisim 0\) and \(e + f \bisim f + e\) for any \(e,f \in \SExp\).
	These are captured by two of the axioms suggested by Milner in \cite{milner1984complete}, appearing as (B7) and (B1) in Grabmayer and Fokkink's adaptation of Milner's axioms to star expressions summarised in \autoref{fig:the_axioms_found_in_cite_grabmayerfokkink2020complete}.
	We define \(\equiv\) to be the smallest congruence relation on \(\SExp\) containing the pairs \(e \equiv f\) found in \autoref{fig:the_axioms_found_in_cite_grabmayerfokkink2020complete}.

	\begin{figure}[!t]
		\begin{mathpar}
		\begin{array}{c r c l}
			\text{(B1)}		&		e_1 + e_2 	&\equiv& e_2 + e_1\\
			\text{(B2)}		& e_1 + (e_2 + e_3) &\equiv& (e_1 + e_2) + e_3\\
			\text{(B3)}		&		e_1 + e_1 	&\equiv& e_1\\
			\text{(B4)}		&		(e_1 + e_2)e_3 	&\equiv& e_1e_3 + e_2e_3\\
			\text{(B5)}		&		e_1(e_2e_3) 	&\equiv& (e_1e_2)e_3
		\end{array}
		\and
		\begin{array}{c r c l}
			\text{(B6)}		&		e_1 + 0	&\equiv& e_1\\
			\text{(B7)}		&		0e_1 	&\equiv& 0\\
			\text{(BKS1)}	&  e_1 * e_2&\equiv& e_1 (e_1 * e_2) + e_2\\
			\text{(BKS2)}	&		(e_1 * e_2)e_3 	&\equiv& e_1 * (e_2e_3)\\
			\text{(RSP)}	&		&&\hspace{-4em}\infer{e_3 \equiv e_1 e_3 + e_2}{e_3 \equiv e_1*e_2}
		\end{array}

		\end{mathpar}
		\caption{A sound and complete axiomatisation \cite{grabmayerfokkink2020complete}. Here, \(e_1,e_2,e_3 \in \SExp\).}
		\label{fig:the_axioms_found_in_cite_grabmayerfokkink2020complete}
	\end{figure}

	In general, an equivalence relation \(\equiv\) on the state space of a \(G\)-coalgebra \(E\) is \emph{sound} with respect to bisimilarity if \(\equiv\) is a bisimulation equivalence, and \emph{complete} with respect to bisimilarity if \(e \equiv f\) whenever \(e \bisim f\).
	The following theorem says that the axioms in \autoref{fig:the_axioms_found_in_cite_grabmayerfokkink2020complete} build a sound equivalence relation on \(\SExp\).

	\begin{restatable}{theorem}{restatestrongsoundness}\label{thm:strong soundness}
		The relation \({\equiv} \subseteq \SExp \times \SExp\) is a bisimulation equivalence on \(\SExp\).
	\end{restatable}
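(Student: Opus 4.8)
The plan is to prove the stronger statement that $\equiv$ is a bisimulation on the prechart $\SExp$ of \autoref{fig:prechart_structure_of_sexp}; since $\equiv$ is built as a congruence and hence is already reflexive, symmetric, and transitive, this gives the theorem. By the characterisation of bisimulations between precharts recorded just after \autoref{lem:functional bisimulations}, it suffices to verify, for every pair with $e \equiv f$, the transfer property $\Phi(e,f)$: for all $a \in A$, (i) $e \Rightarrow a \iff f \Rightarrow a$; (ii) if $e \trans{a} e'$ then $f \trans{a} f'$ and $e' \equiv f'$ for some $f'$; and (iii) symmetrically with the roles of $e$ and $f$ swapped. Since $\Phi$ is symmetric in $e$ and $f$, establishing it for all $e\equiv f$ by induction on the derivation of $e\equiv f$ lets the symmetry rule be handled for free.

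First I would dispatch the easy cases of the induction. Reflexivity is immediate; symmetry uses symmetry of $\Phi$; transitivity composes two instances of $\Phi$ and uses transitivity of $\equiv$ on residuals; and each congruence rule follows by applying the inductive hypotheses to the premises, reading off the successors of the composite expression from the operational rules of \autoref{fig:prechart_structure_of_sexp}, and re-packaging the residuals with one further congruence step. Next come the base cases, the axiom instances, each handled by directly unfolding the operational rules: for (B1)--(B4), (B6), (B7), and (BKS1) the set of $a$-successors and the set of outputs of the two sides are literally equal, while for (B5) and (BKS2) they agree up to a bounded number of rewrites by the sequencing axioms (B5)/(BKS2), possibly inside a context, so the matched residuals are again $\equiv$-related.

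The hard part will be the rule (RSP): from a derivation of $e_3 \equiv e_1 e_3 + e_2$, so from the inductive hypothesis $\Phi(e_3, e_1 e_3 + e_2)$, I must derive $\Phi(e_3, e_1 * e_2)$. Condition (i) is easy because $e_1*e_2$, $e_1 e_3 + e_2$, and $e_2$ emit exactly the same atoms (there is no output rule for products). For (ii) and (iii) I would unfold the four transitions of $e_1 * e_2$ and, via $\Phi(e_3, e_1 e_3 + e_2)$, match them against those of $e_1 e_3 + e_2$; the crux is that a successor $h(e_1 * e_2)$ of $e_1*e_2$ must be matched with a successor of the shape $h\,e_3$, and the self-loop $e_1*e_2 \trans{a} e_1*e_2$ with a transition of $e_3$ into something $\equiv$-related to $e_3$. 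Here one invokes the \emph{conclusion} $e_3 \equiv e_1*e_2$ of the very (RSP) instance under consideration -- a perfectly legitimate element of $\equiv$ -- together with a congruence step rewriting $h\,e_3$ into $h(e_1*e_2)$. I expect the only genuine difficulty to be keeping this self-referential appeal to the conclusion of (RSP) straight; once one notices that $\Phi$ only asks residuals to be related by $\equiv$, not to be syntactically equal, all four star transitions line up.
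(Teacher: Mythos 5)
Your proposal is correct and matches the paper's own proof essentially step for step: an induction on the derivation of \(e \equiv f\), verifying the transfer property axiom by axiom, composing it through transitivity, and, in the (RSP) case, invoking the conclusion \(e_3 \equiv e_1 * e_2\) of that very rule instance to relate the residuals \(h\,e_3\) and \(h(e_1*e_2)\) — exactly the move the paper makes. If anything you are slightly more explicit than the paper in spelling out the congruence-closure cases of the induction, which the paper leaves implicit.
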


	The role that bisimulation equivalences play in coalgebra is analogous to the role that congruences play in algebra.
	The kernel \(\ker(h) = \{(x, x') \in X \times X \mid h(x) = h(x')\}\) of any coalgebra homomorphism \(h\) is a bisimulation equivalence,\footnote{Actually, this is only true if \(G\) preserves weak pullbacks. This is a common assumption, however, and holds for each of the concrete functors we consider here.} and every bisimulation equivalence \(R\) is the kernel of some coalgebra homomorphism \(X \to X/R\)~\cite{rutten2000universal}.
	By Theorem \ref{thm:strong soundness}, the set \(\SExp/{\equiv}\) of star expressions modulo provable equivalence is itself a \(P\)-coalgebra, and the quotient map \([-]_{\equiv} : \SExp \to \SExp/{\equiv}\) is a coalgebra homomorphism.

	% *** subsection *** %
	\subsection{Linear Systems and Solutions}
	Starting with an expression \(e \in \SExp\), obtaining a prechart \(X\) with a state \(x \in X\) such that \(e \bisim x\) is only a matter of computing \(\langle e \rangle\).
	However, going from a prechart \(X\) and a state \(x \in X\) to an expression \(e \in \SExp\) such that \(e \bisim x\) is more difficult (and in fact, is not always possible).
	The following theorem hints at a method for doing so.

	\begin{restatable}{theorem}{restatefundamentaltheorem}\label{thm:fundamental}
	 	Let \(e \in \SExp\).
	 	Then
	 	\(
			e \equiv \sum\limits_{e \Rightarrow a} a + \sum\limits_{e \trans{a} f} af
		\),
		where \(\sum_{i=1}^n e_i = e_1 + \left(\sum_{i = 2}^n e_i\right)\).\footnote{Here, the generalised sum on the left is well-defined up to the commutativity and associativity of \(+\) assumed in Figure \ref{fig:the_axioms_found_in_cite_grabmayerfokkink2020complete}.}
	\end{restatable}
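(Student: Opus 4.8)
The plan is to prove the identity
\(
  e \equiv \sum_{e \Rightarrow a} a + \sum_{e \trans{a} f} af
\)
by structural induction on \(e \in \SExp\), reading off the transitions and outputs from the operational rules in \autoref{fig:prechart_structure_of_sexp} and matching them against the axioms in \autoref{fig:the_axioms_found_in_cite_grabmayerfokkink2020complete}. The base cases are \(e = 0\), where there are no transitions and no outputs so the right-hand side is the empty sum, which we take to be \(0\), giving \(0 \equiv 0\); and \(e = a\), where the only output is \(a \Rightarrow a\) and there are no transitions, so the right-hand side is \(a\).

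For the inductive step I would treat each of the three binary constructors in turn, in each case using the induction hypotheses for the immediate subexpressions and then the \acro{BKS}/distributivity axioms to reorganise the resulting sum.
\textbf{Sum.} If \(e = e_1 + e_2\), the rules say \(e \Rightarrow a\) iff \(e_1 \Rightarrow a\) or \(e_2 \Rightarrow a\), and \(e \trans{a} f\) iff \(e_1 \trans{a} f\) or \(e_2 \trans{a} f\); so the right-hand side for \(e\) is (provably, using associativity, commutativity, and idempotence of \(+\), i.e. (B1)--(B3), to merge the two lists and delete duplicates) the sum of the right-hand sides for \(e_1\) and \(e_2\), which by the induction hypotheses equals \(e_1 + e_2\).
\textbf{Product.} If \(e = e_1 e_2\), the transitions are \(e \trans{a} e_2\) when \(e_1 \Rightarrow a\) and \(e \trans{a} f e_2\) when \(e_1 \trans{a} f\), and there are no outputs. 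So the target is \(\sum_{e_1 \Rightarrow a} a e_2 + \sum_{e_1 \trans{a} f} (f e_2) e_2\); wait — more carefully, it is \(\sum_{e_1 \Rightarrow a} a e_2 + \sum_{e_1 \trans{a} f}(f e_2)\). Using (B5) to rebracket \((fe_2)\) as needed, (B4) to pull the common right factor \(e_2\) out of the whole sum, and (B7) to absorb a possible spurious \(0 e_2\), this becomes \(\big(\sum_{e_1 \Rightarrow a} a + \sum_{e_1 \trans{a} f} f\big) e_2\), which by the induction hypothesis for \(e_1\) is \(e_1 e_2\).
\textbf{Star.} If \(e = e_1 * e_2\), the rules give outputs \(e \Rightarrow a\) whenever \(e_2 \Rightarrow a\), and transitions \(e \trans{a} f\) whenever \(e_2 \trans{a} f\), \(e \trans{a} f(e_1*e_2)\) whenever \(e_1 \trans{a} f\), and \(e \trans{a} e_1*e_2\) whenever \(e_1 \Rightarrow a\). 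The target sum is therefore
\(
  \sum_{e_2 \Rightarrow a} a + \sum_{e_2 \trans{a} f} f + \sum_{e_1 \Rightarrow a} a(e_1*e_2) + \sum_{e_1 \trans{a} f} f(e_1*e_2).
\)
Grouping the last two families and factoring out \((e_1 * e_2)\) on the right using (B4), (B5) and (B7) gives \(\big(\sum_{e_1 \Rightarrow a} a + \sum_{e_1 \trans{a} f} f\big)(e_1 * e_2)\), which by the induction hypothesis for \(e_1\) equals \(e_1(e_1 * e_2)\); the first two families are exactly the right-hand side for \(e_2\), which equals \(e_2\). So the target is \(e_1(e_1 * e_2) + e_2\), and this is provably equal to \(e_1 * e_2\) by (BKS1).

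The routine-but-delicate part is bookkeeping for the generalised sum: since \(\sum\) is only defined up to (B1)--(B2) and the lists of successors may contain repeated or reordered entries (and possibly be empty, requiring the convention that the empty sum is \(0\) together with (B6) to add or remove \(0\) summands), one has to be slightly careful that all the reassociations and the factor-pulling steps are legitimate applications of the congruence \(\equiv\). This is where I expect the only real friction; conceptually each case is just "read off the rules, then apply one \acro{BKS} axiom." It is worth noting that (BKS2) and (RSP) are not needed here — only (BKS1) among the star axioms — and that (B5)/(B4)/(B7) do all the work in the product and star cases, with finite branching of the chart ensuring every sum in sight is finite.
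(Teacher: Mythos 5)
Your proof is correct and follows essentially the same route as the paper's: structural induction on \(e\), with (B6) handling the base cases, (B1)--(B3) for sums, (B4)--(B5) distributivity for products, and (BKS1) plus distributivity for the star case. The only blemish is a recurring transcription slip in which the action prefix is dropped from the transition sums (e.g.\ \(\sum_{e_1 \trans{a} f} f\) should read \(\sum_{e_1 \trans{a} f} af\) for the induction hypothesis for \(e_1\) to apply, and likewise in the product and star targets); the surrounding reasoning makes clear this is notational rather than conceptual.
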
 
	
	% It can be shown that the chart interpretation of a star expression is always finite, so we will only consider finite precharts.
	Given a finite prechart \((X, \langle o, \partial\rangle)\), its corresponding \emph{linear system} is the set of equations
	\begin{equation}\label{eq:typical linear system}
		x = \sum_{x \Rightarrow a} a + \sum_{x \trans{a} x'} a x'	
	\end{equation}
	indexed by \(X\), where we are thinking of each \(x \in X\) as an indeterminate.
	A \emph{solution} to the linear system associated with \(X\) is a map \(s : X \to \SExp\) such that 
	\begin{equation}\label{eq:solution form1}
		s(x) \equiv \sum_{x \Rightarrow a} a + \sum_{x \trans{a} x'} a~s(x')
	\end{equation}
	for all \(x \in X\). 
	Composing a solution \(s\) with the homomorphic image homomorphism \([-]_{\equiv} : \SExp \to \SExp/{\equiv}\), \eqref{eq:solution form1} becomes the equation
	\begin{equation}\label{eq:solution composed}
		[s(x)]_{\equiv} = \sum_{x \Rightarrow a} a + \sum_{x \trans{a} x'} a~[s(x')]_{\equiv}
	\end{equation}
	It follows from \eqref{eq:solution composed} that \(\{(x, [s(x)]_{\equiv}) \mid x \in X\}\) is a bisimulation between \(X\) and \(\SExp/{\equiv}\).
	Since this is the graph of the map \([-]_{\equiv} \circ s\), if \(s : X \to \SExp\) is a solution, then \([-]_{\equiv} \circ s : X \to \SExp/{\equiv}\) is a coalgebra homomorphism.
	Conversely, if \([-]_{\equiv}\circ s\) is a homomorphism, then \eqref{eq:solution composed} holds.
	As \eqref{eq:solution form1} and \eqref{eq:solution composed} are equivalent, we obtain:

	\begin{restatable}{lemma}{restatesolutionlemma}\label{lem:solution lemma}
		A map \(s : X \to \SExp\) is a solution iff \([-]_{\equiv} \circ s : X \to \SExp/{\equiv}\) is a coalgebra homomorphism.
	\end{restatable}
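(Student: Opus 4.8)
The plan is to deduce the statement from the Fundamental Theorem (Theorem~\ref{thm:fundamental}) together with the already-recorded fact that a map is a coalgebra homomorphism exactly when its graph is a bisimulation. First, since \([-]_{\equiv}\) is a congruence, it distributes over the generalised sums, and \(e \equiv f\) iff \([e]_{\equiv} = [f]_{\equiv}\); hence \eqref{eq:solution form1} holds for a given \(x\) if and only if \eqref{eq:solution composed} does. So it suffices to prove that \eqref{eq:solution composed} holds for all \(x\) iff \(\operatorname{Gr}([-]_{\equiv}\circ s) = \{(x,[s(x)]_{\equiv}) \mid x \in X\}\) is a bisimulation between \(X\) and \(\SExp/{\equiv}\).

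For the forward direction, I would first compute, for a fixed \(x\), the one-step behaviour in \(\SExp/{\equiv}\) of the class \([t_x]_{\equiv}\), where \(t_x\) is the right-hand side \(\sum_{x \Rightarrow a} a + \sum_{x \trans a x'} a\,s(x')\) of \eqref{eq:solution composed}. Reading off the rules of Figure~\ref{fig:prechart_structure_of_sexp}: an atom \(a\in A\) only outputs \(a\) and has no outgoing transition; a product \(a\,s(x')\) has no output and has exactly the transition \(a\,s(x') \trans a s(x')\); and \(+\) acts as union of outputs and transitions. Since \([-]_{\equiv}\) is a homomorphism, it follows (via Lemma~\ref{lem:functional bisimulations}) that \([t_x]_{\equiv} \Rightarrow a\) iff \(x \Rightarrow a\), and that \([t_x]_{\equiv}\trans a [g]_{\equiv}\) iff \([g]_{\equiv} = [s(x')]_{\equiv}\) for some \(x'\) with \(x \trans a x'\). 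If \eqref{eq:solution composed} holds, then \([s(x)]_{\equiv} = [t_x]_{\equiv}\), and the three conditions characterising bisimulations (spelled out after Lemma~\ref{lem:functional bisimulations}) for the relation \(\{(x,[s(x)]_{\equiv})\mid x\in X\}\) are immediate from this computation; being the graph of \([-]_{\equiv}\circ s\), that map is then a homomorphism.

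For the converse, suppose \([-]_{\equiv}\circ s\) is a coalgebra homomorphism. Applying Theorem~\ref{thm:fundamental} to the expression \(s(x)\) gives \(s(x) \equiv \sum_{s(x)\Rightarrow a} a + \sum_{s(x)\trans a f} a f\). Using Lemma~\ref{lem:functional bisimulations} for both \([-]_{\equiv}\circ s\) and \([-]_{\equiv}\), one obtains \(s(x)\Rightarrow a \iff x \Rightarrow a\) and \(\{[f]_{\equiv}\mid s(x)\trans a f\} = \{[s(x')]_{\equiv}\mid x\trans a x'\}\). Since, by (B1)–(B3), a generalised \(+\)-sum modulo \(\equiv\) is well-defined and depends only on the set of its summands modulo \(\equiv\), and since \([-]_{\equiv}\) is a congruence, the first sum above rewrites to \(\sum_{x\Rightarrow a} a\) and the second to \(\sum_{x\trans a x'} a\,s(x')\) modulo \(\equiv\); substituting yields precisely \eqref{eq:solution form1}, so \(s\) is a solution.

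I expect the only real friction to be the bookkeeping around generalised sums: making precise that, modulo the commutativity, associativity, and idempotence of \(+\), a sum indexed by a finite set is well-defined and is determined by its set of summands taken modulo \(\equiv\), so that matching \emph{sets} of summands (rather than matching indices) suffices in both directions. This is routine given that charts are finitely branching, but it is the one spot where a little care is needed; everything else is unwinding the definitions of solution, homomorphism, and the coalgebra structure on \(\SExp/{\equiv}\) once the Fundamental Theorem is in hand.
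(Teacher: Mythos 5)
Your proposal is correct and follows essentially the same route as the paper's proof: both reduce the statement to the graph of \([-]_{\equiv}\circ s\) being a bisimulation, use Theorem~\ref{thm:fundamental} together with the SOS rules to read off the one-step behaviour of the right-hand sum, invoke soundness (via \([-]_{\equiv}\) being a homomorphism) to transfer behaviour across \(\equiv\), and handle the matching of summand sets with (B1)--(B3). Your write-up is merely a bit more explicit about the ACI bookkeeping, which the paper compresses into a one-line appeal to (B3).
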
 

	\noindent We often identify solutions with their corresponding homomorphisms into \(\SExp/{\equiv}\).
	%From now on, we generally say \emph{solution} to mean a coalgebra homomorphism into \(\SExp/{\equiv}\).%, as this allows us to interpret the phrase \emph{unique solution} literally. 

% *** *** SECTION *** *** % = = = = = = = = = = = = = = = = = = = = = =
\section{A Local Approach}\label{sec:coalgebras and completeness}
	In the previous section, we observed that the axioms in \autoref{fig:the_axioms_found_in_cite_grabmayerfokkink2020complete} are sound with respect to bisimilarity, and that \emph{solutions} from \cite{grabmayerfokkink2020complete} coincide with coalgebra homomorphisms into \(\SExp/{\equiv}\).
	In \emph{loc. cit.}, Grabmayer and Fokkink show that the axiomatisation in \autoref{fig:the_axioms_found_in_cite_grabmayerfokkink2020complete} is \emph{complete} with respect to bisimilarity: that \(e \equiv f\) whenever \(e \bisim f\), for any \(e,f \in \SExp\).
	Next, we give an abstract description of Grabmayer and Fokkink's approach to proving soundness and completeness, which we call the \emph{local approach}, and compare it to an approach found in classical automata theory.
	Grabmayer and Fokkink's approach can essentially be organized into four steps.
	\begin{description}
		\item[Step 1] is to show that the provable equivalence relation \(\equiv\) is a bisimulation equivalence.
		This is the content of Theorem \ref{thm:strong soundness} from \autoref{sec:coalgebras and 1-free star expressions}, and establishes soundness.
		%  prove a version of 
		% This is used to pull solutions back along homomorphisms.
		\item[Step 2] is to identify a class \(\mathcal C\) of precharts such that for any \(e \in \SExp\), \(\langle e \rangle \in \mathcal C\).
		 % there is a prechart \(X \in \mathcal C\) and an \(x \in X\) such that \(e \bisim x\).
		\item[Step 3] is to show that for any \(X \in \mathcal C\), there is a unique homomorphism \(X \to \SExp/{\equiv}\).
		By Lemma \ref{lem:solution lemma}, homomorphisms into \(\SExp/{\equiv}\) are identifiable with solutions, so this is the same as saying that precharts in \(\mathcal C\) admit unique solutions.
		\item[Step 4] is to show that \(\mathcal C\) is closed under binary coproducts and bisimulation collapses. 
		That is, for any \(X,Y \in \mathcal C\), we find \(X\sqcup Y \in \mathcal C\) and \(X/{\bisim} \in \mathcal C\) as well.
	\end{description}
	It should be noted that Grabmayer and Fokkink never explicitly show that their class \(\mathcal C\) is closed under binary coproducts, due to their focus being on charts, which do not have this property.
	Thus, the four steps above are a coalgebraic rephrasing of their approach that requires the introduction of coproducts.
	However, the coalgebraic analogue of Grabmayer and Fokkink's distinguished class of models is easily seen to be closed under binary coproducts, as we will see in \autoref{sec:layered loop existence and elimination}.

	The four steps above are sufficient for showing soundness and completeness of an axiomatisation of bisimilarity in general.
	In fact, we can even replace step 4 with a weaker version: 
	\begin{description}
		\item[Step 4] is to show that \(\mathcal C\) is \emph{collapsible}, ie. for any \(X,Y \in \mathcal C\) and any \(x \in X\) and \(y \in Y\) such that \(x \bisim y\), there is a \(Z \in \mathcal C\) and a pair of homomorphisms \(p : X \to Z\) and \(q : Y \to Z\) such that \(p(x) = q(y)\).
	\end{description}
	Steps 1-4 constitute the \emph{local approach}, leading to soundness and completeness via the following theorem.

	\begin{restatable}{theorem}{restatecompletenessstrategy}\label{thm:completeness strategy}
		% Suppose \(G\) is an endofunctor on \(\Sets\), for which a final coalgebra exists and bisimilarity coincides with behavioural equivalence.
		Let \(\equiv\) be a bisimulation equivalence on a fixed \(G\)-coalgebra \(E\), and \(\mathcal C\) be a collapsible class of \(G\)-coalgebras containing \(\langle e\rangle\) for each \(e \in E\).
		If there is exactly one homomorphism \(X \to E/{\equiv}\) for every \(G\)-coalgebra \(X \in \mathcal C\), then \(e\equiv f\) if and only if \(e \bisim f\) for any \(e,f \in E\).
	\end{restatable}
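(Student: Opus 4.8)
The plan is to prove the two implications separately. The ``if'' direction is soundness and is immediate from the hypothesis that \(\equiv\) is a bisimulation equivalence on \(E\): if \(e \equiv f\) then \(\equiv\) is itself a bisimulation relating \(e\) and \(f\), so \(e \bisim f\). All of the work is in the ``only if'' direction, i.e.\ completeness.

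So suppose \(e \bisim f\). First I would pass from \(E\) to the generated subcoalgebras \(\langle e\rangle\) and \(\langle f\rangle\), which lie in \(\mathcal C\) by hypothesis. Given a bisimulation \(R \subseteq E \times E\) with \((e,f)\in R\), the relation \(R \cap (\langle e\rangle\times\langle f\rangle)\), with the restricted coalgebra structure, is again a bisimulation — now relating \(e\) and \(f\) as states of \(\langle e\rangle\) and \(\langle f\rangle\) — because subcoalgebras are closed under successors, so every successor witnessing a transfer condition for \(R\) at a pair in \(\langle e\rangle\times\langle f\rangle\) again lands in \(\langle e\rangle\times\langle f\rangle\). (Alternatively, one may read \(e\bisim f\) as behavioural equivalence: there is a homomorphism \(h : E \to Z_0\) with \(h(e)=h(f)\), and precomposing with the inclusions \(\langle e\rangle \hookrightarrow E\) and \(\langle f\rangle\hookrightarrow E\) witnesses behavioural equivalence, hence bisimilarity, of \(e\) and \(f\) across \(\langle e\rangle\) and \(\langle f\rangle\).) Now collapsibility applies and yields a coalgebra \(Z \in \mathcal C\) and homomorphisms \(p : \langle e\rangle \to Z\) and \(q : \langle f\rangle \to Z\) with \(p(e) = q(f)\).

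It remains to conclude \(e \equiv f\) by a diagram chase through \(E/{\equiv}\). By hypothesis there is a homomorphism \(u : Z \to E/{\equiv}\), so \(u\circ p : \langle e\rangle \to E/{\equiv}\) and \(u\circ q : \langle f\rangle \to E/{\equiv}\) are homomorphisms. On the other hand, since \(\equiv\) is a bisimulation equivalence the quotient map \([-]_\equiv : E \to E/{\equiv}\) is a homomorphism, so composing it with the subcoalgebra inclusions gives homomorphisms \(\langle e\rangle \to E/{\equiv}\) and \(\langle f\rangle \to E/{\equiv}\) sending the start states to \([e]_\equiv\) and \([f]_\equiv\). Because \(\langle e\rangle, \langle f\rangle \in \mathcal C\) each admit \emph{exactly one} homomorphism into \(E/{\equiv}\), these must agree with \(u\circ p\) and \(u\circ q\) respectively; evaluating at the start states gives \(u(p(e)) = [e]_\equiv\) and \(u(q(f)) = [f]_\equiv\). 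Since \(p(e) = q(f)\), we obtain \([e]_\equiv = [f]_\equiv\), that is, \(e \equiv f\).

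The proof is a short diagram chase, and I expect the only mildly delicate point to be the opening reduction to the generated subcoalgebras, together with the observation that \([-]_\equiv\) precomposed with subcoalgebra inclusions supplies the ``reference'' homomorphisms against which \(u\circ p\) and \(u\circ q\) are matched; uniqueness of homomorphisms into \(E/{\equiv}\) then does the rest. The real difficulty of the method lies not in this theorem at all but in establishing its hypotheses for a concrete endofunctor \(G\) and class \(\mathcal C\) — that \(\equiv\) is a bisimulation equivalence, that \(\mathcal C\) contains every \(\langle e\rangle\), that \(\mathcal C\) is collapsible, and that coalgebras in \(\mathcal C\) have unique homomorphisms into \(E/{\equiv}\) — which is exactly the content of Steps~1--4 and the remainder of the paper.
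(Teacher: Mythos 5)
Your proof is correct and takes essentially the same route as the paper's: pass to the generated subcoalgebras $\langle e\rangle,\langle f\rangle\in\mathcal C$, apply collapsibility to get $Z$, $p$, $q$ with $p(e)=q(f)$, and use uniqueness of homomorphisms into $E/{\equiv}$ to force $u\circ p$ and $u\circ q$ to coincide with $[-]_\equiv$ restricted to the subcoalgebras. Your extra care in checking that bisimilarity survives restriction to the generated subcoalgebras is a point the paper silently elides (it holds under the standing weak-pullback-preservation assumption), but it does not change the argument.
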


	A class of \(G\)-coalgebras that is closed under binary coproducts and bisimulation collapses is collapsible: If \(X\) and \(Y\) are in the class, and \(x \bisim y\) for some \(x \in X\) and \(y \in Y\), let \(Z = (X \sqcup Y)/\bisim\) and take \(p = [-]_{\bisim} \circ \text{in}_X\) and \(q = [-]_{\bisim} \circ \text{in}_{Y}\).
	Here, \(\text{in}_X : X \into X \sqcup Y\) is the inclusion of \(X\) into the coproduct \(X \sqcup Y\), and similarly for \(\text{in}_Y\), and \([-]_{\bisim} : X\sqcup Y \onto Z\) is the bisimulation collapse homomorphism.
	Because \(x \bisim y\) in \(X \sqcup Y\), \(p(x) = q(y)\), from which collapsibility follows.

	Steps 1 through 3 of the local approach should be familiar to readers acquainted with equational axiomatisations in classical automata theory.
	Some aspects of the soundness and completeness theorems of regular algebra can be seen to trace each of the first three steps above.
	Since two states of a deterministic automaton are bisimilar if and only if they recognize the same language, one could reasonably expect that the local approach to proving soundness and completeness, of any one of the existing axiomatisations of language equivalence for regular expressions, should be successful.

	Kleene proved in \cite{kleene1951representation} that a language is regular if and only if it is recognized by a state in a deterministic finite automaton (or DFA).
	This motivates choosing DFAs as the distinguished class of coalgebras.
	% However, it was Brzozowski that gave meaning to bisimilarity for regular expressions in \cite{brzozowski1964derivatives}, where he introduced his operational interpretation of regular expressions.
	This trivializes step 4, as finiteness is preserved under binary coproducts and bisimulation collapses.
	Thus, the central difficulties surpassed in the first completeness proofs of regular algebra lay in step 3~\cite{salomaa1966two,kozen1991completeness}.
	
	Although all four steps had been taken, neither of the completeness proofs in~\cite{salomaa1966two,kozen1991completeness} conclude with an observation like Theorem \ref{thm:completeness strategy}.
	Instead, bisimulations between DFAs are treated as DFAs, and solutions are pulled back across projection homomorphisms.
	As Grabmayer and Fokkink point out in \cite{grabmayerfokkink2020complete}, this use of bisimulations does not translate to the case of \(1\)-free regular expressions.
	This is due to the fact that the distinguished class \(\mathcal C\), consisting of the precharts for which they could prove the existence and uniqueness of solutions, does not include every bisimulation between precharts in \(\mathcal C\).
	This is where the need for collapsibility becomes apparent.

	Comparing the difficulties in Salomaa's approach with the difficulties in Grabmayer and Fokkink's approach reveals a crucial aspect of discovering soundness and completeness theorems in general: When choosing a distinguished class of models \(\mathcal C\), there is a balance to be kept between the difficulty of finding solutions to models in \(\mathcal C\) and proving their uniqueness on the one hand, and ensuring desirable structural qualities of \(\mathcal C\) on the other.
	%When choosing a distinguished class of models \(\mathcal C\), there is a careful balance to be kept between the difficulty of exhibiting unique solutions to models in \(\mathcal C\) on the one hand, and ensuring desirable structural qualities of \(\mathcal C\) on the other.
	Salomaa circumvented the difficulties of steps 2 and 4 by including every finite automaton in his distinguished class, but this made step 3 a difficult problem.
	Grabmayer and Fokkink were able to take step 3 and prove uniqueness of solutions for precharts in their distinguished class with relative ease, but step 4 took great ingenuity.

% It was already known to Milner in \cite{milner1984complete} that there are finite precharts that do not admit any solutions up to Milner's axioms at all. 
	
	% The distinguished class of

	% Grabmayer and Fokkink take steps 1 through 3 in \cite{grabmayerfokkink2020complete} with relative ease, while a great deal of ingenuity was needed to complete step 4.

	% We then illustrate how the local approach was taken in \cite{grabmayerfokkink2020complete} to proving the soundness and completeness of Milner's axioms for the \(1\)-free fragment of regular expressions.

% *** *** SECTION *** *** % = = = = = = = = = = = = = = = = = = = = = =
\section{Layered Loop Existence and Elimination}\label{sec:layered loop existence and elimination}
	Grabmayer and Fokkink prove that Milner's axioms are complete with respect to bisimilarity for the \(1\)-free fragment by modelling star expressions with charts.
	They single out a specific class of charts, namely those satisfying their \emph{layered loop existence and elimination property}, or \emph{\acro{LLEE}-property} for short.
	Roughly, a prechart is said to satisfy the \acro{LLEE}-property if there is a labelling of its transitions by natural numbers such that an edge descending into a loop accompanies a descent in natural number labellings, and such that no successful termination can occur mid-loop.
	The existence of such a labelling ensures that loops are never mutually nested, and requires threads to finish every task in a loop before termination.
	Every chart interpretation of a star expression has the \acro{LLEE}-property, and every prechart with the \acro{LLEE}-property admits a unique solution.

	In this section, we discuss a coalgebraic version of Grabmayer and Fokkink's distinguished class of models, the class of so-called \emph{\acro{LLEE}-precharts}, and review the proof of its collapsibility.
	As it so happens, a slight variation of Grabmayer and Fokkink's proof of collapsibility shows something much stronger: That the class of finite \emph{\acro{LLEE}-precharts} is closed under arbitrary homomorphic images.
	The main tool used in the proof of collapsibility is the \emph{connect-through-to} operation, which preserves bisimilarity while it identifies bisimilar states.
	We generalize Grabmayer and Fokkink's \emph{connect-through-to} operation, and show that it can be used to establish closure under homomorphic images in general.

	% This motivates the conjecture that the class of well-layered transition systems carries the structure of a \emph{covariety}, meaning that it is closed under subcoalgebras, coproducts, and all homomorphic images.
	% If this conjecture is true, then well-layeredness has a precise characterization in terms of tree unravelings.
	% Our characterisation indicates the plausability of using this approach to give a behavioural characterisation of well-layeredness, even though our tree-theoretic description of well-layeredness falls short of a full characterisation.

	% *** subsection *** %
	\subsection{Well-layeredness}
	We give an equivalent but different characterisation of \acro{LLEE}-precharts that makes them easier to describe coalgebraically, and rename the property \emph{well-layeredness}.
	%\footnote{For the sake of one Canadian author, who has too much trouble pronouncing ``\acro{LLEE}''.}
	While we recall all of the necessary details, much of what is covered here can be found in more detail in \cite{grabmayerfokkink2020complete}.

	A simple but interesting observation about well-layeredness is that it makes no reference to the action labels of a prechart.
	In other words, well-layeredness is really a property of \emph{transition systems with output} (\emph{transition system}s for short), coalgebras for the endofunctor \(2 \times \Pfin(-)\).
	
	A well-layered transition system is a transition system that carries a particular labelling, called an \emph{entry/body labelling}, that satisfies a few extra conditions.
	Here, an \emph{entry/body labelling} of a transition system \((X, \langle \underline o, \underline \partial \rangle)\) is a coalgebra \((X, \langle \underline o, \underline \partial^\bullet\rangle)\) for the endofunctor \(2 \times \Pfin(\{\mathsf{e},\mathsf{b}\}\times (-))\) such that \(\underline\partial(x) = \pi_2(\underline\partial^\bullet(x))\) for any \(x \in X\).
	We typically denote an entry/body labelling of a transition system \(\underline X\) with \(X^\bullet\).

	To state the extra conditions on the labellings that define well-layeredness, we need some notation.
	Given an entry/body labelling \(X^\bullet = (X, \langle \underline o, \underline\partial^\bullet\rangle)\), the following glyphs are used to denote its various transition types: For any \(x,y \in X\), \(x \Rightarrow\) means \(\underline o(x) = 1\), \(x \eo y\) means  \((\mathsf e, y) \in \underline\partial^\bullet(x)\), and \(x \bo y\) means \((\mathsf b, y) \in \underline\partial^\bullet(x)\).
	Furthermore, \(x \diredge y\) means
	\[
			(\exists v_1,\dots,v_k)~ x \eo v_1 \bo \cdots \bo v_k \bo y, \ x \nin \{v_1, \dots, v_l, y\}
	\]
	and \(y \loopright x\) means
	\[
		(\exists v_1,\dots,v_k)~ x \eo v_1 \bo \cdots \bo v_k \bo x, \ y \in \{v_1, \dots, v_l\}, \ x \nin\{v_1, \dots, v_l\}.
	\]
	Transitions of the form \(x\eo y\) and \(x\bo y\) are called \emph{entry} and \emph{body} transitions, respectively.
	We enclose a relation in \((-)^+\) or \((-)^*\) to denote its transitive or transitive-reflexive closure respectively.

	\begin{definition}\label{def:layering witness}
		A \emph{layering witness} is an entry/body labelling \(X^\bullet\) that is
		\begin{enumerate}
			\item \emph{locally finite}, meaning that \(\langle x\rangle\) is finite for all \(x \in X\);
			\item \emph{flat}, meaning that \(x \eo y\) implies \(\neg (x \bo y)\) for all \(x,y \in X\);
			\item \emph{fully specified}, meaning that for all \(x,y \in X\),
			\begin{itemize}
			 	\item[(a)] \(\neg (x \bo^+ x)\) and
			 	\item[(b)] if \(x \eo y\) for some \(y \neq x\), then \(y \to^+ x\).
			 \end{itemize}
			\item \emph{layered}, meaning that the directed graph \((X, \diredge)\) is acyclic; and
			\item \emph{goto-free}, meaning that \(x \diredge y\) implies \(\neg(y \Rightarrow)\), for all \(x,y\in X\).
		\end{enumerate}	
		A transition system is \emph{well-layered} if it is the underlying transition system of a layering witness. 
	\end{definition}
	Every prechart \((X, \langle o, \partial\rangle)\) also comes with an underlying transition system
	\(\underline X = (X, \langle \underline{o}, \underline{\partial} \rangle)\), given by
%	Every prechart also comes with an underlying transition system.
%	The underlying transition system of a prechart \((X, \langle o, \partial\rangle)\) is \(\underline X = (X, \langle \underline{o}, \underline{\partial} \rangle)\), where
	\[
		\underline{o} = \bigvee_{a \in A} o(a)
		\qquad\qquad
		\underline{\partial}(x) = \bigcup_{a \in A} \partial(x)(a)
	\]
	for any \(x \in X\).
	A layering witness \emph{for a prechart} is a layering witness for its underlying transition system, and a prechart is said to be \emph{well-layered} if it has a layering witness.

	\begin{remark}
		Every bisimulation \(R\) between precharts \(X\) and \(Y\) carries an underlying bisimulation \(\underline R\) between the transition systems \(\underline X\) and \(\underline Y\).
		However, not every bisimulation between \(\underline X\) and \(\underline Y\) lifts to a bisimulation between \(X\) and \(Y\): Such relations ignore action labels in general, while bisimulations between precharts do not.
	\end{remark}

	\begin{remark}\label{rem:llee from well-layeredness}
		It can be checked that the underlying transition system of a locally finite prechart \(X\) is well-layered if and only if \(X\) has an \emph{\acro{LLEE}-witness}~\cite{grabmayerfokkink2020complete}.
		To obtain an \acro{LLEE}-witness from a layering witness, replace each \(x\eo y\) with a weighted transition \(x \trans{[|x|_{en}]} y\), where\[
			|x|_{en} = \max\{m \in \N \mid \text{\((\exists x_1,\dots,x_m)~x \diredge x_1 \diredge \cdots \diredge x_m\) s.t. \(x \neq x_i \neq x_j\) for \(i \neq j\)}\}
		\] 
		and each \(x \bo y\) with \(x \trans{[0]} y\). 
		This is a well-defined translation because we have assumed that \(\langle x \rangle\) is finite and \((\langle x \rangle, \diredge)\) is acyclic.
		To obtain a layering witness from an \acro{LLEE}-witness, replace each \(x \trans{a}_{[n]} y\) by \(x \eo y\) if \(n > 0\) and \(y \to^+ x\), or by \(x \bo y\) otherwise.
		This entry/body labelling is flat because every resulting entry transition appears in a minimal cycle, and every minimal cycle contains precisely one entry transition by (W1) and (W2)(b) from \cite{grabmayerfokkink2020complete}.
		Each of the remaining conditions are by construction, or are a direct consequence of the \acro{LLEE}-witness conditions.
		For example, full specification follows from (a) local finiteness and (W1) in \emph{loc. cit.}, and (b) our assumption that \(x \eo y\) implies \(y \to^+ x\) for all \(x,y\)~ (see \cite[Proposition C.1]{schmid2021star}).
	\end{remark}

	By restricting a layering witnesses \(X^\bullet\) to a subcoalgebra \(U\) of \(X\), one obtains a layering witness \(U^\bullet\) for \(U\).
	It follows from this observation and the lemma below that \(\langle e \rangle\) is well-layered for any \(1\)-free star expression \(e\).

	\begin{restatable}{lemma}{restatestarexpressionwelllayered}\label{lem:SExp is well-layered}
		The prechart \(\SExp\) is well-layered.%, and so is \(\langle e \rangle\) for any \(e \in \SExp\). 
	\end{restatable}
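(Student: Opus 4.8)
The plan is to produce an explicit layering witness $\SExp^\bullet$ on the underlying transition system of $\SExp$ and to check the five conditions of \autoref{def:layering witness}. Call an expression $f$ \emph{live} if $f \to^* h$ for some $h$ with $h \Rightarrow$. I would declare a transition of $\SExp$ to be an \emph{entry} transition $g \eo g'$ in precisely the following three cases, and a \emph{body} transition $g \bo g'$ whenever $g \to g'$ and $\neg(g \eo g')$:
\begin{enumerate}
  \item $g = e_1 * e_2$ and $g' = e_1 * e_2$, with the transition arising from $e_1 \Rightarrow a$;
  \item $g = e_1 * e_2$ and $g' = f(e_1 * e_2)$ with $e_1 \trans{a} f$ \emph{and $f$ live};
  \item $g = g_1 g_2$ and $g' = f g_2$ with $g_1 \eo f$.
\end{enumerate}
This is a well-founded (structural) definition, and since $g \eo g'$ and $g \bo g'$ are mutually exclusive by construction, flatness (W2) holds automatically; one also checks that every instance of (1)--(3) is an actual transition of $\SExp$ (using that entry transitions are transitions), so $\pi_2 \circ \underline\partial^\bullet = \underline\partial$ and $\SExp^\bullet$ is a genuine entry/body labelling. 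Note that every transition out of a sum $e_1 + e_2$, and every transition $e_1 * e_2 \trans{a} f$ produced from $e_2 \trans{a} f$, is a body transition; this is deliberate, for the reason discussed below.

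The conditions (W1), (W3), (W4), (W5) I would verify by structural induction on $g$, using the standard derivative decompositions $\langle e_1 + e_2 \rangle \subseteq \{e_1 + e_2\} \cup \langle e_1 \rangle \cup \langle e_2 \rangle$, $\langle g_1 g_2 \rangle \subseteq \{h g_2 : h \in \langle g_1 \rangle\} \cup \langle g_2 \rangle$, and $\langle e_1 * e_2 \rangle \subseteq \{e_1 * e_2\} \cup \{h(e_1 * e_2) : h \in \langle e_1 \rangle\} \cup \langle e_2 \rangle$ (which simultaneously give (W1), local finiteness). Because the labelling of the out-transitions of a state depends only on that state, restricting $\SExp^\bullet$ to a subcoalgebra reproduces the ``same'' witness, and since every violation of (W2)--(W5) is confined to some $\langle x \rangle$, it suffices to argue on each chart. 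For (W3)(b) the liveness condition in case (2) is exactly what is needed: if $e_1 \trans{a} f$ with $f \to^* h \Rightarrow$, then $f(e_1 * e_2) \to^* h(e_1 * e_2) \to e_1 * e_2$, so the entry genuinely re-enters its source, while case (3) reduces to the inductive hypothesis for $g_1$, transported along $(-)g_2$. For (W5), goto-freeness, one uses that \emph{no concatenation carries an output in $\SExp$} (there is no output rule for $ef$ in \autoref{fig:prechart_structure_of_sexp}): every state on a body path issuing from an entry transition has the form $h(e_1 * e_2)$ or $f g_2$, hence no output. For (W3)(a) and (W4) one shows that in each decomposition the body subgraph (resp.\ the $\diredge$-graph) splits into the part coming from $\langle e_1 \rangle$ or $\langle g_1 \rangle$, the part coming from $\langle e_2 \rangle$ or $\langle g_2 \rangle$, and, for the star, the head $e_1 * e_2$ together with the ``dead'' copies $\{f(e_1 * e_2) : f \text{ not live}\}$; each part is internally acyclic by induction, the $\langle e_2 \rangle$-part and the dead part are closed under body transitions and never return to the head, and the inter-part edges form a DAG, so no cycle survives.

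The main obstacle is getting the labelling right and then pushing through (W3)(a) and (W4). The delicate point is that the naive choice — marking \emph{every} star unfolding $e_1 * e_2 \trans{a} f(e_1 * e_2)$ as an entry transition, and propagating entry transitions through $+$ as well — violates (W3)(b): for example $(a0) * e_2 \trans{a} 0(a0 * e_2)$ (with $0$ a deadlock) and $((a * b) + 0) \trans{a} a * b$ are star-derived transitions whose targets never return to their source. The liveness side condition, and the decision to keep sum-transitions and ``exit-the-star'' transitions as body transitions, are precisely the repair, and verifying that this repaired labelling still satisfies the acyclicity conditions — where the non-nesting of loops in $1$-free star expressions really enters — is the technical core, mirroring the combinatorics of \cite{grabmayerfokkink2020complete}. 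Alternatively, and less explicitly, one could take the canonical \acro{LLEE}-witnesses on the charts $\langle e \rangle$ supplied by \emph{loc.\ cit.}, observe that they agree along subchart inclusions and hence glue to a labelling of all of $\SExp$, and translate it into a layering witness via \autoref{rem:llee from well-layeredness}.
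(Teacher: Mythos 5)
Your labelling is the same one the paper constructs: sums and star-exits are body, entry transitions propagate through left factors of products, the output-induced self-loop of \(e_1*e_2\) is an entry, and the unfolding \(e_1*e_2 \to f(e_1*e_2)\) is an entry exactly when the target component \(f\) can reach an output --- your ``liveness'' side condition is precisely the condition the paper's verification of full specification relies on (the rule as printed in the appendix tests \(e_1\) rather than \(f\), but the subsequent argument uses the target-side version, so your formulation is the right one). Where you genuinely diverge is in how the two acyclicity conditions are discharged. The paper proves \(\neg(x \bo^+ x)\) by first establishing a standalone lemma that every minimal cycle in \(\SExp\) has the shape \((e*f)g_1\cdots g_l \to e_1(e*f)g_1\cdots g_l \to \cdots \to (e*f)g_1\cdots g_l\), whose first edge is then an entry by construction; and it proves layeredness by exhibiting an explicit decreasing measure, a ``loop depth'' on entry transitions computed from star height, so that \(\diredge\)-paths strictly descend. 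You instead argue both by a decomposition of \(\langle e_1*e_2\rangle\) and \(\langle g_1g_2\rangle\) into a head, a lifted copy of the live part of \(\langle e_1\rangle\), a dead part, and \(\langle e_2\rangle\). That route can be made to work, but it is exactly where your write-up is thinnest: for layeredness the claim that ``the inter-part edges form a DAG'' is asserted rather than shown (ruling out a \(\diredge\)-cycle between the head \(e_1*e_2\) and a state \(h(e_1*e_2)\) needs the inductive goto-freeness of \(\langle e_1\rangle\)), and your goto-freeness one-liner is not literally true in the product case, since a body path issuing from \(g_1g_2 \eo fg_2\) can step \(hg_2 \bo g_2\) and continue into \(\langle g_2\rangle\) where outputs exist; blocking this again requires invoking the inductive goto-freeness and full specification of \(\langle g_1\rangle\). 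The paper's global measure avoids these case distinctions and is the more robust argument, but your structural version buys a proof that visibly restricts to subcoalgebras, which is the property the surrounding text needs. Your fallback of gluing the \acro{LLEE}-witnesses of the individual charts \(\langle e\rangle\) along subchart inclusions is also viable but requires checking that Grabmayer and Fokkink's witnesses are canonical enough to agree on overlaps, which is not automatic.
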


	This completes Step 2 from \autoref{sec:coalgebras and completeness}: Where \(\mathcal C\) is the set of finite well-layered precharts, we find \(\langle e \rangle \in \mathcal C\) for any \(e \in \SExp\).
	For a concrete example, let \(f = (ab)*(ba)\) and \(e = f*a\), where \(a,b,c \in A\).
	The prechart \(\langle e \rangle\) is depicted below along with a layering witness.
	\begin{equation*}
	\begin{tikzpicture}
			\node[state] (0) {\(e\)};
			\node[state, left=3em of 0] (1) {\(ae\)};
			\node[state, right=3em of 0] (2) {\((bf)e\)};
			\node[above=1em of 0] (3) {\(a\)};
			\draw (0) edge[loop below, below, ->] node{\(c\)} (0);
			\path (0) edge[bend left, below, ->] node{\(b\)} (1);
			\path (1) edge[bend left, above, ->] node{\(a\)} (0);
			\path (0) edge[bend left, above, ->] node{\(a\)} (2);
			\path (2) edge[bend left, below, ->] node{\(b\)} (0);
			\path (0) edge[double, double distance=2pt, -implies]  (3);
		\end{tikzpicture}
		\qquad
		\begin{tikzpicture}
			\node[state] (0) {\(e\)};
			\node[state, left=3em of 0] (1) {\(ae\)};
			\node[state, right=3em of 0] (2) {\((bf)e\)};
			\node[above=1em of 0] (3) {};
			\node[below=2em of 0] (4) {};
			\draw (0) edge[loop below, line width=1pt, ->] (0);
			\path (0) edge[bend left, line width=1pt, ->] (1);
			\path (1) edge[bend left, ->] (0);
			\path (0) edge[bend left, line width=1pt, ->] (2);
			\path (2) edge[bend left, ->] (0);
			\path (0) edge[double, double distance=2pt, -implies]  (3);
		\end{tikzpicture}
	\end{equation*}
	It is important to note that not every well-layered prechart has a unique layering witness.
	The prechart \(\langle (aa)*0\rangle\), for example, has exactly two.\medskip
	
	\subsection{Existence and uniqueness of solutions}
	Steps 1 and 2 consisted of showing that \(\equiv\) is a bisimulation equivalence and \(\langle e \rangle\) is a well-layered prechart for each \(e \in \SExp\).
	To complete step 3 of the local approach, Grabmayer and Fokkink give an explicit description of a solution to a chart \(X = \langle v \rangle\) with layering witness \(X^\bullet\), and show that it is equivalent to any other solution to \(X\). 
	For any \(x \in X\), let 
	\begin{equation}\label{eq:canonical solution}
		s_X(x) \equiv \left(\sum_{x \trans{a}_{\mathsf e} x} a + \sum_{\substack{x\trans{a}_{\mathsf e} y\\x \neq y}} a~t_X(y,x)\right)*
				\left(\sum_{x \Rightarrow a} a + \sum_{x\trans{a}_{\mathsf b} y} a~s_X(y)\right)
	\end{equation}
	where \[
		t_X(x,z) \equiv \left(\sum_{x \trans{a}_{\mathsf e} x} a + \sum_{\substack{x\trans{a}_{\mathsf e} y\\x \neq y}} a~t_X(y,x)\right)*
				\left(\sum_{x \Rightarrow a} a + \sum_{x\trans{a}_{\mathsf b} y} a~t_X(y,z)\right)
	\]
	\vspace{0.1em}\\
	Both functions are well-defined by induction on the pair \((|x|_{en}, |x|_{b})\), where \(|x|_{en}\) is given in Remark \ref{rem:llee from well-layeredness} and \(|x|_{b} = \max\{m \mid (\exists x_1,\dots,x_m)\ x \bo x_1 \bo \cdots \bo x_m\}\), with respect to the lexicographical ordering on \(\N\times \N\).
	
	It is shown in \cite{grabmayerfokkink2020complete} that for any solution \(s : X \to \SExp\), \(s(x) \equiv s_X(x)\) for all \(x \in X\).
	This proves that well-layered \emph{charts} have unique solutions.
	The same result readily extends to the prechart case: If \(X\) is an arbitrary well-layered prechart and \(x \in X\), then \(s_{\langle x \rangle}(x)\) is a well-defined expression, as \(\langle x \rangle\) is a subcoalgebra of \(X\) and is therefore also well-layered.
	By uniqueness of solutions for charts, the map \(s_X : X \to \SExp\) given by \(s_X(x) = s_{\langle x \rangle}(x)\) is a well-defined solution to \(X\).
	Furthermore, since every solution to \(X\) restricts to a solution to \(\langle x \rangle\) for each \(x \in X\), \(s_X\) is the unique solution to \(X\).

	\begin{restatable}{lemma}{restateuniquesolutionslemma}\label{thm:unique solutions lemma}
		If \(X\) is a well-layered prechart, then there is a unique solution \(s_X : X \to \SExp/{\equiv}\) to \(X\). 
	\end{restatable}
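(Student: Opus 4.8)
The plan is to establish the statement first for \emph{charts} $\langle v\rangle$ --- which, as the discussion preceding the lemma indicates, is essentially the main technical result of \cite{grabmayerfokkink2020complete} (with uniqueness of solutions spelled out in \cite{grabmayerfokkink2020extended}) --- and then bootstrap from charts to arbitrary well-layered precharts using the fact that every subcoalgebra of a well-layered prechart is again well-layered. Throughout I would work with genuine solutions $s : X \to \SExp$, passing to $\SExp/{\equiv}$ only at the very end via Lemma~\ref{lem:solution lemma} and the convention identifying a solution $s$ with the homomorphism $[-]_{\equiv}\circ s$.

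\textbf{The chart case.} Fix a well-layered chart $X = \langle v\rangle$ with layering witness $X^\bullet$. First I would check that \eqref{eq:canonical solution} and its companion recursion for $t_X$ genuinely define functions: every recursive call is along an entry transition $x\eo y$ (with $y\neq x$) or a body transition $x\bo y$, and layeredness (acyclicity of $\diredge$) together with full specification~(a) (acyclicity of $\bo$) make the measure $(|x|_{en},|x|_b)$ strictly decrease along such calls in the lexicographic order on $\N\times\N$, so the definitions are by well-founded recursion. Next I would verify that $s_X$ satisfies the defining equation \eqref{eq:solution form1} at each $x$: unfolding the outermost $*$ with (BKS1), distributing with (B4)/(B5), and normalising nested stars with (BKS2), the right-hand side of \eqref{eq:canonical solution} rewrites so that the sub-loop expressions $t_X(y,x)$ compose with $s_X(x)$ to reproduce exactly $\sum_{x\Rightarrow a}a + \sum_{x\trans{a}z}a\,s_X(z)$ over all successors $z$. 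The layering conditions carry this argument: flatness makes the entry/body split of $x$'s outgoing transitions a genuine partition; full specification~(b) guarantees every entry transition $x\eo y$ returns to $x$, so that $t_X(y,x)$ is the correct ``traverse the sub-loop and come back to $x$'' expression; and goto-freeness forbids outputs strictly inside a loop, which is what makes the ``loop part''/``exit part'' separation coherent.

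\textbf{Uniqueness for charts.} Given an arbitrary solution $s : X\to\SExp$, I would show $s(x)\equiv s_X(x)$ by induction on $(|x|_{en},|x|_b)$. Starting from \eqref{eq:solution form1} for $s(x)$, and using (B4)/(B5)/(BKS2) together with the induction hypothesis applied to the strictly smaller states reached by body and non-trivial entry transitions, one rewrites it into the shape $s(x)\equiv g\,s(x)+h$, where $g$ and $h$ are precisely the left and right factors of the $*$ in \eqref{eq:canonical solution}. Then (RSP) yields $s(x)\equiv g*h$, and $g*h\equiv s_X(x)$ by definition. Goto-freeness is exactly what guarantees the rearranged equation has the required $e_3\equiv e_1e_3+e_2$ form (no output term trapped inside the iterated part), and layeredness is what makes the induction well-founded. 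This step is the technical heart and is not simplified by the coalgebraic repackaging, so I would lean on \cite{grabmayerfokkink2020complete,grabmayerfokkink2020extended} for the combinatorial details.

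\textbf{From charts to precharts.} Now let $X$ be an arbitrary well-layered prechart with layering witness $X^\bullet$. For each $x\in X$ the subcoalgebra $\langle x\rangle$ is a chart, and restricting $X^\bullet$ to $\langle x\rangle$ yields a layering witness, so $\langle x\rangle$ is a well-layered chart; put $s_X(x):=s_{\langle x\rangle}(x)$. Since \eqref{eq:solution form1} at $x$ mentions only $x$ and its one-step successors, all of which lie in $\langle x\rangle$, and $s_{\langle x\rangle}$ solves $\langle x\rangle$, the map $s_X$ satisfies \eqref{eq:solution form1} everywhere and hence is a solution to $X$. For uniqueness, any solution $s$ to $X$ restricts to a solution of each $\langle x\rangle$ (again by locality of the equations), so the chart case forces $s(x)\equiv s_{\langle x\rangle}(x)=s_X(x)$; and any homomorphism $h:X\to\SExp/{\equiv}$ lifts along the quotient $[-]_{\equiv}$ to some $s:X\to\SExp$ with $[-]_{\equiv}\circ s=h$, which is a solution by Lemma~\ref{lem:solution lemma}, so $h=[-]_{\equiv}\circ s=[-]_{\equiv}\circ s_X$. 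Thus $s_X:X\to\SExp/{\equiv}$ is the unique solution. The main obstacle is the uniqueness argument for charts --- the (RSP)-rearrangement of the solution equations --- while everything else is bookkeeping once the well-founded order is fixed.
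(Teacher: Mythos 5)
Your proposal is correct and follows essentially the same route as the paper: the canonical solution \eqref{eq:canonical solution} with its companion $t_X$, well-definedness by lexicographic induction on $(|x|_{en},|x|_b)$, verification via (BKS1)/(B4) and the key fact that $s(y)\equiv t_X(y,x)\,s(x)$ for $x\diredge y$, uniqueness via (RSP), and the chart-to-prechart bootstrap through subcoalgebras, all deferring the combinatorial core to Grabmayer and Fokkink exactly as the paper does. The only cosmetic difference is that the paper isolates the ``sub-loop composition'' step as two explicit auxiliary lemmas (one for $s_X$, one for an arbitrary solution $s$), which your sketch folds into the main induction.
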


	% Every finite well-layered prechart can be turned into a well-layered chart by constructing a \emph{cone}: If \(X\) is a finite well-layered prechart, and \(v\) is not in \(X\), then the prechart \(v \rhd X\) with state space \(X\sqcup \{v\}\) and transitions \(v \trans{a} x\) for each \(a \in A\) and \(x \in X\) is also a finite well-layered prechart.
	% Furthermore, \(v \rhd X = \langle v\rangle\), and if \(X^\bullet\) is a layering witness for \(X\), then the entry/body labelling \(v \rhd X^\bullet\) in which \(v \bo x\) for all \(x \in X\) is a layering witness for \(v \rhd X\).
	% For any \(x \in X\), since \(s_X\)  only depends on the states reachable from \(x\), \(s_X(x) = s_{v\rhd X(x)}\).

\subsection{Reroutings and Closure under homomorphic images}
	The crucial step in Grabmayer and Fokkink's proof is step 4 of the local approach, showing that the bisimulation collapse of a finite well-layered chart is also well-layered.
	This is done in a step-by-step procedure that exhaustively identifies bisimilar states.
	In each step, a specially chosen pair \((w_1, w_2)\) of distinct bisimilar states is reduced to the singleton \(w_2\) by \emph{rerouting} all of \(w_1\)'s incoming transitions to \(w_2\) and then deleting \(w_1\).

	Formally, given a prechart \((X, \langle o, \partial\rangle)\) and a pair \((x_1,x_2)\) of distinct states of \(X\), the \emph{connect-\(x_1\)-through-to-\(x_2\) construction} returns the prechart \(X[x_2/x_1] = (X- \{x_1\}, \partial[x_2/x_1])\), where
	\[
	\partial[x_2/x_1](x)(a) = \begin{cases}
			\{x_2\} \cup (\partial(x)(a)- \{x_1\}) & \text{ if \(x_1 \in \partial(x)(a)\),}\\
			\partial(x)(a) & \text{ otherwise}
		\end{cases}
	\]
	The connect-\(x_1\)-through-to-\(x_2\) operation preserves bisimilarity, in the sense that if \(R\) is a bisimulation equivalence on \(X\), then \(R \cap (X \times X-\{x_1\})\) is a bisimulation between \(X\) and \(X[x_2/x_1]\). 
	This has the following consequence: %Suppose \(R\) is a bisimulation equivalence on \(X\).
	If the only pairs of distinct states in \(R\) are \((x_1,x_2)\) and \((x_2,x_1)\), then \(R_1 = R \cap (X\times X-\{x_1\})\) is the graph of a homomorphism between \(X\) and \(X[x_2/x_1]\), and consequently \(X[x_2/x_1] \cong X/R\).  
	Otherwise, \(R|_{X-\{x_1\}} = R \cap (X-\{x_1\})^2\) is a bisimulation equivalence containing a pair of distinct states \((x_3,x_4)\). 
	If \((x_3,x_4)\) and \((x_4,x_3)\) are the only such pairs, then \(R_2 = R|_{X-\{x_1\}} \cap (X-\{x_1\}\times X-\{x_1,x_3\})\) is the graph of a homomorphism \(X[x_2/x_1]\to X[x_2/x_1][x_4/x_3]\), and therefore \(R_1 \fatsemi R_2 = R \cap (X\times X-\{x_1,x_3\})\) is the graph of a homomorphism \(X \to X[x_2/x_1][x_4/x_3]\), where \(\fatsemi\) denotes relational composition, and \(X[x_2/x_1][x_4/x_3]\cong X/R\).
	Generally, if \(X\) is finite, then iterating this construction yields the graph \(R_1\fatsemi \cdots \fatsemi R_m\) (for some \(m\)) of the homomorphism \(X \to X/R\) (up to \(\cong\)). 
	%This means that the connect-through-to operation takes bisimilar pairs to bisimilar pairs.
	Taking \(R = {\bisim}\), the bisimulation collapse of a finite prechart \(X\) can be computed by iterating the connect-through-to operation until no distinct pairs of bisimilar states are left.

	For an arbitrary well-layered prechart \(X\) and a pair of distinct bisimilar states \((x_1,x_2)\), \(X[x_2/x_1]\) may not be well-layered.
	An example discussed in \cite{grabmayerfokkink2020complete} is the connect-through-to construction depicted in \autoref{fig:bad rerouting}, which takes a well-layered chart to a chart that does not admit a layering witness.
	
	\begin{figure}[!t]
	\centering
	\begin{tabular}{c c c}
	\footnotesize{\begin{tikzpicture}
			\node[state] (0) {\(x_2\)};
			\node[state, right=3em of 0] (1) {\(v\)};
			\node[state, right=3em of 1] (2) {\(v'\)};
			\node[state, right=3em of 2] (3) {\(x_1\)};
	
			\path (0) edge[bend left, line width=1pt, ->] (1);
			\path (0) edge[bend right=60, line width=1pt, ->] (2);
			\path (1) edge[bend left, ->] (0);
			\path (1) edge[bend left, line width=1pt, ->] (2);
			\path (2) edge[bend left, ->] (1);
			\path (2) edge[->] (3);
			\path (3) edge[bend right=60, ->] (1);
		\end{tikzpicture}}
	&&
	\footnotesize{\begin{tikzpicture}
			\node[state] (0) {\(x_2\)};
			\node[state, right=3em of 0] (1) {\(v\)};
			\node[state, right=3em of 1] (2) {\(v'\)};
	
			\path (0) edge[bend left, ->] (1);
			\path (0) edge[bend right=60, ->] (2);
			\path (1) edge[bend left, ->] (0);
			\path (1) edge[bend left, ->] (2);
			\path (2) edge[bend left, ->] (1);
			\path (2) edge[bend right=60, ->] (0);
		\end{tikzpicture}}
	\\
	\(X^\bullet\)&&\(X[x_2/x_1]\)
	\end{tabular}
	\caption{A bisimulation rerouting that does not preserve well-nestedness.}
	\label{fig:bad rerouting}
	\end{figure}

	However, if \((x_1,x_2)\) is chosen carefully, then the connect-\(x_1\)-through-to-\(x_2\) operation preserves well-layeredness.
	Where \(X^\bullet\) is a layering witness for \(X\), it is shown in \cite{grabmayerfokkink2020complete} that \(X[w_2/w_1]\) is well-layered for any pair \((w_1,w_2)\) of distinct bisimilar states satisfying one of the following three conditions in \(X^\bullet\):
	\begin{center}
		\begin{tabular}{r l}
			(C1) & \(\neg(w_2 \to^* w_1)\); and if \((\exists x)~x \diredge w_1\), then \(\neg(\exists y)(w_2 \to^* y \Rightarrow)\)\\
			(C2) & \(w_2 \loopright^+ w_1\)\\
			(C3) & \(\neg(w_2 \bo^* w_1)\); and \((\exists x)\) \(w_1 \loopright x\) and \(w_2 \loopright^+ x\) and if \(w_1 \loopright y\), then \(x \loopright y\)
		\end{tabular}
	\end{center}
	As Grabmayer and Fokkink point out in \emph{loc. cit.}, if \(X^\bullet\) is a layering witness for a finite prechart \(X\) such that \(X \mathrel{\not\cong} X/{\bisim}\), then there is a pair \((w_1,w_2)\) of distinct bisimilar states satisfying one of (C1)-(C3) in \(X^\bullet\). 
	A slight variation on their proof yields the following.

	\begin{restatable}{lemma}{restatearbitrarybisimulations}\label{lem:arbitrary bisimulations}
		Let \(X^\bullet\) be a layering witness for \(X\), and \(R\) be a bisimulation equivalence on \(X\).
		If \(R\) is non-trivial, ie. \(X \not{\cong} X/R\), then there is a pair \((w_1,w_2) \in R\) of distinct states satisfying one of (C1)-(C3).
	\end{restatable}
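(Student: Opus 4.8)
The plan is to adapt Grabmayer and Fokkink's argument for the bisimulation-collapse case to an arbitrary bisimulation equivalence $R$. The key observation is that their argument never really uses that $R = {\bisim}$; it only uses that $R$ is a bisimulation equivalence on the well-layered prechart $X$ and that $R$ is non-trivial, so that there is at least one pair of distinct $R$-related states. First I would unpack what it means for $R$ to be non-trivial: $X \not\cong X/R$ precisely when $R$ strictly contains the diagonal, i.e.\ when there exist $w, w' \in X$ with $w \neq w'$ and $(w,w') \in R$. The goal is then to show that \emph{among all} such pairs, at least one satisfies one of (C1), (C2), (C3) with respect to the fixed layering witness $X^\bullet$.

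The main step is a case analysis driven by the layering structure, following the proof in \cite{grabmayerfokkink2020complete}. Given the set $S = \{(w_1,w_2) \in R \mid w_1 \neq w_2\}$, which is nonempty by non-triviality, I would choose a pair in $S$ that is extremal with respect to a suitable well-founded measure coming from the layering --- for instance, maximising $|w_1|_{en}$ (the nesting depth from Remark~\ref{rem:llee from well-layeredness}), and then breaking ties with $|w_1|_b$ or with the acyclic order $\diredge$. For such an extremal pair one argues, exactly as in \emph{loc.\ cit.}, that one of (C1)--(C3) must hold: if $\neg(w_2 \to^* w_1)$ then one checks the goto-freeness side condition of (C1) using layeredness; if $w_2 \to^* w_1$, one distinguishes whether the path from $w_2$ to $w_1$ stays inside a common loop (leading to (C2), $w_2 \loopright^+ w_1$) or descends into a strictly nested loop (leading to (C3), after identifying the appropriate loop entry $x$). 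The bisimilarity of $w_1$ and $w_2$ together with flatness and full specification is what forces these paths to have the shape required by the glyphs $\eo$, $\bo$, $\loopright$, $\diredge$; and the extremality of the chosen pair is what rules out the ``bad'' configuration of Figure~\ref{fig:bad rerouting}, where two mutually nested loops are collapsed.

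The hard part --- and the reason this is stated as a separate lemma rather than folded into the text --- is verifying that the extremal pair genuinely lands in one of the three cases, i.e.\ reproducing the combinatorial heart of Grabmayer and Fokkink's argument in the slightly more general setting where $R$ need not be all of ${\bisim}$. Concretely, one must rule out the possibility that every distinct $R$-pair has its two components sitting ``in parallel'' in two incomparable loops at the same layer, which is exactly the situation (C1)--(C3) are designed to avoid; here the acyclicity of $(X, \diredge)$ and goto-freeness do the work, and one has to check carefully that replacing ${\bisim}$ by $R$ does not break any of the inequalities used. I expect that, as the excerpt suggests (``a slight variation on their proof''), essentially no new idea is needed: one simply reads their proof with $R$ in place of ${\bisim}$ and checks that only the bisimulation-equivalence axioms and non-triviality are invoked. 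I would therefore structure the writeup as a pointer to the relevant lemma of \cite{grabmayerfokkink2020complete} (or \cite{grabmayerfokkink2020extended}), stating precisely which hypotheses of that lemma are used and noting that ${\bisim}$ enters only through being a bisimulation equivalence, so that the argument goes through verbatim for $R$.
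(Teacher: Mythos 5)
Your top-level reading is right: the paper itself describes the result as ``a slight variation on their proof,'' and the only properties of ${\bisim}$ that Grabmayer and Fokkink's argument uses are that it is a bisimulation equivalence containing a pair of distinct states. But the mechanism you propose for locating the good pair does not match the one the paper (following \cite{grabmayerfokkink2020complete}) actually uses, and as described it has a gap. You suggest statically selecting a pair from $S = \{(w_1,w_2)\in R \mid w_1\neq w_2\}$ that is extremal for a measure such as $|w_1|_{en}$ and then arguing that this fixed pair satisfies one of (C1)--(C3). The paper instead starts from an \emph{arbitrary} distinct pair $(u_1,u_2)\in R$ and \emph{moves} it: because $R$ is a bisimulation, a transition $u_1 \to u_1'$ yields some $u_2 \to u_2'$ with $(u_1',u_2')\in R$, and the proof descends by induction on the length $|u_1|_{lb}$ of body paths staying within $\operatorname{scc}(u_1)$ until a pair in the right configuration is reached. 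The primary case split is on whether $\operatorname{scc}(u_1)=\operatorname{scc}(u_2)$: if not, the descent terminates in a pair satisfying (C1); if so, the join-semilattice structure of $(\operatorname{scc}(u_1),\loopright^*)$ is used to locate the $\loopright^*$-join $v$ of $u_1$ and $u_2$, giving (C2) when one of the $u_i$ equals $v$ and otherwise triggering a further descent ending in (C1), (C2), or (C3). This dynamic propagation of pairs along transitions is precisely where the hypothesis ``$R$ is a bisimulation equivalence'' is consumed; in your sketch it appears only as a constraint on the shape of paths out of a fixed pair, which is not how the argument works.

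Concretely, it is not clear that a pair maximising $|w_1|_{en}$ satisfies any of the three conditions: (C1) additionally demands that nothing reachable from $w_2$ outputs when $w_1$ sits under a $\diredge$-predecessor, which the paper obtains only at the \emph{end} of a descent along body transitions via goto-freeness; and (C3) requires exhibiting an auxiliary state $x$ with $w_1 \loopright x$ and $w_2\loopright^+ x$, which is again produced by the descent (via the claim that $[u\loopright]$ is linearly ordered), not by extremality of the pair. Since you explicitly defer the combinatorial heart to the reference while proposing a selection principle different from the one that heart relies on, the proposal as written does not establish the lemma. The fix is to drop the extremal-pair selection and run the transition-following induction verbatim with $R$ in place of ${\bisim}$ --- which, as you correctly anticipate, requires no new idea beyond checking that only the bisimulation-equivalence axioms are invoked.
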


	By iterating the connect-through-to construction on the pairs guaranteed to exist in Lemma \ref{lem:arbitrary bisimulations}, every homomorphic image of a finite well-layered prechart is seen to be well-layered.

	\begin{theorem}\label{thm:closure under homomorphic images}
		Let \(X\) be a finite well-layered prechart, and \(R\) be a bisimulation equivalence on \(X\).
		Then \(X/R\) is a well-layered prechart as well.
	\end{theorem}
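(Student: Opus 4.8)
The plan is to prove this by strong induction on the number of states of \(X\), reducing \(R\) one identified pair at a time in a way that never leaves the class of finite well-layered precharts. For the base case, if \(R\) is trivial, ie. \(X \cong X/R\), then \(X/R\) is well-layered simply because \(X\) is. For the inductive step, suppose \(R\) is non-trivial and fix a layering witness \(X^\bullet\) for \(X\). Since \(R\) is a bisimulation equivalence, Lemma~\ref{lem:arbitrary bisimulations} supplies a pair \((w_1, w_2) \in R\) of distinct states satisfying one of (C1)--(C3) in \(X^\bullet\), and since \(R\) is in particular a bisimulation we have \(w_1 \bisim w_2\). By the result of Grabmayer and Fokkink recalled just above \cite{grabmayerfokkink2020complete}, the prechart \(Y := X[w_2/w_1]\) is then again well-layered, and it is finite with one fewer state than \(X\).

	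Next I would transport \(R\) across the rerouting and invoke the induction hypothesis. Writing \(\kappa : X \onto X - \{w_1\}\) for the map that fixes every state except \(w_1\) and sends \(w_1\) to \(w_2\), set \(R' := R \cap (X - \{w_1\})^2\) --- equivalently \(R' = (\kappa \times \kappa)(R)\) --- an equivalence relation on the state space of \(Y\). Two facts then remain to be checked: (A) \(R'\) is a bisimulation on \(Y\); and (B) \(Y/R' \cong X/R\). Given these, \(R'\) is a bisimulation equivalence on the finite well-layered prechart \(Y\), so the induction hypothesis gives that \(Y/R'\) is well-layered, whence so is \(X/R \cong Y/R'\); this closes the induction.

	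Claim (A) I expect to settle by a direct check against the conditions defining a bisimulation between precharts (Lemma~\ref{lem:functional bisimulations}): for \((y, y') \in R'\), outputs agree because the rerouting leaves outputs untouched and \(R\) is a bisimulation on \(X\); and a transition \(y \trans{a} z\) in \(Y\) arises from a transition of \(X\) --- either \(y \trans{a} z\), or \(y \trans{a} w_1\) with \(z = w_2\) --- which \(y \mathrel{R} y'\) lets us match in \(X\), and pushing the matching transition through \(\kappa\) yields the required transition of \(Y\) into an \(R'\)-related state. For Claim (B), the cleanest route uses the fact recalled in the text that \(R \cap (X \times X - \{w_1\})\) is a bisimulation between \(X\) and \(Y\): composing it with the quotient homomorphism \([-]_{R'} : Y \onto Y/R'\) yields a bisimulation between \(X\) and \(Y/R'\) (the image of a bisimulation along a homomorphism is again a bisimulation), and its underlying relation turns out to be the graph of a surjection \(h : X \onto Y/R'\), namely \(h(x) = [\kappa(x)]_{R'}\). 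Since the graph of \(h\) is a bisimulation, \(h\) is a coalgebra homomorphism, and a short computation gives \(\ker(h) = R\), so \(Y/R' \cong X/R\) as required.

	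The main obstacle is Claim (B): the bookkeeping needed to recognise that \(Y/R'\) and \(X/R\) really are ``the same'' quotient of the original data, which comes down to identifying the composite bisimulation above with the graph of a homomorphism and computing its kernel. It is also worth stressing why Lemma~\ref{lem:arbitrary bisimulations} is doing essential work here: for an arbitrary choice of distinct \((w_1, w_2) \in R\), the same bookkeeping would still realise \(X/R\) as an iterated rerouting, but the intermediate precharts could fail to be well-layered (see \autoref{fig:bad rerouting}); it is precisely the availability of a pair satisfying one of (C1)--(C3) that keeps every stage of the induction inside the well-layered world.
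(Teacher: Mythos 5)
Your proposal is correct and follows essentially the same route as the paper: iterate the connect-through-to construction on the pairs supplied by Lemma~\ref{lem:arbitrary bisimulations} (which stay well-layered by Grabmayer and Fokkink's result on (C1)--(C3)), restricting \(R\) at each stage and identifying the composite with the quotient homomorphism \(X \onto X/R\). The paper packages exactly this induction (on \(|R \setminus \Delta_X|\) rather than on \(|X|\), an inessential difference) in the discussion preceding the theorem and in the generalised Theorem~\ref{thm:finitary bisimulation rerouting theorem}.
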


	This completes step 4 of Grabmayer and Fokkink's proof that Milner's axioms are complete with respect to bisimilarity for the \(1\)-free fragment of regular expressions.

	\begin{theorem}\label{thm:completeness}
		For any \(e,f \in \SExp\), if \(e \bisim f\), then \(e \equiv f\).
	\end{theorem}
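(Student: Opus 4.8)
The plan is to obtain Theorem~\ref{thm:completeness} as an instance of the local-approach theorem, Theorem~\ref{thm:completeness strategy}, applied with $G = P$, ambient coalgebra $E = \SExp$, quotient $E/{\equiv} = \SExp/{\equiv}$, and $\mathcal C$ the class of \emph{finite well-layered precharts}. Since the conclusion of Theorem~\ref{thm:completeness strategy} is precisely $e \equiv f \iff e \bisim f$ for all $e,f \in \SExp$, it suffices to discharge its three hypotheses; the desired implication $e \bisim f \Rightarrow e \equiv f$ is then one half of that biconditional. All three hypotheses are assembled from results already established in the excerpt.

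First, $\equiv$ is a bisimulation equivalence on $\SExp$ by Theorem~\ref{thm:strong soundness}. Second, $\langle e \rangle \in \mathcal C$ for every $e \in \SExp$: Lemma~\ref{lem:SExp is well-layered} supplies a layering witness for $\SExp$, which restricts to a layering witness on the subcoalgebra $\langle e \rangle$, and local finiteness of that witness (condition~(1) of Definition~\ref{def:layering witness}) makes $\langle e \rangle$ finite. Third, for each $X \in \mathcal C$ there is exactly one coalgebra homomorphism $X \to \SExp/{\equiv}$: by Lemma~\ref{lem:solution lemma} such homomorphisms are identified with solutions $X \to \SExp$, and Lemma~\ref{thm:unique solutions lemma} says every well-layered prechart has a unique solution.

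The remaining hypothesis is that $\mathcal C$ is collapsible. For this I would first observe that $\mathcal C$ is closed under binary coproducts: the disjoint union of layering witnesses $X^\bullet$ and $Y^\bullet$ is a layering witness for the coproduct prechart $X \sqcup Y$, since there are no transitions between the two summands and each of the five conditions in Definition~\ref{def:layering witness} is therefore checked componentwise, and finiteness is plainly preserved. Theorem~\ref{thm:closure under homomorphic images} gives closure of $\mathcal C$ under bisimulation collapses (a special case of homomorphic images). By the remark following Theorem~\ref{thm:completeness strategy}, a class of $G$-coalgebras closed under binary coproducts and bisimulation collapses is collapsible: given $x \bisim y$ with $x \in X \in \mathcal C$ and $y \in Y \in \mathcal C$, take $Z = (X \sqcup Y)/{\bisim} \in \mathcal C$ together with $p = [-]_{\bisim} \circ \text{in}_X$ and $q = [-]_{\bisim} \circ \text{in}_Y$, which satisfy $p(x) = q(y)$. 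With all hypotheses verified, Theorem~\ref{thm:completeness strategy} yields the claim.

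I do not expect a genuine obstacle: the difficult content lives upstream in Theorem~\ref{thm:closure under homomorphic images} and Lemma~\ref{thm:unique solutions lemma}, and the argument here is purely one of assembly. The only point warranting a sentence rather than a citation is the closure of $\mathcal C$ under binary coproducts, since Grabmayer and Fokkink phrase their distinguished class in terms of charts, which lack coproducts; for precharts, understood simply as $P$-coalgebras, this is immediate from the componentwise character of the layering-witness conditions.
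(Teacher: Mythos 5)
Your proposal is correct and follows essentially the same route as the paper's own proof: invoke Theorem~\ref{thm:completeness strategy} with $\mathcal C$ the finite well-layered precharts, using Theorem~\ref{thm:strong soundness}, Lemma~\ref{lem:SExp is well-layered}, Lemma~\ref{thm:unique solutions lemma}, Theorem~\ref{thm:closure under homomorphic images}, and closure under binary coproducts via disjoint union of layering witnesses. Your added remark that local finiteness of the witness guarantees $\langle e\rangle$ is finite is a small point the paper leaves implicit, but it does not change the argument.
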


	\begin{proof}
		Let \(\mathcal C\) be the set of finite well-layered precharts.
		Lemma \ref{lem:SExp is well-layered} tells us that \(\langle e \rangle \in \mathcal C\) for any \(e \in \SExp\), and Theorem~\ref{thm:unique solutions lemma} tells us that precharts in \(\mathcal C\) admit unique solutions.
		
		By Theorem \ref{thm:completeness strategy}, it suffices to show that \(\mathcal C\) is collapsible.
		We have already seen that a class of coalgebras closed under binary coproducts and homomorphic images is collapsible, so by Theorem~\ref{thm:closure under homomorphic images} it suffices to show that \(\mathcal C\) is closed under binary coproducts.
		To this end, observe that if \(X^\bullet\) and \(Y^\bullet\) are layering witnesses for \(X\) and \(Y\) respectively, then \(X^\bullet \sqcup Y^\bullet\) is a layering witness for \(X \sqcup Y\).
	\end{proof}

	\subsection{Reroutings, in general}
	Interestingly, the connect-through-to construction can be performed on general \(G\)-coal\-gebras.
	For a fixed prechart \(X\) and a pair of states \(x_1,x_2 \in X\), if \(i : X -\{x_1\} \into X\) is the inclusion map and \(j : X \onto X-\{x_1\}\) is the map identifying \(x_2\) with \(x_1\), then the prechart \(X[x_2/x_1] = (X-\{x_1\}, \langle o, \partial[x_2/x_1]\rangle)\) obtained from the connect-\(x_1\)-through-to-\(x_2\) construction is given precisely by
	\[
		\partial[x_2/x_1](x)(a) = j(\partial(x)(a)) = \Pfin(j) \circ \partial \circ i(x)(a).
	\]
	In other words, the following diagram commutes.
	\[\begin{tikzcd}
			X-\{x_1\} \ar[d, "{\langle o, \partial[x_2/x_1]\rangle}"'] \ar[rrr, hook, "i"] &&& X \ar[d, "{\langle o, \partial\rangle}"] \\
			2 \times \Pfin(X-\{x_1\}) &&& 2 \times \Pfin(X)^A \ar[lll, two heads, "\id_2 \times \Pfin(j)^A"']
		\end{tikzcd}\]
	Notice that \((i,j)\) is a \emph{splitting}, meaning \(j \circ i = \id_{X-\{x_1\}}\).
	In general, given any \(G\)-coalgebra \(X\) and any splitting \((i,j)\) with \(i : U \into X\), we define\[
		X[i,j] = (U, G(j) \circ d \circ i),
	\]
	and call \(X[i,j]\) the \emph{rerouting by \((i,j)\)} of \(X\).
	As is the case for the connect-through-to operation, reroutings that identify bisimilar states preserve bisimilarity.

	% The end goal here is to show that, if \(\gamma\) is the series of pairs we intend to identify in a homomorphic image, and \(\Y_\gamma\) is obtained by performing these identifications, then \(\Y_\gamma\) is indeed the homomorphic image by these equivalences.

	\begin{restatable}{lemma}{restatebisimulationreroutinglemma}\label{lem:bisimulation rerouting lemma}
		Let \(R\) be a bisimulation equivalence on a prechart \(X\), and \((i,j)\) be a splitting with \(i : U \into X\) and \(\ker(j) \subseteq R\).
		Then \(Q = R \cap (X \times U)\) is a bisimulation.
	\end{restatable}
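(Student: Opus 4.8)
The plan is to show directly that $Q = R \cap (X \times U)$ carries a $P$-coalgebra structure making both projections $\pi_1 : Q \to X$ and $\pi_2 : Q \to U$ into coalgebra homomorphisms, using the characterisation of bisimulations between precharts spelled out after Lemma~\ref{lem:functional bisimulations}: $Q$ is a bisimulation iff for every $(x,u) \in Q$ and every $a \in A$ we have (i) $x \Rightarrow a \iff u \Rightarrow a$, (ii) every $a$-transition out of $x$ in $X$ is matched, up to $Q$, by an $a$-transition out of $u$ in $X[i,j]$, and (iii) conversely. The key observation throughout is that $(x,u) \in Q$ means $u \in U$, $x \in X$, and $(x,u) \in R$; since $R$ is reflexive and $\ker(j) \subseteq R$, we will repeatedly use that $j(x) \mathrel{R} x$ for every $x \in X$, and hence $(x, j(x)) \in Q$ whenever $x \in X$ lands in $U$ under $i\circ j$ appropriately — more precisely, for $u \in U$ one has $(i(u), u) \in Q$ because $j(i(u)) = u$ so $i(u) \mathrel{R} u$.

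First I would handle the output condition (i). For $(x,u) \in Q$, since $R$ is a bisimulation equivalence on $X$ and $(x,u) \in R$, we get $x \Rightarrow a \iff u \Rightarrow a$ in $X$ for all $a$; and the output relation of $X[i,j]$ at $u$ is by definition $o_X(i(u)) = o_X(u)$ (as $i$ is the inclusion), so the outputs match on the nose. Next, for condition (ii), suppose $x \trans{a} x'$ in $X$ with $(x,u) \in Q$. Because $(x,u) \in R$ and $R$ is a bisimulation on $X$, there is $x'' \in X$ with $u \trans{a} x''$ in $X$ and $(x',x'') \in R$. Now in $X[i,j] = (U, P(j) \circ \langle o,\partial\rangle \circ i)$, the $a$-successors of $u$ are exactly $\{\, j(y) \mid y \in \partial_X(i(u))(a)\,\} = \{\, j(y) \mid u \trans{a} y \text{ in } X\,\}$, so $j(x'') \in \partial_{X[i,j]}(u)(a)$, i.e. $u \trans{a} j(x'')$ in the rerouting. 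It remains to check $(x', j(x'')) \in Q$: we have $j(x'') \in U$ (the codomain of $j$), and $j(x'') \mathrel{R} x''$ since $\ker(j) \subseteq R$ implies $j(x'') \mathrel{R} x''$ (using that $R \supseteq \ker(j)$ and $j(j(x'')) = j(x'')$ by the splitting identity $j \circ i = \id_U$), hence by transitivity of $R$, $x' \mathrel{R} x''  \mathrel{R} j(x'')$ gives $(x', j(x'')) \in R$, so $(x', j(x'')) \in Q$. Condition (iii) is the symmetric argument: an $a$-transition $u \trans{a} v$ in $X[i,j]$ is of the form $v = j(y)$ with $i(u) \trans{a} y$ in $X$, i.e. $x$'s twin $u$, viewed in $X$, has $u \trans{a} y$; since $(x,u)\in R$ there is $x' \in X$ with $x \trans{a} x'$ and $(x',y)\in R$, and then $(x', j(y)) \in Q$ as before.

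The main obstacle I anticipate is purely bookkeeping: being careful that the splitting identity is $j \circ i = \id_U$ (not $i \circ j = \id_X$), so that one cannot freely replace $j(x)$ by $x$, only conclude $j(x) \mathrel{R} x$ via $\ker(j) \subseteq R$; and keeping straight that the transitions of $X[i,j]$ at a state $u \in U$ are computed by first including $u$ back into $X$ via $i$, taking $X$-successors, and then pushing forward along $j$. Once the successor description of the rerouting is unwound and the reflexivity/transitivity of $R$ together with $\ker(j)\subseteq R$ are in hand, all three clauses fall out mechanically, and one concludes that $Q$ is a bisimulation between $X$ and $X[i,j]$.
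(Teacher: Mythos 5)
Your proposal is correct, but it takes a genuinely different route from the paper. You work concretely with the relational characterisation of prechart bisimulations (the one stated after Lemma~\ref{lem:functional bisimulations}) and verify the output clause and the two transfer clauses by hand, unwinding the successor structure of \(X[i,j]\) as \(j(\partial_X(i(u))(a))\) and using \(\ker(j)\subseteq R\) together with transitivity to land back in \(Q\); all three clauses check out, and your care about \(j\circ i=\id_U\) versus \(i\circ j\) is exactly the right care to take. The paper instead argues abstractly: it takes the coalgebra structure \((R,\delta_R)\) witnessing that \(R\) is a bisimulation, observes that the maps \(i_Q(x,y)=(x,i(y))\) and \(j_Q(x,z)=(x,j(z))\) form a splitting of \(R\) over \(Q\) (the verification that \(j_Q\) lands in \(Q\) is the same \(\ker(j)\subseteq R\) plus transitivity argument you use), equips \(Q\) with the rerouted structure \(G(j_Q)\circ\delta_R\circ i_Q\), and shows by a short diagram chase that the two projections are homomorphisms. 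The payoff of the paper's version is that it is functor-generic: it never mentions outputs or successors, so it applies verbatim to any \(G\)-coalgebra, which is what Theorem~\ref{thm:finitary bisimulation rerouting theorem} needs when it iterates this lemma for an arbitrary weak-pullback-preserving \(G\). Your proof establishes the lemma only in the prechart case as literally stated; that is a complete proof of the statement, but if you wanted to support the later general theorem you would either need to redo the argument for each functor or lift it to the abstract form the paper uses.
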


	A rerouting \(X[i,j]\) is called an \emph{\(R\)-rerouting} if \(R\) is a bisimulation and \(\ker(j) \subseteq R\).
	In case \(R = {\bisim}\), we will use the phrase \emph{bisimulation rerouting} instead.

	A common assumption in universal coalgebra is that the endofunctor \(G\) under consideration preserves weak pullbacks~\cite{rutten2000universal,gumm1998functorsforcoalgebras}.
	This property is sufficient for ensuring that the relational composition of two bisimulations is again a bisimulation.
	In general, if \(R\) is an equivalence on \(X\), and \(Z \subseteq Y \subseteq X\), then \(R \cap (X \times Z) = (R \cap (X \times Y)) \fatsemi (R \cap (Y \times Z))\) and \(R \cap (Y\times Y)\) is an equivalence relation.
	Thus, by iterating Lemma \ref{lem:bisimulation rerouting lemma}, we obtain the following generalisation of Theorem \ref{thm:closure under homomorphic images}.

	\begin{restatable}{theorem}{restatefinitarybisimulationreroutingtheorem}\label{thm:finitary bisimulation rerouting theorem}
		Let \(G\) be an endofunctor that weakly preserves pullbacks, and \(\mathcal C\) be a class of finite \(G\)-coalgebras closed under isomorphism.
		Then the following two statements hold.
		\begin{enumerate}[itemsep=0pt]
			\item If for any \(X \in \mathcal C\) and any nontrivial bisimulation equivalence \(R \subseteq X \times X\) there is a nontrivial \(R\)-rerouting \(U\) of \(X\) such that \(U \in \mathcal C\), then \(\mathcal C\) is closed under homomorphic images.
			\item If for any \(X \in \mathcal C\) such that \(X \mathrel{\not\cong} X/{\bisim}\) there is a nontrivial bisimulation rerouting \(U\) of \(X\) such that \(U \in \mathcal C\), then \(\mathcal C\) is closed under bisimulation collapses.
		\end{enumerate}
	\end{restatable}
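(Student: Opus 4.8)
Both statements follow from essentially the same strong induction on the number of states of \(X\), with Lemma~\ref{lem:bisimulation rerouting lemma} as the only substantial ingredient, together with the two consequences of weak preservation of pullbacks used throughout the paper: the relational composite of bisimulations is a bisimulation, and the kernel of a coalgebra homomorphism is a bisimulation equivalence. This is the abstract form of the iteration of connect-through-to sketched just before Theorem~\ref{thm:closure under homomorphic images}. For~(1) I would first observe that, for a class closed under isomorphism, closure under homomorphic images is equivalent to the following: \(X/R \in \mathcal C\) for every \(X \in \mathcal C\) and every bisimulation equivalence \(R\) on \(X\). Indeed, the image of a homomorphism \(h\) with domain \(X\) is isomorphic to \(X/\ker(h)\), a quotient by a bisimulation equivalence, while conversely \(X \onto X/R\) is a surjective homomorphism. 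So it suffices to prove \(X/R \in \mathcal C\).

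For the inductive step, let \(R\) be a bisimulation equivalence on \(X \in \mathcal C\). If \(R\) is trivial then \(X/R \cong X \in \mathcal C\) and we are done; this also settles the base case \(|X| \le 1\). Otherwise the hypothesis of~(1) yields a nontrivial \(R\)-rerouting \(U = X[i,j] \in \mathcal C\), with \(i : U \into X\) and \(\ker(j) \subseteq R\); since \(i\) embeds the carrier of \(U\) into that of \(X\) and both are finite, nontriviality forces \(U\) to have strictly fewer states than \(X\). By Lemma~\ref{lem:bisimulation rerouting lemma}, \(Q = R \cap (X \times U)\) is a bisimulation between \(X\) and \(U\). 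Put \(R_U = R \cap (U \times U)\); a short computation gives \(R_U = Q^{-1} \fatsemi Q\), so \(R_U\) is simultaneously an equivalence relation (it is \(R\) restricted to the carrier of \(U\)) and a bisimulation on \(U\) (a composite of bisimulations), hence a bisimulation equivalence on \(U\). Composing \(Q\) with the collapse homomorphism \(U \onto U/R_U\) produces a bisimulation between \(X\) and \(U/R_U\); the crux is that this composite is functional --- any two states of \(U\) that are \(R\)-related to a common \(x \in X\) are already \(R_U\)-related --- hence is the graph of \(x \mapsto [j(x)]_{R_U}\), which is therefore a coalgebra homomorphism, is surjective, and has kernel exactly \(R\) (using \(\ker(j) \subseteq R\) and transitivity of \(R\)). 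Thus \(X/R \cong U/R_U\) as \(G\)-coalgebras, and by the induction hypothesis applied to the smaller coalgebra \(U \in \mathcal C\) we get \(U/R_U \in \mathcal C\), whence \(X/R \in \mathcal C\) by closure under isomorphism. This proves~(1).

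For~(2) I would run the identical induction with \(R\) always taken to be \({\bisim}\): if \(X \not\cong X/{\bisim}\), the hypothesis of~(2) supplies a nontrivial bisimulation rerouting \(U \in \mathcal C\), Lemma~\ref{lem:bisimulation rerouting lemma} again gives the bisimulation \(Q\) between \(X\) and \(U\), and \(R_U = {\bisim} \cap (U \times U)\) is a bisimulation equivalence on \(U\) with \(X/{\bisim} \cong U/R_U\) exactly as above. The only new point is that \(R_U\) must be identified with \emph{all} of bisimilarity on \(U\), so that \(U/R_U\) is genuinely the bisimulation collapse of \(U\) and the induction hypothesis applies to it. One inclusion is immediate since \(R_U\) is a bisimulation on \(U\); for the reverse, a bisimulation \(B\) on \(U\) relating \(u\) and \(u'\) can be conjugated to the bisimulation \(Q \fatsemi B \fatsemi Q^{-1}\) on \(X\), which still relates \(u\) and \(u'\) because \((u,u)\) and \((u',u')\) lie in \(Q\); hence \(u\) and \(u'\) are bisimilar in \(X\), so \((u,u') \in R_U\). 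Therefore \(X/{\bisim} \cong U/{\bisim}\), and the induction hypothesis applied to \(U\) finishes the argument.

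The step I expect to be the real obstacle is the functionality-and-kernel computation in the second paragraph: checking that one rerouting step genuinely presents a quotient of \(X\) onto a strictly smaller coalgebra, i.e.\ that \(Q\) followed by \(U \onto U/R_U\) is the graph of a surjective homomorphism with kernel \(R\). Everything else is routine manipulation of bisimulations under weak preservation of pullbacks, and the induction terminates because every nontrivial rerouting strictly decreases the finite state count --- the only place where finiteness and the embedding \(i : U \into X\) are used.
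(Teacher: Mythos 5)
Your proof is correct and follows essentially the same route as the paper's: a single rerouting step together with Lemma~\ref{lem:bisimulation rerouting lemma} presents \(X/R\) as the quotient of the strictly smaller coalgebra \(U\) by the restricted bisimulation equivalence, and one inducts (the paper inducts on \(|R - \Delta_X|\) rather than on \(|X|\), which is immaterial since a nontrivial rerouting decreases both). Your explicit check in part~(2) that \({\bisim} \cap (U \times U)\) coincides with bisimilarity computed within \(U\) is a worthwhile addition: the paper's ``(ii) follows from (i)'' really means ``the same induction specializes to \(R = {\bisim}\)'', and that specialization needs exactly the identification you supply.
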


	As closure under bisimulation collapses is often enough to establish collapsibility, Theorem \ref{thm:finitary bisimulation rerouting theorem} tells us that establishing an abundance of reroutings in the distinguished class can be a crucial step towards completeness.

% {\color{red}{\textbf{International Up-to-Date Line} \smallskip \hrule}}

% *** *** SECTION *** *** % = = = = = = = = = = = = = = = = = = = = = =	
\section{A Global Approach}\label{sec:a global approach}
	We now discuss a different approach to proving soundness and completeness theorems in process algebra, which we call the \emph{global approach}, and show how the soundness and completeness theorems of \cite{grabmayerfokkink2020complete} fit in
	this setting.
	Fitting Grabmayer and Fokkink's proof into the mould of the global approach involves expanding the class of finite well-layered precharts to a much larger class that is closed under homomorphic images.
	We further show how the same remoulding technique can remould many local approach proofs into global ones. 

	The global approach originates in coalgebraic automata theory~\cite{jacobs2006bialgebraic,silva2010kleene,silvabonsanguerutten2010nondeterministic,milius2010streamcircuits,bonsanguemiliussilva2013sound}.
	Its main goal is to show that the expression language modulo provable equivalence is isomorphic to a subcoalgebra of a \emph{final} coalgebra.
	For example, in \cite{jacobs2006bialgebraic}, Jacobs proves that the Kleene algebra axioms (see \cite{kozen1991completeness,conway2012regular}) are sound and complete with respect to language equivalence by exhibiting a coalgebra isomorphism between the initial Kleene algebra and the algebra of regular languages.
	The coalgebras that appear in Jacobs' paper are a standard choice for deterministic automata, the \(2 \times (-)^A\)-coalgebras.
	This establishes the soundness and completeness of the Kleene Algebra axioms because bisimilarity and language equivalence coincide.
	% This particular paper suggests a connection to the \emph{bialgebraic} semantics of programming languages that the rest do not.
	Silva successfully applies the same method in \cite{silva2010kleene} to a variety of expression languages and axiomatisations parametrized by the functor \(G\), with Jacobs' proof given by the special case \(G = 2 \times (-)^A\).
	Following the same pattern, Milius gives an expression language and axiomatisation of language equivalence for stream circuits in \cite{milius2010streamcircuits}, and generalizes some of the results in \cite{silva2010kleene} to endofunctors on categories other than \(\Sets\).
	Following a similar approach, all three of the above are unified in \cite{bonsanguemiliussilva2013sound}.
	%This last paper is further evidence for the power of the global approach: It shows that the global approach can be applied in general contexts, where a property like collapsability does not make sense.

	In order to explain precisely how the global approach works, fix a \(G\)-coalgebra \(E\), thought of as an abstract expression language, and let \(\equiv\) be an equivalence relation on \(E\).
	Similar to the local approach, the global approach involves a sequence of four steps: 
	\begin{description}
		\item[Step 1] is showing that \(\equiv\) is a bisimulation equivalence. 
		This establishes soundness.
		\item[Step 2] consists of identifying a class \(\mathcal C\) of \(G\)-coalgebras in which \(E/{\equiv}\) is \emph{weakly final} in \(\mathcal C\), ie. that \(E/{\equiv} \in \mathcal C\) and every \(X \in \mathcal C\) admits a homomorphism \(X \to E/{\equiv}\).
		Again, homomorphisms into \(E/{\equiv}\) play the role of solutions, so it can be said that coalgebras in \(\mathcal C\) admit solutions.
		\item[Step 3] is a proof that \(E/{\equiv}\) is \emph{final} in \(\mathcal C\), ie. every \(X \in \mathcal C\) admits exactly one solution.
		\item[Step 4] consists of showing that \(\mathcal C\) is closed under homomorphic images.
	\end{description}
	These four steps are sufficient for showing the soundness and completeness of the axiomatisations in each of the cases considered in~\cite{jacobs2006bialgebraic,silva2010kleene,silvabonsanguerutten2010nondeterministic,milius2010streamcircuits,bonsanguemiliussilva2013sound} because the functors that are present there satisfy two key properties.
	% the case of \(1\)-free star expressions because the prechart functor \(P\) enjoys two key properties.
	The first key property is that there is a \(G\)-coalgebra \(Z\) that is \emph{final}, ie. every \(G\)-coalgebra \(X\) admits a unique homomorphism \(!_X : X \to Z\). 
	Following the steps above, this implies that \(E/{\equiv}\) is a subcoalgebra of \(Z\). 

	\begin{lemma}\label{lem:E/= is subfinal}
		Assume that a final \(G\)-coalgebra exists, and call it \(Z\).
		If \(\mathcal C\) is closed under homomorphic images and has a final object \(Y\), then \(!_Y : Y \to Z\) is injective.
	\end{lemma}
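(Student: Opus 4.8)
The plan is to combine the standard image factorisation of coalgebra homomorphisms with the finality of $Y$ in $\mathcal C$. Write $I = \img(!_Y) \subseteq Z$ for the image of the homomorphism $!_Y : Y \to Z$, and factor $!_Y$ as $Y \xrightarrow{e} I \xrightarrow{m} Z$, where $e$ is the corestriction of $!_Y$ to its image (a surjection) and $m : I \into Z$ is the inclusion. The first point to record is that $I$ inherits a $G$-coalgebra structure from $Z$ for which both $e$ and $m$ are homomorphisms; this is the usual fact that the image of a coalgebra homomorphism is a subcoalgebra of its codomain. In particular $I$ is a \emph{homomorphic image} of $Y$, so $I \in \mathcal C$ by the closure hypothesis.

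Next I would invoke finality of $Y$ twice. Since $Y$ is final in $\mathcal C$ and $I \in \mathcal C$, there is a homomorphism $g : I \to Y$. Then $g \circ e : Y \to Y$ is a $G$-coalgebra homomorphism, and since $Y \in \mathcal C$ is final the only homomorphism $Y \to Y$ is $\id_Y$; hence $g \circ e = \id_Y$. Thus $e$ has a left inverse, so $e$ is injective, and therefore $!_Y = m \circ e$ is a composite of injective maps and is itself injective. (Equivalently: $e$ is both injective and surjective, so $I \cong Y$ and $m$ embeds $Y$ into $Z$.)

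The argument is short, and the only genuinely delicate ingredient is the existence of the subcoalgebra structure on the image $I$ — i.e. that ``homomorphic image'' is a well-defined notion. This is exactly what the hypothesis ``$\mathcal C$ is closed under homomorphic images'' presupposes, so I expect no real obstacle here; if one wants a self-contained justification, one can instead use that $\ker(!_Y)$ is a bisimulation equivalence (under the standing assumption that $G$ preserves weak pullbacks, recalled earlier) and run the same argument with $I$ replaced by the quotient $Y/\ker(!_Y)$, which is a homomorphic image of $Y$ and maps into $Z$ by the universal property of the quotient.
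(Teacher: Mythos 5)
Your proposal is correct and follows essentially the same route as the paper: both take the image of \(!_Y\) in \(Z\), note it lies in \(\mathcal C\) by closure under homomorphic images, and use finality of \(Y\) to produce a left inverse, concluding injectivity. Your version is marginally more careful in explicitly separating the surjective corestriction \(e\) from the inclusion \(m\), but the substance is identical.
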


	\begin{proof}
		Where \(!_Y : Y \to Z\) is the unique coalgebra homomorphism from \(Y\) into \(Z\), let \(J = {!}_Z(Y)\).
		The image of a coalgebra homomorphism is always a subcoalgebra of the codomain~\cite{rutten2000universal}, so \(J \in \mathcal C\) by closure under homomorphic images.
		Since \(Y\) is final in \(\mathcal C\), \(J\) admits a unique coalgebra homomorphism \(h : J \to Y\).
		Composing, \(h \circ {!}_Y : Y \to Y\) is a homomorphism, so finality of \(Y\) in \(\mathcal C\) tells us that \(h \circ {!}_Y = \id_Y\). 
		As \(!_Y\) has a left inverse, it is injective.
	\end{proof}

	This means that if every \(X \in \mathcal C\) admits a unique solution and \(\mathcal C\) is closed under homomorphic images, then \([e]_{\equiv} = {!}_{E}(e)\) for any \(e \in E\).\footnote{Here, we have identified \(E/\equiv\) with its isomorphic copy in \(Z\).}
	The second key property is preservation of weak pullbacks.

	\begin{lemma}[Rutten~\cite{rutten2000universal}]\label{lem:bisim = final}
		Let \(X\) and \(Y\) be \(G\)-coalgebras, \(x \in X\), and \(y \in Y\).
		Assume that a final \(G\)-coalgebra exists.
		If \(G\) preserves weak pullbacks, then \(x \bisim y\) if and only if \(!_X(x) = {!}_Y(y)\).
	\end{lemma}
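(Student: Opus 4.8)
The plan is to prove the two implications separately, since only one of them needs the hypothesis on $G$. Write $(Z,\delta_Z)$ for the final $G$-coalgebra, with $!_X : X \to Z$ and $!_Y : Y \to Z$ the unique homomorphisms into it.

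For the forward implication, suppose $x \bisim y$, witnessed by a bisimulation $R \subseteq X \times Y$ carrying a coalgebra structure $\delta_R$ with $(x,y) \in R$, so that $\pi_1 : R \to X$ and $\pi_2 : R \to Y$ are $G$-coalgebra homomorphisms. Then $!_X \circ \pi_1$ and $!_Y \circ \pi_2$ are both coalgebra homomorphisms $R \to Z$, so by finality of $Z$ they both coincide with $!_R$, hence with each other. Evaluating at $(x,y)$ gives $!_X(x) = !_X(\pi_1(x,y)) = !_Y(\pi_2(x,y)) = !_Y(y)$. This direction uses only finality of $Z$, not preservation of weak pullbacks.

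For the converse, suppose $!_X(x) = !_Y(y)$. I would form the pullback of $!_X$ and $!_Y$ in $\Sets$, namely $P = \{(x',y') \in X\times Y \mid {!}_X(x') = {!}_Y(y')\}$, with projections $\pi_1 : P \to X$ and $\pi_2 : P \to Y$, and show that $P$ carries a coalgebra structure making it a bisimulation. The key observation is that since $!_X$ and $!_Y$ are coalgebra homomorphisms, and $!_X \circ \pi_1 = !_Y \circ \pi_2$ on $P$ by definition of $P$, we have
\[
  G{!}_X \circ \delta_X \circ \pi_1 \;=\; \delta_Z \circ {!}_X \circ \pi_1 \;=\; \delta_Z \circ {!}_Y \circ \pi_2 \;=\; G{!}_Y \circ \delta_Y \circ \pi_2 .
\]
Thus $(\delta_X \circ \pi_1,\ \delta_Y \circ \pi_2)$ is a cone on the pair $G{!}_X, G{!}_Y$ with apex $P$. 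Since $P$ is a pullback, hence a weak pullback, of $!_X$ and $!_Y$, and $G$ preserves weak pullbacks, $GP$ together with $G\pi_1$ and $G\pi_2$ is a weak pullback of $G{!}_X$ and $G{!}_Y$. The universal property of this weak pullback applied to the cone above yields a map $\delta_P : P \to GP$ with $G\pi_1 \circ \delta_P = \delta_X \circ \pi_1$ and $G\pi_2 \circ \delta_P = \delta_Y \circ \pi_2$; these two equations say precisely that $\pi_1 : (P,\delta_P) \to X$ and $\pi_2 : (P,\delta_P) \to Y$ are coalgebra homomorphisms, i.e.\ that $P$ is a bisimulation. Since $(x,y) \in P$ by assumption, $x \bisim y$.

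The main obstacle is the converse direction: manufacturing the coalgebra structure $\delta_P$ on the pullback $P$. This is exactly the step that preservation of weak pullbacks is designed to supply, and it is worth drawing the cone diagram carefully so that the hypotheses of the weak-pullback universal property are visibly met. The forward direction, and the bookkeeping that $P$ is the set-theoretic pullback, are routine.
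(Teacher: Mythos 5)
Your proof is correct and is exactly the standard argument from Rutten's \emph{Universal coalgebra}, which the paper cites for this lemma rather than reproving it: finality gives the forward direction, and preservation of weak pullbacks equips the set-theoretic pullback of \(!_X\) and \(!_Y\) with a coalgebra structure making it a bisimulation. No gaps.
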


	Following steps 1 through 4 above, and assuming that \(G\) has a final coalgebra and preserves weak pullbacks, Lemmas \ref{lem:E/= is subfinal} and \ref{lem:bisim = final} tell us that \([e]_{\equiv} = {!}_E(e) = {!}_E(f) = [f]_{\equiv}\) if and only if \(e \bisim f\), for any \(e,f \in \SExp\).

	\begin{theorem}\label{thm:global completeness}
		Assume \(G\) preserves weak pullbacks, and let \(\equiv\) be a bisimulation equivalence on a \(G\)-coalgebra \(E\).
		Let \(\mathcal C\) be a class of \(G\)-coalgebras that is closed under homomorphic images.
		If \(E/{\equiv}\) is a final object in \(\mathcal C\), then \(e \equiv f\) if and only if \(e \bisim f\) for any \(e,f \in E\).
	\end{theorem}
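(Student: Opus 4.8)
The plan is to chain together Lemmas~\ref{lem:E/= is subfinal} and \ref{lem:bisim = final}, exactly as the discussion preceding the statement suggests, with the hypotheses of the theorem supplying everything those lemmas need. First I would fix a final $G$-coalgebra $Z$; the paper does not list its existence as a hypothesis of Theorem~\ref{thm:global completeness}, but the surrounding text assumes it throughout the global approach, so I would either add it to the statement or invoke it as a standing assumption. Given $Z$, write $!_X : X \to Z$ for the unique homomorphism out of each $G$-coalgebra $X$.

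The argument then proceeds in three short steps. \emph{(1)} Since $\equiv$ is a bisimulation equivalence on $E$ (hypothesis), the quotient $E/{\equiv}$ is a $G$-coalgebra and $[-]_\equiv : E \to E/{\equiv}$ is a coalgebra homomorphism (as recalled in \autoref{sec:coalgebras and 1-free star expressions}, using that $G$ preserves weak pullbacks so that kernels of homomorphisms are bisimulation equivalences and conversely). \emph{(2)} Because $\mathcal C$ is closed under homomorphic images and has $E/{\equiv}$ as a final object, Lemma~\ref{lem:E/= is subfinal} applies with $Y = E/{\equiv}$, giving that $!_{E/{\equiv}} : E/{\equiv} \to Z$ is injective. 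Composing, $!_E = {!}_{E/{\equiv}} \circ [-]_\equiv$ by uniqueness of homomorphisms into $Z$, so for $e,f \in E$ we get $!_E(e) = {!}_E(f)$ iff $[e]_\equiv = [f]_\equiv$ (the ``if'' is trivial; the ``only if'' uses injectivity of $!_{E/{\equiv}}$) iff $e \equiv f$. \emph{(3)} Since $G$ preserves weak pullbacks and a final coalgebra exists, Lemma~\ref{lem:bisim = final} gives $e \bisim f$ iff $!_E(e) = {!}_E(f)$. Chaining the two biconditionals yields $e \equiv f$ iff $e \bisim f$, which is the claim.

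There is no serious obstacle remaining once the two cited lemmas are in hand; the only thing to be careful about is bookkeeping. The identity $!_E = {!}_{E/{\equiv}} \circ [-]_\equiv$ must be justified by the universal property of $Z$ (both sides are homomorphisms $E \to Z$, and $Z$ is final), and one must be sure that ``$E/{\equiv}$ is final in $\mathcal C$'' in the hypothesis is being used in the strong sense of Lemma~\ref{lem:E/= is subfinal} — that is, it really is a final object of the class, not merely weakly final. If I wanted to make the theorem fully self-contained I would also spell out that the hypothesis ``$E/{\equiv}$ is final in $\mathcal C$'' presupposes $E/{\equiv} \in \mathcal C$, which is what lets closure under homomorphic images be applied to $!_{E/{\equiv}}(E/{\equiv})$ in the proof of Lemma~\ref{lem:E/= is subfinal}. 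With those remarks in place the proof is essentially a two-line corollary of the preceding development.
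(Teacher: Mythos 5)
Your proposal is correct and follows essentially the same route as the paper, which obtains the theorem by chaining Lemmas~\ref{lem:E/= is subfinal} and~\ref{lem:bisim = final} via the factorisation \(!_E = {!}_{E/{\equiv}} \circ [-]_{\equiv}\) exactly as you describe. Your observation that the existence of a final \(G\)-coalgebra is used but not listed among the theorem's explicit hypotheses (it is a standing assumption of the surrounding discussion) is accurate and worth recording.
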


	It follows from standard observations about the prechart functor \(P\) that there is a final \(P\)-coalgebra~\cite{rutten2000universal}\footnote{Namely, that it is \emph{bounded}.} and that \(P\) preserves weak pullbacks~\cite{gumm1999elements}.
	This suggests the possibility that the global approach can be taken to proving Theorem \ref{thm:strong soundness} and Theorem \ref{thm:completeness}.
	This is indeed the case, although the class of finite well-layered precharts needs to be extended so as to include \(\SExp/{\equiv}\).

	\subsection{A global approach to the 1-free fragment} 
	Returning to the \(1\)-free fragment of regular expressions, we have already seen that the class of well-layered precharts has \(\SExp\) as a member.
	It is likely that \(\SExp/{\equiv}\) is also well-layered, but proving this turns out to be unnecessary for our purposes.
	
	In order to have the \emph{global} approach go through for the \(1\)-free fragment, we make a slight change in the distinguished class of precharts from \autoref{sec:layered loop existence and elimination}.
	Let \(\mathcal C_{loc}\) be the class of \emph{locally well-layered} precharts, ie. \(X \in \mathcal C_{loc}\) if and only if \(X\) is locally finite and every finite subcoalgebra of \(X\) is well-layered.
	Using the fact that the finite well-layered precharts are closed under homomorphic images, we obtain the following key lemma.

	\begin{restatable}{lemma}{restateclosureunderarbitraryhomomorphicimages}\label{lem:closure under arbitrary homomorphic images}
		Let \(X\) be locally well-layered and \(q : X \onto Y\) a surjective coalgebra homomorphism.
		Then \(Y\) is locally well-layered as well.
	\end{restatable}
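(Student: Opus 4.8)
The plan is to bootstrap from Theorem~\ref{thm:closure under homomorphic images}, which already provides closure of the \emph{finite} well-layered precharts under homomorphic images, so that essentially all of the combinatorial work is reused rather than repeated. Two things must be verified: that $Y$ is locally finite, and that every finite subcoalgebra of $Y$ is well-layered.

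Local finiteness of $Y$ is immediate. Given $y\in Y$, use surjectivity of $q$ to pick $x\in X$ with $q(x)=y$. Restricting $q$ to the subcoalgebra $\langle x\rangle$ of $X$ gives a coalgebra homomorphism $\langle x\rangle\to Y$, whose image is a subcoalgebra of $Y$ containing $y$; since $\langle x\rangle$ is finite (local well-layeredness includes local finiteness), this image is finite, and hence so is $\langle y\rangle$, which is contained in it.

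For the main claim, fix a finite subcoalgebra $V\subseteq Y$; the crux is to produce a \emph{finite} subcoalgebra $U\subseteq X$ that surjects onto $V$ under $q$. Here the key observation is that $q^{-1}(V)$ is itself a subcoalgebra of $X$: if $x\in q^{-1}(V)$ and $x\trans{a}x'$, then $q(x)\trans{a}q(x')$ by Lemma~\ref{lem:functional bisimulations}, and $q(x')\in V$ because $V$ is closed under transitions. For each $v\in V$ choose $x_v\in X$ with $q(x_v)=v$; since $x_v\in q^{-1}(V)$ we get $\langle x_v\rangle\subseteq q^{-1}(V)$. Put $U=\bigcup_{v\in V}\langle x_v\rangle$: this is a union of finitely many finite subcoalgebras of $X$ (each finite by local finiteness of $X$), hence a finite subcoalgebra of $X$, and $V=\{q(x_v)\mid v\in V\}\subseteq q(U)\subseteq V$, so $q$ restricts to a surjective homomorphism $q|_U:U\onto V$. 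Now $U$ is a finite subcoalgebra of $X$ and so is well-layered by hypothesis; since $P$ preserves weak pullbacks, $\ker(q|_U)$ is a bisimulation equivalence on $U$ and $V\cong U/\ker(q|_U)$; and Theorem~\ref{thm:closure under homomorphic images} then makes $U/\ker(q|_U)$, hence $V$ (well-layeredness being invariant under isomorphism), well-layered. As $V$ was arbitrary, $Y$ is locally well-layered.

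The only step that is more than bookkeeping is the construction of $U$: unioning the reachable parts of \emph{arbitrarily} chosen preimages of the states of $V$ could, a priori, yield a subcoalgebra of $X$ whose $q$-image strictly contains $V$, which would derail the appeal to Theorem~\ref{thm:closure under homomorphic images}. Observing first that $q^{-1}(V)$ is a subcoalgebra pins each $\langle x_v\rangle$ inside $q^{-1}(V)$ and removes this difficulty; the rest is routine.
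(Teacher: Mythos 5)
Your proof is correct and follows essentially the same route as the paper: lift the finitely many states of a finite subcoalgebra of \(Y\) to preimages in \(X\), take the (finite, hence well-layered) subcoalgebra of \(X\) they generate, and push it forward through \(q\) so that Theorem~\ref{thm:closure under homomorphic images} applies. Your explicit check that the generated subcoalgebra sits inside \(q^{-1}(V)\), so that its image is exactly \(V\), is a point the paper's proof leaves implicit, and it is worth making.
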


	Every well-layered prechart is locally well-layered, so \(\SExp\) is locally well-layered by Lemma \ref{lem:SExp is well-layered}.
	Since \(\SExp \in \mathcal C_{loc}\) and \(\SExp/{\equiv}\) is the image of \(\SExp\) under the homomorphism \([-]_{\equiv} : \SExp \to \SExp/{\equiv}\), Lemma \ref{lem:closure under arbitrary homomorphic images} tells us that \(\SExp/{\equiv} \in \mathcal C_{loc}\) as well.
	% We have already taken step 1 with Theorem \ref{thm:strong soundness}, and Lemma \ref{lem:closure under arbitrary homomorphic images} is precisely the content of step 4.
	% This leaves steps 2 and 3.

	So far, we have taken step 4 and the first half of step 2 from the global approach.
	Interestingly, step 3 and the latter half of step 2 are possible because of Theorem \ref{thm:unique solutions lemma}, the uniqueness-of-solutions theorem for finite precharts.
	To see how this works, let \(X \in \mathcal C_{loc}\).
	By Theorem \ref{thm:unique solutions lemma}, every finite subcoalgebra \(U\) of \(X\) admits a unique solution \(s_U : U \to \SExp/{\equiv}\).
	Since homomorphisms restrict to subcoalgebras, this clearly implies that \(X\) admits at most one solution.
	To see that \(\SExp/{\equiv}\) is final in \(\mathcal C_{loc}\), it suffices to construct a solution to \(X\).

	The unique solution to \(X\) is the map \(s_X : X \to \SExp/{\equiv}\) given by \(
		s_X(x) = s_U(x)
	\)
	for any finite subcoalgebra \(U\) of \(X\) containing \(x\).
	To see that this is well-defined, recall that \(X\) is locally finite, meaning that every state of \(X\) is contained in a finite subcoalgebra of \(X\).
	If \(U\) and \(V\) are finite subcoalgebras of \(X\) with \(x \in U\) and \(x \in V\), then \(U \cap V\) is a finite subcoalgebra of \(X\) containing \(x\).
	We have assumed \(U \cap V\) is well-layered, so by Theorem \ref{thm:unique solutions lemma}, \(U \cap V\) admits a unique solution. 
	Restricting \(s_U\) and \(s_V\) to \(U\cap V\) also obtains a solution, so it must be that \(
		s_U(x) = s_{U\cap V}(x) = s_V(x)
	\).
	To see that \(s\) is indeed a solution, observe that a map \(h : X \to Y\) between locally finite coalgebras is a coalgebra homomorphism if \(h|_U : U \to Y\) is a coalgebra homomorphism for any finite subcoalgebra \(U\) of \(X\).
	Since the latter statement is true of \(s\) by definition, \(s\) is a solution to \(X\).
	This establishes the lemma below.

	\begin{lemma}\label{lem:SExp is final}
		Let \(\mathcal C_{loc}\) be the class of locally well-layered precharts.
		Then \(\SExp/{\equiv}\) is a final object in the class \(\mathcal C_{loc}\).
	\end{lemma}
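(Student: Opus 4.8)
The plan is to verify the two defining conditions of finality in $\mathcal C_{loc}$: that $\SExp/{\equiv}$ is itself a member of $\mathcal C_{loc}$, and that every $X \in \mathcal C_{loc}$ admits exactly one coalgebra homomorphism into $\SExp/{\equiv}$. The first point is already essentially dispatched by the surrounding discussion: $\SExp$ is well-layered by Lemma~\ref{lem:SExp is well-layered}, hence locally well-layered, hence lies in $\mathcal C_{loc}$; and since $\SExp/{\equiv}$ is the surjective homomorphic image of $\SExp$ under $[-]_{\equiv}$, Lemma~\ref{lem:closure under arbitrary homomorphic images} gives $\SExp/{\equiv} \in \mathcal C_{loc}$.

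For the second point, fix $X \in \mathcal C_{loc}$. I would first establish \emph{existence} of a homomorphism $s_X : X \to \SExp/{\equiv}$. By local finiteness, every $x \in X$ sits inside some finite subcoalgebra $U \subseteq X$; such a $U$ is well-layered by definition of $\mathcal C_{loc}$, so Theorem~\ref{thm:unique solutions lemma} produces its unique solution $s_U : U \to \SExp/{\equiv}$. Define $s_X(x) = s_U(x)$. Well-definedness follows from the fact that if $x \in U \cap V$ for two finite subcoalgebras $U,V$, then $U \cap V$ is again a finite subcoalgebra containing $x$, hence well-layered with its own unique solution; restricting $s_U$ and $s_V$ along the inclusions $U \cap V \hookrightarrow U$ and $U \cap V \hookrightarrow V$ yields solutions to $U \cap V$, so by uniqueness $s_U(x) = s_{U\cap V}(x) = s_V(x)$. (Here one uses that the restriction of a homomorphism along a subcoalgebra inclusion is again a homomorphism, hence via Lemma~\ref{lem:solution lemma} a solution.) That $s_X$ is a homomorphism is checked locally: a map between locally finite coalgebras is a homomorphism as soon as its restriction to every finite subcoalgebra is one, and $s_X$ restricted to a finite subcoalgebra $U$ is by construction (the unique extension of) $s_U$, hence a homomorphism.

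Finally, \emph{uniqueness}: suppose $h, h' : X \to \SExp/{\equiv}$ are both homomorphisms. For any finite subcoalgebra $U \subseteq X$, the restrictions $h|_U$ and $h'|_U$ are homomorphisms $U \to \SExp/{\equiv}$, and $U$ is well-layered, so by the uniqueness half of Theorem~\ref{thm:unique solutions lemma} (equivalently, Lemma~\ref{lem:solution lemma}) we get $h|_U = h'|_U$. Since $X$ is covered by its finite subcoalgebras, $h = h'$. Combining existence and uniqueness with $\SExp/{\equiv} \in \mathcal C_{loc}$ gives that $\SExp/{\equiv}$ is final in $\mathcal C_{loc}$.

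The main obstacle is not any single step but ensuring the bookkeeping around subcoalgebras is sound: that finite intersections of finite subcoalgebras are again finite subcoalgebras (true because the subcoalgebra inclusions pull back, and $P$ preserves the relevant structure), and that the local-to-global principle for homomorphisms between locally finite coalgebras really holds. Both are routine but are the load-bearing facts; everything substantive has already been delegated to Theorem~\ref{thm:unique solutions lemma} and Lemma~\ref{lem:closure under arbitrary homomorphic images}.
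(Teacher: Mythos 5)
Your proposal is correct and follows essentially the same route as the paper: membership of $\SExp/{\equiv}$ in $\mathcal C_{loc}$ via Lemmas~\ref{lem:SExp is well-layered} and~\ref{lem:closure under arbitrary homomorphic images}, existence of a solution by patching the unique solutions $s_U$ over finite subcoalgebras (with well-definedness via the intersection $U \cap V$), and uniqueness by restricting any homomorphism to finite subcoalgebras and invoking Theorem~\ref{thm:unique solutions lemma}. The bookkeeping facts you flag (closure of finite subcoalgebras under intersection, and the local-to-global principle for homomorphisms) are exactly the ones the paper relies on as well.
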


	Together, Theorems~\ref{thm:strong soundness},~\ref{thm:unique solutions lemma}, and Lemmas~\ref{lem:closure under arbitrary homomorphic images},~\ref{lem:SExp is final} constitute steps 1 through 4 of the global approach to proving soundness and completeness of Milner's axioms for the \(1\)-free fragment of regular expressions modulo bisimulation, thus providing an alternative proof of \ref{thm:completeness}.

	\subsection{From local to global}
	As Lemma~\ref{lem:SExp is final} illustrates, there are instances in which a completeness proof taking the global approach can be obtained from the four steps in the local approach.
	This is particularly the case when the distinguished class of coalgebras is closed under binary coproducts and homomorphic images, like the well-layered precharts.
	Where \(E\) is a locally finite \(G\)-coalgebra and \(\equiv\) is a bisimulation equivalence on \(E\), assume that in the four steps of the local approach we have obtained a class \(\mathcal C\) of finite \(G\)-coalgebras such that
	\begin{itemize}
		\item[(a)] each \(X \in \mathcal C\) admits a unique homomorphism into \(E/\equiv\), 
		\item[(b)] \(\langle e \rangle \in \mathcal C\) for any \(e \in E\), and 
		\item[(c)] \(\mathcal C\) is closed under binary coproducts and homomorphic images.
	\end{itemize}
	Then the class \(\mathcal C_{loc}\) of \emph{locally \(\mathcal C\) coalgebras}, locally finite coalgebras \(X\) such that every finite subcoalgebra of \(X\) is in \(\mathcal C\), satisfies the necessary conditions for steps 2 through 4 of the global approach.

	Going through the same motions as in the prechart case, for any \(X \in \mathcal C_{loc}\) the unique solution \(s_X : X \to E/{\equiv}\) is defined locally.
	If \(x \in X\) and \(U\) is a finite subcoalgebra of \(X\) containing \(x\), then \(s_X(x) = s_U(x)\), where \(s_U\) is the unique solution to \(U\).
	Furthermore, if \(h : X \onto Y\) is a surjective coalgebra homomorphism and \(X \in \mathcal C_{loc}\), then for any finite subcoalgebra \(U\) of \(Y\), \(U = h(V)\) for some finite subcoalgebra \(V\) of \(X\).
	By closure under homomorphic images, \(U \in \mathcal C_{loc}\), and by extension \(Y \in \mathcal C_{loc}\) as well.
	Lastly, \(E \in \mathcal C_{loc}\) by definition, so \(E/{\equiv} \in \mathcal C_{loc}\) by closure under homomorphic images.
	The following theorem obtains a global approach-style proof of completeness from the four steps of the local approach when \(\mathcal C\) is closed under coproducts and homomorphic images.

	\begin{restatable}{theorem}{restatelocaltoglobal}\label{thm:local to global}
		Let \(\mathcal C\) be a class of finite \(G\)-coalgebras satisfying (a)-(c) above.
		Then \(\mathcal C_{loc}\) is closed under homomorphic images, and \(E/{\equiv}\) is a final object of \(\mathcal C_{loc}\).
	\end{restatable}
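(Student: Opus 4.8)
The plan is to turn the informal reasoning in the paragraph preceding the statement into a proof with two halves: closure of $\mathcal C_{loc}$ under homomorphic images, and finality of $E/{\equiv}$ in $\mathcal C_{loc}$. Throughout I will use the standing facts about $\Sets$-coalgebras already exploited in the excerpt: images and preimages of subcoalgebras along homomorphisms are subcoalgebras, homomorphisms carry generated subcoalgebras to generated subcoalgebras (so $q(\langle x\rangle) = \langle q(x)\rangle$), intersections and finite unions of subcoalgebras are again subcoalgebras, and a map between locally finite coalgebras is a homomorphism as soon as its restriction to every finite subcoalgebra is one.

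For the first half, let $X \in \mathcal C_{loc}$ and let $q : X \onto Y$ be a surjective homomorphism. Local finiteness of $Y$ is immediate: given $y \in Y$, pick $x$ with $q(x) = y$; then $\langle y\rangle = q(\langle x\rangle)$ is finite because $\langle x\rangle$ is. For the key point, let $U \subseteq Y$ be a finite subcoalgebra; choosing a preimage $x_u$ for each $u \in U$ and setting $V = \bigcup_{u \in U}\langle x_u\rangle$ gives a finite union of finite subcoalgebras of the locally finite $X$, hence a finite subcoalgebra of $X$, so $V \in \mathcal C$. One then checks $q(V) = U$ exactly: $U \subseteq q(V)$ since each $u = q(x_u)$, while $q(V) = \bigcup_u q(\langle x_u\rangle) = \bigcup_u\langle u\rangle \subseteq U$ because $U$ is a subcoalgebra containing each $u$. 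Thus $q|_V : V \onto U$ is a surjective homomorphism from a member of $\mathcal C$, so $U \in \mathcal C$ by (c), and therefore $Y \in \mathcal C_{loc}$.

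For the second half, I first note $E/{\equiv} \in \mathcal C_{loc}$: $E$ is locally finite by hypothesis, and each finite subcoalgebra $U$ of $E$ is a surjective homomorphic image of the finite coproduct $\bigsqcup_{e \in U}\langle e\rangle$, which lies in $\mathcal C$ by (b) and closure under binary coproducts, so $U \in \mathcal C$ by (c); hence $E \in \mathcal C_{loc}$, and $E/{\equiv}$ — a surjective image of $E$ under the homomorphism $[-]_{\equiv}$ — lies in $\mathcal C_{loc}$ by the first half. Then, for arbitrary $X \in \mathcal C_{loc}$, I build the solution $s_X : X \to E/{\equiv}$ by patching the local ones: set $s_X(x) = s_U(x)$ where $U$ is any finite subcoalgebra of $X$ containing $x$ and $s_U$ is the unique homomorphism $U \to E/{\equiv}$ furnished by (a). This is well defined because for $x \in U \cap V$ the intersection $U \cap V$ is again a finite subcoalgebra of $X$, hence in $\mathcal C$, so $s_U$ and $s_V$ both restrict to the unique $s_{U\cap V}$ by (a). Since $s_X|_U = s_U$ is a homomorphism for every finite subcoalgebra $U$, $s_X$ is a homomorphism; and any homomorphism $t : X \to E/{\equiv}$ restricts on each finite $U$ to $s_U$ by (a), so $t = s_X$. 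This yields both existence and uniqueness of homomorphisms $X \to E/{\equiv}$, i.e. finality.

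The content here is bookkeeping rather than a single hard idea, so the main obstacle is organisational: keeping careful track of how finite subcoalgebras behave under intersection, finite union, image and preimage inside a locally finite coalgebra, and in particular making sure the argument only ever invokes uniqueness from (a) on genuinely finite coalgebras — closure of $\mathcal C$ under arbitrary subcoalgebras is \emph{not} among the hypotheses. The one step that genuinely needs attention is the identity $q(V) = U$ in the homomorphic-image half: without it we would only land in some finite overapproximation of $U$ rather than in $U$ itself, and $\mathcal C$ is not known to be closed under passing to subcoalgebras of its members.
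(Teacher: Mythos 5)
Your proposal is correct and follows essentially the same route as the paper's proof: closure under homomorphic images via preimages and the identity $q(V)=U$, membership of $E/{\equiv}$ in $\mathcal C_{loc}$ by realising finite subcoalgebras of $E$ as images of finite coproducts of generated ones, and finality by patching the unique local solutions along intersections of finite subcoalgebras. Your justification that finite unions stay in $\mathcal C$ (as homomorphic images of binary coproducts) is in fact slightly more explicit than the paper's, which asserts closure under finite unions without comment.
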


	On the other hand, not every global approach-style completeness proof gives rise to a local one with such immediacy.
	For example, few of the distinguished classes of coalgebras found in the global approach-style proofs in \cite{silva2010kleene} include the DFA interpretation of every expression in the language (each such DFA fails to be locally finite).
	
	%In fact, the class \(\mathcal C_{loc}\) is generally a \emph{covariety} if \(\mathcal C\) is closed under 

% *** *** SECTION *** *** % = = = = = = = = = = = = = = = = = = = = = =
\section{Discussion and Future Work}
	In this paper, we explore a coalgebraic take on Grabmayer and Fokkink's approach, what we call the local approach, to proving the completeness of Milner's axiomatisation of the \(1\)-free star expressions modulo bisimilarity~\cite{grabmayerfokkink2020complete}.
	We use the insights gained from our exploration to give a general version of their method in \autoref{sec:coalgebras and completeness} that can be applied in other contexts.
	We do the same for a different proof method in \autoref{sec:a global approach}, what we call the global approach, originating in~\cite{jacobs2006bialgebraic,silva2010kleene,bonsanguemiliussilva2013sound}, and show how Grabmayer and Fokkink's proof can be remoulded to fit the global approach.
	At the end of the latter section, we give general conditions under which such a remoulding of a completeness proof that takes a local approach to a global one is possible. 

	A method is presented at the end of \autoref{sec:a global approach} for turning a distinguished class \(\mathcal C\) from the local approach into a class \(\mathcal C_{loc}\) suitable for a global approach.
	Interestingly, the class \(\mathcal C_{loc}\) of locally \(\mathcal C\) coalgebras is closed under arbitrary coproducts, subcoalgebras, and homomorphic images when \(\mathcal C\) is closed under subcoalgebras and homomorphic images.
	In the case of the prechart functor \(P\), and with \(\mathcal C\) the class of finite well-layered precharts, these structural qualities imply that \(\mathcal C_{loc}\) is a \emph{covariety}, meaning that it can be presented by a predicate on a \emph{cofree} \(P\)-coalgebra in some number \(\kappa\) of colours (that is, a \emph{coequation} in \(\kappa\) colours)~\cite{rutten2000universal}.
	The final \(P\)-coalgebra is a cofree coalgebra in one colour, but a covariety presented by a coequation in one colour is closed under bisimilarity~\cite{gummschroeder2001covarieties}, which we know from \autoref{fig:bad rerouting} is not the case for \(\mathcal C_{loc}\).
	We suspect that the number of colours needed to present the covariety of locally well-layered precharts is infinite, due to the infinitary nature of the layeredness condition, but more work needs to be done to be sure.

	The use value of covarieties in the pursuit of completeness theorems is generally not well-understood.
	From Theorem \ref{thm:local to global}, we expect there to be a deeper connection, but this is something that can only be uncovered by considering more examples.
	For instance, a covariety appears as the distinguished class of automata in the completeness proof in \cite{schmidkappekozensilva2021gkat}, the presenting coequation being the image of the expression language under the final coalgebra homomorphism.
	The situation in \emph{loc. cit.} was similar to Grabmayer and Fokkink's, in that it was a completeness proof which lacked the use of a full Kleene theorem, and so could be an example of the phenomenon we are alluding to.
	We think other examples could be found by giving different operational interpretations of star expression languages considered in the literature, including as~\cite{kozensmith1996kat,kappebrunetsilvazanasi2017concurrent,jipsen2014concurrent,smolkafosterhsukappekozensilva2019gkat,schmidkappekozensilva2021gkat}, as well as their fixed-point versions.
	Furthermore, we generally suspect that when going from semantics to expressions via solutions that depend only on generated subcoalgebras, a coequation should specify the distinguished class of models, thus enabling either of the two approaches to completeness discussed in this paper to go through. 
	
	%At present, step 1 of both the local and global approaches lacks nuance.
	%This is reflected in a lack of algebraic structure to \(E\) in Theorems \ref{thm:completeness strategy} and \ref{thm:global completeness}, where we simply assume that \(\equiv\) is a bisimulation equivalence and not a congruence.
	%However, a nuanced approach to soundness is taken in \cite{jacobs2006bialgebraic}, using the observation that regular expressions carry a deterministic automaton structure and languages carry a Kleene algebra structure, making them \emph{bialgebras}, and that the unique coalgebra homomorphism from regular expressions to languages coincides with the unique algebra homomorphism.
	%\(1\)-free star expressions also fit into this bialgebraic picture, even though the bialgebra structures involved do not explicitly play a role in the soundness or completeness proofs above.
	%Additionally, Silva asks in \cite{silva2010kleene} whether the specification languages presented there admit a similar bialgebraic description.
	%If this question has a positive answer, then all of the results mentioned above could fit into a bialgebraic framework
	%The hope here is that step 1 of the local and global approaches, essentially soundness, can be broken down into further steps.
	%It would be interesting to see if positioning the local and global approaches to soundness and completeness theorems in a bialgebraic context could lead to useful refinements of

\nocite{*}
\bibliographystyle{eptcs}
\bibliography{citations}

\appendix

\section{Proofs from \autoref{sec:coalgebras and 1-free star expressions}}

	\restatefunctionalbisimulations*
	
	\begin{proof}
		Let \(h : X \to Y\) be a coalgebra homomorphism.
		To see that (i) holds, observe that
		\[
			o_Y \circ h = \pi_1 \circ \langle o_Y, \partial_Y \rangle \circ h
			= \pi_1 \circ (\id_{2^A}\times\Pfin(h)) \circ \langle o_X, \partial_X\rangle
			= o_X,
		\]
		so \(x \Rightarrow a\) if and only if \(h(x) \Rightarrow a\) for any \(x \in X\), \(a \in A\).
		For (ii), observe that
		\[
		\partial_Y \circ h = \pi_2 \circ \langle o_Y, \partial_Y \rangle \circ h
		= \pi_2 \circ (\id_{2^A}\times\Pfin(h)) \circ \langle o_X, \partial_X\rangle
		= \Pfin(h) \circ \partial_X.
		\]
		Now, if \(x \trans{a} x'\), then \(h(x') \in h(\partial_X(x)(a)) = \partial_Y(h(x))(a)\), so \(h(x) \trans{a} h(x')\).
		Conversely, if \(h(x) \trans{a} y'\), then \(y' \in \partial_Y(h(x))(a) = h(\partial_X(x)(a))\), so there is an \(x' \in X\) such that \(x \trans{a} x'\) and \(h(x') = y'\).
		
		Conversely, assume that \(h\) satisfies (i) and (ii). 
		Property (i) says that \(o_Y \circ h = o_X\), and (ii) says that \(\partial_Y \circ h = \Pfin(h) \circ \partial_X\).
		It immediately follows that 
		\[
			(\id_{2^A} \times \Pfin(h)) \circ \langle o_X, \partial_X\rangle 
			= \langle o_X, \Pfin(h) \circ \partial_X\rangle = \langle o_Y \circ h, \partial_Y \circ h\rangle
			= \langle o_Y, \partial_Y\rangle \circ h.
		\]
	\end{proof}
	
	\restatestrongsoundness*
	
	\begin{proof}
		%Let \(R_{\equiv} = \{(e,f) \in \SExp^2 \mid e \equiv f\}\) and equip \(R_{\equiv}\) with the following prechart structure: \((e,f) \Rightarrow a\) if and only if \(e \Rightarrow a\), and \((e,f) \trans{a} (e',f')\) if and only if \(e \trans{a} e'\) and \(f \trans{a} f'\), for any \(a \in A\).
		%It suffices to see that \(R_{\bisim}\) is a bisimulation, as it is an equivalence relation by definition.
		%
		Let \(e,f \in \SExp\) and \(a \in A\).
		Following the characterisation of bisimulations after Lemma \ref{lem:functional bisimulations}, we show by induction on the proof of \(e \bisim f\) that (i) \(e \Rightarrow a\) if and only if \(f \Rightarrow a\), for any \(a \in A\), and (ii) \(e \trans{a} e'\) only if there exists an \(f' \in\SExp\) such that \(f \trans{a} f'\) and \(e'\equiv f'\), and (iii) \(f \trans{a} f'\) only if there exists an \(e' \in\SExp\) such that \(e \trans{a} e'\) and \(e'\equiv f'\). 
		
		If the proof of \(e \equiv f\) is composed of only a single equational axiom, then (i) and (ii) can be shown directly by considering each axiom separately.
		The property (iii) then follows from (ii) by symmetry.
		\begin{itemize}
			\item[(B1)] \(e = e_1 + e_2\) and \(f = e_2 + e_1\) for some \(e_1,e_2 \in \SExp\). 
			For (i), we have \(e \Rightarrow a\) if and only if \(e_1 \Rightarrow a\) or \(e_2 \Rightarrow a\) if and only if \(f \Rightarrow a\).
			For (ii), \(e \trans{a} e'\) if and only if \(e_1 \trans{a} e'\) or \(e_2 \trans{a} e'\) if and only if \(f \trans{a} e'\).
			\item[(B2)] \(e = e_1 + (e_2 + e_3)\) and \(f = (e_1 + e_2) + e_3\) for some \(e_1,e_2 \in \SExp\).
			For (i), we have \(e \Rightarrow a\) if and only if \(e_i \Rightarrow a\) for some \(i \in \{1,2,3\}\) if and only if \(f \Rightarrow a\).
			For (ii), \(e \trans{a} e'\) if and only if \(e_i \trans{a} e'\) for some \(i \in \{1,2,3\}\) if and only if \(f \trans{a} e'\).
			\item[(B3)] \(e = e_1 + e_1\) and \(f = e_1\) for some \(e_1 \in \SExp\).
			For (i), we have \(e \Rightarrow a\) if and only if \(e_1 \Rightarrow a\) if and only if \(f \Rightarrow a\).
			For (ii), \(e \trans{a} e'\) if and only if \(e_1 \trans{a} e'\) if and only if \(f \trans{a} e'\).
			\item[(B4)] \(e = (e_1 + e_2)e_3\) and \(f = e_1e_3 + e_2 e_3\) for some \(e_1,e_2,e_3 \in \SExp\).
			Neither \(e\) nor \(f\) terminate after any action, so (i) holds vacuously.
			For (ii), assume \(e \trans{a} e'\).
			If \(e' = e_i'e_3\) and \(e_i \trans{a} e_i'\) for some \(i \in \{1,2\}\), then \(e_ie_3 \trans{a} e_i'e_3\) for the same \(i\), and therefore \(f \trans{a} e_i'e_3\).
			If \(e_i \Rightarrow a\) and \(e_3 \trans{a} e'\) for some \(i \in \{1,2\}\), then \(e_ie_3 \trans{a} e'\) for the same \(i\), and therefore \(f \trans{a} e'\).
			Conversely, if \(f \trans{a} e'\), then \(e_ie_3 \trans{a} e'\) for some \(i \in \{1,2\}\).
			Thus, for the same \(i\), either \(e_i \trans{a} e_i'\) and \(e' = e_i'e_3\) or \(e_i \Rightarrow a\) and \(e_3 \trans{a} e'\).
			In either case, \(e \trans{a} e'\).
			\item[(B5)] \(e = (e_1e_2)e_3\) and \(f = e_1(e_2e_3)\).
			For (i), same as (B4).
			For (ii), \(e \trans{a} e'\) if and only if either \(e_1 \trans{a} e_1'\) and \(e' = (e_1'e_2)e_3\) or \(e_1 \Rightarrow a\) and \(e_2e_3 \trans{a} e'\).
			In the first case, \(e_1 \trans{a} e_1'\) and \(f \trans{a} e_1'(e_2e_3) \equiv (e_1'e_2)e_3\).
			In the second, \(f \trans{a} e'\) as well.
			The converse holds symmetrically.
			\item[(B6)] \(e = e_1 + 0\) and \(f = e_1\) for some \(e_1 \in \SExp\).
			Since \(0\) does not have outgoing transitions, this case is the same as (B3).
			\item[(B7)] \(e = 0e_1\) and \(f = 0\) for some \(e_1 \in \SExp\).
			Since \(0\) has no outgoing transitions, this case is vacuous.
			\item[(BKS1)] \(e = e_1*e_2\) and \(f = e_1(e_1 * e_2) + e_2\) for some \(e_1,e_2 \in \SExp\).
			For (i), since \(e_1(e_1 * e_2)\) does not terminate after any action, \(f \Rightarrow a\) if and only if \(e_2 \Rightarrow a\) if and only if \(e \Rightarrow a\).
			For (ii), assume \(e \trans{a} e'\).
			If \(e' = e_1*e_2\) and \(e_1 \Rightarrow a\), then \(e_1(e_1*e_2) \trans{a} e_1*e_2\).
			If \(e' = e_1'(e_1*e_2)\) and \(e_1 \trans{a} e_1'\), then \(e_1(e_1*e_2) \trans{a} e'\) and therefore \(f \trans{a} e'\).
			If \(e_2 \trans{a} e'\), then \(f \trans{a} e'\) also.
			Conversely, \(f \trans{a} f'\) if and only if either \(e_1 \trans{a} e_1'\) and \(f' = e_1'(e_1*e_2)\) or \(e_2 \trans{a} f'\) or \(e_1 \Rightarrow a\).
			In any case, \(e \trans{a} f'\).
			\item[(BKS2)] \(e = (e_1*e_2)e_3\) and \(f = e_1*(e_2e_3)\) for some \(e_1,e_2,e_3 \in \SExp\).
			For (i), follow (B4).
			For (ii), assume \(e \trans{a} e'\).
			If \(e_1 \Rightarrow a\) and \(e' = e\), then \(f \trans{a} f\) as well.
			If \(e' = (e_1'(e_1*e_2))e_3\) and \(e_1 \trans{a} e_1'\), then \(f \trans{a} e_1'(e_1*(e_2e_3)) \equiv e'\).
			If \(e_2 \trans{a} e_2'\) and \(e' = e_2'e_3\), then \(f \trans{a} e'\).
			The converse is similar.
		\end{itemize}
	
		The inductive step is broken into four cases, depending on if the last step in the proof of \(e \equiv f\) is reflexivity, symmetry, transitivity, or (RSP).
		The reflexivity and symmetry cases are trivial, by the symmetry between (ii) and (iii).
		For the transitivity rule, let \(e \equiv g\) and \(g \equiv f\) satisfy (i)-(iii).
		Clearly, \(e \Rightarrow a\) if and only if \(f \Rightarrow a\).
		If \(e \trans{a} e'\), then \(g \trans{a} g'\) for some \(g'\) such that \(e' \equiv g'\), and consequently \(f \trans{a} f'\) such that \(g' \equiv f'\).
		Hence, \(e' \equiv f'\), and therefore (ii) holds for \(e \equiv f\).
		Symmetrically, we obtain (iii).
		
		For the (RSP) case, suppose (i)-(iii) hold for \(g \equiv e g + f\).
		Then \(g \Rightarrow a\) if and only if \(f \Rightarrow a\) if and only if \(e * f \Rightarrow a\).
		Furthermore, if \(g \trans{a} g'\), then there is an \(e'\) such that \(eg + f \trans{a} e'\) and \(g' \equiv e'\).
		Whence, either (a) \(e \Rightarrow a\) and \(e' = g\), (b) \(e \trans{a} e''\) and \(e' = e''g\), or (c) \(f \trans{a} f'\). 
		\begin{itemize}
			\item[(a)] Since \(g \equiv e * f\) by (RSP), \(e * f \trans{a} e * f \equiv g\).
			\item[(b)] Since \(g \equiv e * f\) by (RSP), \(g' \equiv e' = e''g \equiv e''(e*f)\).
			Also, \(e*f \trans{a} e''(e*f)\).
			\item[(c)] Simply observe that \(e*f \trans{a} f'\) as well.
		\end{itemize}
		The converse is similar.
		This concludes the proof.
	\end{proof}
	
	\restatefundamentaltheorem*
	
	\begin{proof}
		See Lemma A.2 in \cite{grabmayerfokkink2020extended}.
		We will prove this by induction on the construction of \(e\).
		If \(e = 0\) or \(e = b \in A\), then
		\[
		0 \stackrel{\text{\footnotesize{(B6)}}}{\equiv} 0 + 0 \equiv \sum_{0 \Rightarrow a} a + \sum_{0 \trans{a} f} a~f
		\qquad\qquad
		b \stackrel{\text{\footnotesize{(B6)}}}{\equiv} b + 0 \equiv \sum_{b \Rightarrow a} a + \sum_{b \trans{a} f} a~ f
		\]
		by definition.
		Now, assuming the result for \(e_1,e_2\), there are three cases to consider.
		\begin{itemize}
			\item \(e = e_1 + e_2\).
			\begin{align*}
				e_1 + e_2 &\equiv \left( \sum_{e_1 \Rightarrow a} a + \sum_{e_1 \trans{a} e_1'} a~e_1' \right) + \left( \sum_{e_2 \Rightarrow a} a + \sum_{e_2 \trans{a} e_2'} a~e_2'\right) \\
				&\equiv  \sum_{e_1 + e_2 \Rightarrow a} a + \sum_{e_1 + e_2 \trans{a} e'} a~e_1' \tag{B1,B2}
			\end{align*}
			\item \(e = e_1e_2\).
			\begin{align*}
				e_1e_2 &\equiv \left(\sum_{e_1 \Rightarrow a} a + \sum_{e_1 \trans{a} e_1'} a~e_1'\right)e_2\\
				&\equiv \sum_{e_1 \Rightarrow a} a~e_2 + \sum_{e_1 \trans{a} e_1'} a~e_1'e_2 \tag{B2-B4}
			\end{align*}
			\item \(e = e_1 *e_2\).
			\begin{align*}
				e_1 * e_2 &\equiv e_1(e_1 * e_2) + e_2 \tag{BSK1}\\
				&\equiv \left(\sum_{e_1 \Rightarrow a} a + \sum_{e_1 \trans{a} e_1'} a~e_1'\right)(e_1 * e_2) + \left( \sum_{e_2 \Rightarrow a} a + \sum_{e_2 \trans{a}  e_2'} a~e_2'\right)\\
				&\equiv \sum_{e_1 \Rightarrow a} a~(e_1 * e_2) + \sum_{e_1 \trans{a} e_1'} a~e_1'(e_1 * e_2) + \sum_{e_2 \Rightarrow a} a + \sum_{e_2 \trans{a} e_2'} a~e_2' \tag{B2-B4}\\
				&\equiv \sum_{e_2 \Rightarrow a} a + \sum_{e_1 \Rightarrow a} a~(e_1 * e_2) + \sum_{e_1 \trans{a} e_1'} a~e_1'(e_1 * e_2) +  \sum_{e_2 \trans{a} e_2'} a~e_2'\tag{B1}\\
			\end{align*}
		\end{itemize} 
	\end{proof}
	
	\restatesolutionlemma*
	
	\begin{proof}
		Let \(s : X \to \SExp\), and \(R = \{(x, [s(x)]_{\equiv}) \mid x \in X\}\).
		Then \([-]_{\equiv} \circ s\) is a homomorphism if and only if \(R\) is a bisimulation.
	
		If \(R\) is a bisimulation, then \(x \Rightarrow a\) if and only if \(s(x) \Rightarrow a\), and \(s(x) \trans{a} e'\) if and only if there is an \(x' \in X\) such that \(x \trans{a} x'\) and \(s(x') \equiv e'\), because \([-]_{\equiv}\) is a coalgebra homomorphism.
		Thus, by (B3), \(\sum \partial(s(x))(a) \equiv \sum s(\partial(x)(a))\), so that\[
			s(x) \equiv \sum_{s(x) \Rightarrow a} a + \sum_{s(x) \trans{a} e'} a~e' \equiv \sum_{x \Rightarrow a} a + \sum_{x \trans{a} x'} a~s(x').
		\]
		It follows that \(s\) is a solution to \(X\).
	
		Conversely, assume \(s\) is a solution to \(X\), \[
			s(x) \equiv \sum_{x \Rightarrow a} a + \sum_{x \trans{a} x'} a~s(x').
		\]
		Then \(x \Rightarrow a\) if and only if \(s(x) \Rightarrow a\), and \(s(x) \trans{a} e'\) if and only if there is an \(x' \in X\) such that \(x \trans{a} x'\) and \(s(x') \equiv e'\), because \(\equiv\) is a bisimulation.
		Whence, \(R\) is a bisimulation because \([-]_{\equiv}\) is a coalgebra homomorphism.
	\end{proof}

\section{Proofs from \autoref{sec:coalgebras and completeness}}

	\restatecompletenessstrategy*
	
	\begin{proof}
		Let \(e, f \in E\).
		Since \(\equiv\) is a bisimulation equivalence, \(e \equiv f\) implies \(e \bisim f\) for any \(e,f \in E\) by definition. 
		Therefore, it suffices to show the converse.
	
		Suppose \(e \bisim f\), and let \(X = \langle e \rangle\) and \(Y = \langle f \rangle\).
		As \(\mathcal C\) is collapsable, there is a \(Z \in \mathcal C\) and a pair of homomorphisms \(p : X \to Z, q : Y \to Z\) such that \(p(e) = q(f)\).
		Let \(s : Z \to E/{\equiv}\) be a homomorphism, and consider the diagram
		\[\begin{tikzcd}[ampersand replacement=\&]
			X \ar[d, hook] \ar[r, "{p}"] \& Z \ar[d, "s"] \& Y \ar[l, "q"'] \ar[d, hook]\\
			E \ar[r, "{[-]_{\equiv}}"] \& E/{\equiv} \& E \ar[l, "{[-]_{\equiv}}"']
		\end{tikzcd}\]
		Since \(X\) and \(Y\) admit a unique homomorphism into \(E/{\equiv}\), this diagram commutes.
		In particular, we have \([e]_{\equiv} = s(p(e)) = s(q(f)) = [f]_{\equiv}\), meaning \(e \equiv f\).
	\end{proof}
	
	\begin{remark}
		The assumption that \(\langle e\rangle \in \mathcal C\) for all \(e \in E\) is also not completely necessary.
		We could instead assume that \(E/{\equiv} = \bigcup \{s(X) \mid \text{\(s\) solves \(X \in \mathcal C\)}\}\).
		The proof of Theorem \ref{thm:completeness strategy} would then take \(X\) and \(Y\) to be any coalgebras in \(\mathcal C\) such that \(s_X(x) = [e]_{\equiv}\) and \(s_Y(y) = [f]_\equiv\) for some \(x \in X\) and \(y \in Y\) and solutions \(s_X\) and \(s_Y\) to \(X\) and \(Y\) respectively.
	\end{remark}

\section{Proofs from \autoref{sec:layered loop existence and elimination}}

	\begin{proposition}
		A locally finite prechart \(X\) is well-layered if and only if it has an \acro{LLEE}-witness \cite{grabmayerfokkink2020complete}.
	\end{proposition}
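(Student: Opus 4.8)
The plan is to prove the two implications separately by exhibiting explicit translations between layering witnesses (Definition~\ref{def:layering witness}) and \acro{LLEE}-witnesses in the sense of \cite{grabmayerfokkink2020complete}, as sketched in Remark~\ref{rem:llee from well-layeredness}, and checking in each case that the result satisfies the conditions of the target notion.

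For the implication from well-layeredness to the existence of an \acro{LLEE}-witness, suppose \(X^\bullet = (X, \langle \underline o, \underline\partial^\bullet\rangle)\) is a layering witness for \(X\). First I would confirm that \(|x|_{en}\) is well-defined for every \(x \in X\): local finiteness makes \(\langle x\rangle\) finite and layeredness makes \((\langle x\rangle, \diredge)\) acyclic, so the longest \(\diredge\)-path out of \(x\) with pairwise distinct vertices has a well-defined length. Then, keeping the output relation fixed, replace each entry transition \(x \eo y\) with the weighted transition \(x \trans{[\,|x|_{en}\,]} y\) and each body transition \(x \bo y\) with \(x \trans{[0]} y\). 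The substance of this direction is to derive the \acro{LLEE}-witness axioms (W1) and (W2) of \cite{grabmayerfokkink2020complete} from the five clauses of Definition~\ref{def:layering witness}: that every positively weighted transition opens a loop returning to its source (full specification), that these loops are strictly well-nested as witnessed by the weights (the strict decrease of \(|\cdot|_{en}\) along \(\diredge\)-edges), and that no successful termination occurs strictly inside a loop (goto-freeness).

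For the converse, suppose \(X\) carries an \acro{LLEE}-witness. Define an entry/body labelling on \(X\) by declaring \(x \eo y\) whenever \(x \trans{a}_{[n]} y\) with \(n > 0\) and \(y \to^+ x\), and \(x \bo y\) in every other case, leaving the outputs unchanged. Then I would verify the five clauses of Definition~\ref{def:layering witness}: local finiteness is inherited verbatim; flatness holds because every resulting entry transition lies on a minimal cycle and, by (W1) and (W2)(b), each minimal cycle contains exactly one entry transition, so no transition is simultaneously an entry and a body transition; the no-nontrivial-\(\bo\)-cycle clause and the implication ``\(x \eo y\) with \(y \neq x\) implies \(y \to^+ x\)'' of full specification follow from local finiteness, (W1), and the standing assumption on \acro{LLEE}-witnesses (cf.~\cite[Proposition~C.1]{schmid2021star}); layeredness, i.e.\ acyclicity of \(\diredge\), follows from strict well-nestedness of the loops recorded by the weights; and goto-freeness transports the no-termination-inside-a-loop condition across the translation.

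The main obstacle is expected to be bookkeeping rather than conceptual: lining up the precise formulation of the \acro{LLEE}-witness axioms (W1) and (W2) in \cite{grabmayerfokkink2020complete} with the five clauses of Definition~\ref{def:layering witness}, and, most delicately, the flatness argument, which hinges on the fact that a minimal cycle carries a unique entry transition. Note that the two translations need not be mutually inverse, and I would not attempt to prove a bijection: it is enough that each translation exists, and in fact a prechart may admit several layering witnesses, as remarked after Lemma~\ref{lem:SExp is well-layered} for \(\langle (aa)*0\rangle\).
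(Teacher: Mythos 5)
Your proposal is correct and follows essentially the same route as the paper: both directions use exactly the translation sketched in Remark~\ref{rem:llee from well-layeredness} (entry transitions weighted by \(|x|_{en}\), body transitions by \(0\), and conversely), and you identify the same key verification points — well-definedness of \(|x|_{en}\) from local finiteness and acyclicity of \(\diredge\), the flatness argument via uniqueness of the entry transition on a minimal cycle, full specification from (W1) and the standing assumption on \acro{LLEE}-witnesses, and goto-freeness matching the no-termination-inside-a-loop condition. The only difference is level of detail: the paper additionally spells out the loop-chart conditions (L1)--(L3) when checking (W2), which your sketch subsumes under ``deriving (W1) and (W2).''
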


	\begin{proof}
		To obtain an \acro{LLEE}-witness from a layering witness \(X^\bullet\), let \(\hat X\) be the numerical entry/body labelling of \(X\) (as in \cite{grabmayerfokkink2020complete}) obtained as follows: label each \(x \trans{a} y\) in \(X\) such that \(x\eo y\) in \(X^\bullet\) with a weighted transition \(x \trans{a}_{[|x|_{en}]} y\), where\[
		|x|_{en} = \max\{m \in \N \mid \text{\((\exists x_1,\dots,x_m)~x \diredge x_1 \diredge \cdots \diredge x_m\) s.t. \(x \neq x_i \neq x_j\) for \(i \neq j\)}\},
		\] 
		and each \(x \trans{a} y\) such that \(x \bo y\) with \(x \trans{a}_{[0]} y\). 
		This is a well-defined translation because we have assumed that \(\langle x \rangle\) is finite and \((\langle x \rangle, \diredge)\) is acyclic.
		We proceed to check each of the \acro{LLEE}-witness conditions.
		\begin{enumerate}
			\item[(W1)] Since \(X\) is locally finite, every infinite path in \(X\) contains a cycle.
			It follows that no body loop can exist in \(\hat X\), because \(X^\bullet\) is fully specified and therefore has no body cycles.
			\item[(W2)] Let \(v \in X\) and \(n \in \N\), and recall that \(X_{\hat X}(v,n)\) is the union of all paths of the form \[
			v \trans{a}_{[n]} x_1 \trans{b_1}_{[0]} x_2 \trans{\ }_{[0]} \cdots \trans{ }_{[0]} x_m \trans{b_m}_{[0]} y
			\]
			such that \(v \nin \{x_1,\dots, x_m\}\).
			For \(n \neq |v|_{en}\), \(X_{\hat X}(v, n)\) is the single vertex \(v\), so it suffices to show that (a) \(X_{\hat X}(v, |v|_{en})\) is a loop chart when \(|v|_{en}>0\), and (b) every transition in \(\hat X\) whose source in \(X_{\hat X}(v, n)\) is not equal to \(v\) carries a numerical label strictly less than \(|v|_{en}\).
			\begin{itemize}
				\item[(a)] This consists of three parts, corresponding to the loop chart conditions (L1), (L2), and (L3) in \cite{grabmayerfokkink2020complete}.
				The condition (L1) follows from full specification, since if \(v \trans{a}_{[|v|_{en}]} y\), then \(y \to^+ v\).
				This implies that there is a minimal cycle (a cycle \(x_1 \to \cdots \to x_k\) such that \(x_i \neq x_j\) when \(i \neq j\)) containing \(v \eo y\), in which every other transition must be a body transition by layeredness.
				Condition (L2) follows from the lack of body cycles in \(X^\bullet\): Since an infinite path in \(X\) must contain a minimal cycle, and every cycle in \(X_{\hat X}(v,|v|_{en})\) contains \(v\), every infinite path from \(v\) in \(X_{\hat X}(v,|v|_{en})\) eventually passes through \(v\). 
				Condition (L3) follows from \(X^\bullet\) being goto-free.
				\item[(b)] This is a direct consequence of \(X\) being layered, since \((X^\bullet, \diredge)\) and \((\hat X, \diredge)\) coincide.
			\end{itemize}
		\end{enumerate} 
		
		To obtain a layering witness \(X^\bullet\) from an \acro{LLEE}-witness \(\hat X\), replace each \(x \trans{a}_{[n]} y\) in \(\hat X\) by \(x \eo y\) if \(n > 0\) and \(y \to^+ x\), or by \(x \bo y\) otherwise.
		We now proceed to check the conditions in Definition~\ref{def:layering witness}.
		\begin{enumerate}
			\item[(i)] This entry/body labelling is locally finite because \(X\) is.
			\item[(ii)] That \(X^\bullet\) is flat can be shown in two steps.
			First, observe that every transition \(x \trans{a}_{[n]} y\) such that \(x \eo y\) in \(X^\bullet\) appears in a cycle, and therefore also a minimal cycle \(L\) in \(\hat X\).
			Since \(L\) is minimal and \(\hat X\) is layered, there is exactly one transition \(x' \trans{b}_{[m]} y'\) in \(L\) with \(m > 0\).
			Every non-\((x \to y)\) transition in \(L\) is a \(x \bo y\) transition in \(X^\bullet\), so \(x \trans{a}_{[n]} y\) must carry a numerical label \(n > 0\), and therefore must be the unique entry transition in \(L\).
			If it is also the case that \(x \trans{b}_{[m]} y\), and we replace \(x \trans{a}_{[n]} y\) by \(x \trans{b}_{[m]} y\) in \(L\) to obtain \(L'\), then either \(m > 0\) or \(L'\) is a body cycle.
			We have assumed \(\hat X\) is free of body cycles, so \(m > 0\) if and only if \(n > 0\).
			This establishes the flatness of \(X^\bullet\).
			\item[(iii)] Since \(x \eo y\) implies \(y \to^+ x\) by construction, it suffices to see that \(\neg (x \bo^+ x)\) for all \(x \in X\).
			So, assume that \(x \to^+ x\) in \(X\).
			Then there is a minimal cycle \(L\) in \(\hat X\) containing \(x\).
			Then \(L\) must contain a transition of the form \(x' \trans{a}_{[n]} y'\) with \(n > 0\) by (W2)(b). 
			Since \(y' \to^+ x'\), \(x' \eo y'\) in \(X^\bullet\).
			This means that \(L\) does not correspond to a body loop in \(X^\bullet\), so \(\neg(x \bo^+ x)\).
			\item[(iv)] \(X^\bullet\) is layered because \((X^\bullet, \diredge)\) is a subgraph of \((\hat X, \diredge)\) and \((\hat X, \diredge)\) is acyclic.
			\item[(v)] \(X^\bullet\) is goto-free by (W2)(a)(L3) from \cite{grabmayerfokkink2020complete}.
		\end{enumerate}
	\end{proof}

	\begin{lemma}
		If \(X\) is a well-layered prechart and \(U \subseteq X\) is a subcoalgebra of \(X\), then \(U\) is well-layered.
	\end{lemma}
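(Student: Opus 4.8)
The plan is to show that restricting a layering witness for $X$ to $U$ yields a layering witness for $U$. Fix a layering witness $X^\bullet = (X, \langle \underline o, \underline\partial^\bullet\rangle)$ for the underlying transition system $\underline X$. Since $U$ is a subcoalgebra of the prechart $X$, the inclusion $U \into X$ is a $P$-coalgebra homomorphism, so by Lemma~\ref{lem:functional bisimulations} we have $\partial_X(u)(a) \subseteq U$ for every $u \in U$ and $a \in A$; hence $\underline\partial(u) \subseteq U$, i.e.\ $U$ is forward-closed under $\to$, and therefore also under $\to^+$. Because $\pi_2(\underline\partial^\bullet(u)) = \underline\partial(u) \subseteq U$, every $\mathsf e$- and $\mathsf b$-successor of a state of $U$ already lies in $U$, so $\langle \underline o, \underline\partial^\bullet\rangle$ restricts to a $2 \times \Pfin(\{\mathsf e,\mathsf b\}\times(-))$-coalgebra $U^\bullet = (U, \langle \underline o|_U, \underline\partial^\bullet|_U\rangle)$, and this is an entry/body labelling of $\underline U$ since the defining identity $\underline\partial(u) = \pi_2(\underline\partial^\bullet(u))$ continues to hold for $u \in U$.

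It then remains to check the five conditions of Definition~\ref{def:layering witness} for $U^\bullet$, and the point is that all of the relevant relations --- $\eo$, $\bo$, $\bo^+$, $\diredge$, and $\Rightarrow$ --- on $U^\bullet$ are obtained by restricting the corresponding relations on $X^\bullet$ (any $\mathsf e/\mathsf b$-path in $U^\bullet$ is literally a path of the same shape in $X^\bullet$), so they can only shrink. Thus flatness, goto-freeness, condition 3(a), and layeredness transfer verbatim: for instance, $\diredge$ on $U^\bullet$ is contained in $\diredge$ on $X^\bullet$, so acyclicity of $(X, \diredge)$ forces acyclicity of $(U, \diredge)$, and $\neg(u \bo^+ u)$ in $X^\bullet$ gives $\neg(u \bo^+ u)$ in $U^\bullet$. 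Local finiteness holds because for $u \in U$ the smallest subcoalgebra of $U$ containing $u$ equals $\langle u\rangle$ computed in $X$ (each is a subcoalgebra of the other and contains $u$), which is finite since $X^\bullet$ is locally finite.

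The one condition that is not purely ``downward'' is 3(b): if $u \eo u'$ in $U^\bullet$ with $u \neq u'$, then $u \eo u'$ in $X^\bullet$, so $u' \to^+ u$ in $X$ --- but we need this witnessed \emph{inside} $U$. This is exactly where forward-closure is used: every state on a path from $u'$ to $u$ is an iterated $\to$-successor of $u' \in U$, hence lies in $U$, so $u' \to^+ u$ holds in $U^\bullet$ as well. I expect this (very mild) bookkeeping about forward-closure, together with the identification of $\langle u\rangle$ in $U$ and in $X$ for local finiteness, to be the only real content of the proof; every other condition is inherited by restriction. Combined with Lemma~\ref{lem:SExp is well-layered}, this gives that $\langle e\rangle$ is well-layered for every $1$-free star expression $e$.
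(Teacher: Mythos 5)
Your proof is correct and follows essentially the same route as the paper's: restrict the layering witness to $U$, observe that forward-closure of the subcoalgebra makes the restriction a well-defined entry/body labelling, and check the five conditions, with the ``downward'' conditions inherited by containment of the relations and full specification (b) plus local finiteness being the only places where forward-closure is genuinely needed. Your treatment of condition 3(b) is in fact slightly more explicit than the paper's, which subsumes it under the remark that $U^\bullet$ is closed under the transitions of $X^\bullet$.
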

	
	\begin{proof}
		Let \(X^\bullet\) be a layering witness for \(X\).
		Since \(U\) is a subcoalgebra of \(X\), and \(X^\bullet\) and \(X\) share an underlying \textsf{tswo}, restricting \(X^\bullet\) to the states in \(U\) gives a subcoalgebra \(U^\bullet\) of \(X^\bullet\).
		It suffices to see that \(U^\bullet\) is a layering witness for \(U\).
		%Thus, it suffices to see that a subcoalgebra of a layering witness is a layering witnesses.
		
		It is easily seen that \(U^\bullet\) is locally finite and flat.
		Furthermore, \(U^\bullet\) is closed under the transitions of \(X^\bullet\), so \(U^\bullet\) is easily seen to be fully specified as well.
		To see that \(U^\bullet\) is layered, it suffices to see that \((U, \diredge)\) is a directed subgraph of \((X,\diredge)\).
		To this end, simply observe that \(x \diredge y\) in \(U^\bullet\) if and only if \(x \in U\) and \(x \diredge y\) in \(X^\bullet\), since \(U^\bullet\) is closed under the transitions of \(X^\bullet\).
		Similarly, \(U^\bullet\) is goto-free because \(x \diredge y \Rightarrow\) in \(U^\bullet\) if and only if \(x \in U\) and \(x \diredge y \Rightarrow\) in \(X^\bullet\) because \(U^\bullet\) is closed under the transitions of \(X^\bullet\).
	\end{proof}

	\begin{lemma}\label{lem:one_of_these_forms}
		Every minimal loop in \(\SExp\) is of the form
		\begin{equation*}
			(e*f)g_1\dots g_l \to e_1(e*f)g_1\dots g_l \to \cdots \to e_m(e*f)g_1\dots g_l \to (e*f)g_1\dots g_l \\
		\end{equation*}
		for some \(e,e_i,f,g_i \in \SExp\) and \(l \ge 0\), where the sequential composition operation associates left by default.%, up to associativity of sequential composition.
	\end{lemma}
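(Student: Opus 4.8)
The plan is to read off, from the operational rules in \autoref{fig:prechart_structure_of_sexp}, how transitions act on the unique way of writing a star expression as a left-associated product $e=g_0\,g_1\cdots g_l$ whose head $g_0$ is \emph{not} itself a product, i.e.\ $g_0\in A\cup\{0\}$ or $g_0$ is a sum or a star. A routine induction on $l$ (using that a product has no output, so that an ``exit'' can only occur from the head) shows that every transition out of $e$ is either $e\trans{a}f\,g_1\cdots g_l$ with $f\in\partial_\SExp(g_0)(a)$ --- the tail being untouched --- or, when $l\ge1$ and $g_0\Rightarrow a$, the ``shrink'' $e\trans{a}g_1\cdots g_l$. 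Moreover, for a non-product head one reads off that $\partial_\SExp(g_0)(a)$ is empty if $g_0\in A\cup\{0\}$, is $\partial_\SExp(f_1)(a)\cup\partial_\SExp(f_2)(a)$ if $g_0=f_1+f_2$, and is $\{\,f_1*f_2\mid f_1\Rightarrow a\,\}\cup\{\,f(f_1*f_2)\mid f_1\trans{a}f\,\}\cup\partial_\SExp(f_2)(a)$ if $g_0=f_1*f_2$. A second easy induction, on the length of a transition path, gives the auxiliary fact that every state reachable from a fixed $e$ is a left-associated product of \emph{subexpressions} of $e$; in particular the head of any reachable state is a non-product subexpression of $e$, and all factors occurring along the way have size at most $|e|$.

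With this in hand I would first locate the ``header'' of a minimal loop $x_0\trans{a_1}x_1\trans{a_2}\cdots\trans{a_n}x_n=x_0$ (so $x_0,\dots,x_{n-1}$ are pairwise distinct). Pick $x_0$ of minimal size among the states of the loop and write $x_0=g_0\,g_1\cdots g_l$. Its head $g_0$ cannot be $0$ or an atom: then $x_0$ either has no outgoing transition, or its only outgoing transition strictly decreases size and so produces a strictly smaller state that also lies on the loop --- neither possibility is available to the minimal-size state of a loop. It cannot be a sum $f_1+f_2$ either: by the auxiliary fact the successors of $x_0$ are products of subexpressions of $f_1$ or of $f_2$, from which $x_0$ can never be re-reached --- by a direct size argument when $l=0$, and because a return to $x_0=(f_1+f_2)\,g_1\cdots g_l$ would force $g_1=(f_1+f_2)\,g_1$, impossible by size, when $l\ge1$. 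Hence $g_0=f_1*f_2$, i.e.\ $x_0=(f_1*f_2)\,g_1\cdots g_l$ is of header shape (in particular a loop can exist only if some subexpression is a star).

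It remains to trace the loop starting from $x_0$. If $n=1$, the transition $x_0\trans{a_1}x_0$ must, by the computation of $\partial_\SExp(g_0)$, use the clause $f_1\Rightarrow a_1$; this is precisely the statement with $m=0$. If $n\ge2$, the first transition is neither the shrink nor a ``loop-exit'' into $\partial_\SExp(f_2)(a_1)$ --- the same self-containment/size argument shows the loop could never return to $x_0$ afterwards --- nor the self-unfold (which would give $x_1=x_0$); so it is a body entry $x_1=e_1\,(f_1*f_2)\,g_1\cdots g_l$ with $f_1\trans{a_1}e_1$. From here I would argue by induction along the loop that the suffix $(f_1*f_2)\,g_1\cdots g_l$ is preserved and that each state has the form $e_j\,(f_1*f_2)\,g_1\cdots g_l$, until the active prefix $e_j$ terminates and the transition returns us to $x_0$; a transition that instead repeated a state or produced a nested cycle is ruled out by pairwise distinctness, so the loop is exactly as claimed, with $m=n-1$.

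I expect the main obstacle to be the exclusion of a sum-headed minimal state and, what is essentially the same point, the step that rules out a ``loop-exit'' ever returning to the header. Both rest on the auxiliary fact that reachable states are built from subexpressions of the starting expression, which bounds the available building blocks and forbids a consumed factor from reappearing; making the ``active prefix / pending factors'' bookkeeping precise in the tracing argument is where the real effort lies, while the decomposition lemma and the two inductions are routine.
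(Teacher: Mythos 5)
Your proof is sound in outline but takes a genuinely different route from the paper's. The paper proves, by structural induction on an expression \(h\) from which the loop is reachable, that \emph{every} cycle in \(\langle h\rangle\) already has the stated shape up to a cyclic shift; the case split on the top-level operator of \(h\) does all the bookkeeping implicitly, and your auxiliary ``reachable states are left-associated products of subexpressions'' lemma never needs to be isolated. Your argument instead fixes a minimal-size state \(x_0\) of the loop, classifies its head, and then traces the loop; this buys a sharper structural picture (only a star head can regenerate itself, so the size-minimal state of any cycle is star-headed), at the price of the reachability bookkeeping you correctly identify as the crux.

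One step needs repair, though the tools you already set up suffice. When \(l\ge 1\), your justification for why the loop cannot return to a sum-headed \(x_0=(f_1+f_2)g_1\cdots g_l\) (and, identically, why a ``loop-exit'' into \(\partial_\SExp(f_2)\) from a star-headed \(x_0\) is fatal) is the claim that a return ``would force \(g_1=(f_1+f_2)g_1\)''. As stated this does not follow: after the head is consumed the trajectory may shrink into the tail, reach some \(g_{j}\cdots g_l\), and then regrow a long prefix out of derivatives of \(g_{j}\), so no self-containment equation on \(g_1\) is forced. The correct argument is two-pronged: (i) if the loop ever reaches a proper suffix \(g_j\cdots g_l\) of \(x_0\), that state lies on the loop and has size strictly less than \(|x_0|\), contradicting your choice of \(x_0\) as size-minimal --- so the tail \(g_1\cdots g_l\) is in fact never touched; and (ii) with the tail frozen, a return would require \(f_1+f_2\) (resp.\ \(f_1*f_2\)) to arise as an iterated derivative of \(f_1\) or \(f_2\) (resp.\ of \(f_2\)); since it is not syntactically a product, your auxiliary lemma forces it to be a single-factor product, i.e.\ a subexpression of \(f_1\) or \(f_2\), which is impossible by size. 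You invoke size-minimality only against the immediate shrink out of \(x_0\); extending it to rule out all later shrinks into the tail is the missing ingredient. With that fix the tracing argument and the conclusion go through as you describe.
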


	\begin{proof}
		Let \(h \in \SExp\).
		By induction on the construction of \(h\), we begin by showing that if
		\begin{equation}\label{eq:h_0 loop}
			h \to^+ h_0 \to h_1 \to \cdots \to h_m \to h_0
		\end{equation}
		in \(\SExp\), then every \(h_i \in \{(e*f)g_1\dots g_l, e_i(e*f)g_1\dots g_l\}\) for some fixed \(e,e_i,f,g_i \in \SExp\).% modulo associativity.
		The base case is vacuous, for if \(h = 0\) or \(h = a \in A\), then no such loop in \eqref{eq:h_0 loop} exists.
		For the inductive step, assume that the above holds for some \(j\) and \(k\) in \(\SExp\).
		There are three cases to consider.
		\begin{itemize}
			\item[(1)] In case \(h = j + k\), and given a loop such as \eqref{eq:h_0 loop}, either \(j \to^+ h_0\) or \(k \to^+ h_0\).
			By the induction hypothesis, every \(h_i \in \{e_i(e*f)g_1\dots g_l, e_i(e*f)g_1\dots g_l\}\) for some \(e,e_i,f,g_i \in \SExp\), as desired.

			\item[(2)] In case \(h = jk\), either \(k \to^+ h_i\) for some \(i\), or \(h_i = j_ik\) for all \(i \le m\) and\[j \to^+ j_0 \to j_1 \to \cdots \to j_m \to j_0,\]
			for some \(j_0,\dots,j_m \in \SExp\).
			In the former case, we simply appeal to the inductive hypothesis again.
			In the latter, observe that by the inductive hypothesis \(j_i \in \{(e*f)g_1\cdots g_l, e_i(e*f)g_1\cdots g_l\}\) for each \(i\).
			Whence, \(h_i \in \{(e*f)g_1\dots g_lk, e_i(e*f)g_1\dots g_lk\}\), as desired. 

			\item[(3)] In case \(h = j*k\), either (a) \(k \to^+ h_i\) for some \(i \le m\), or (b) \(j = j_0 \to j_1 \to\cdots\to j_m \Rightarrow\) and \(h_i = j_i(j*k)\), or (c) \(j \to^+ j_0 \to \cdots \to j_m \to j_0\) as in (2).
			Sub-case (a) is settled by the induction hypothesis.
			In (b), every of the expressions is of the desired form (here, \(l = 0\)), so we are done.
			Finally, in (c), by the induction hypothesis \(j_i \in \{(e*f)g_1\dots g_l, e_i(e*f)g_1\dots g_l\}\) for each \(i\).
			Hence, \(h_i = j_i(j*k) \in \{(e*f)g_1\dots g_l(j*k), e_i(e*f)g_1\dots g_l(j*k)\}\) for each \(i\) as desired.
		\end{itemize}

		Suppose now that \(h_0 \to h_1 \to \cdots \to h_m \to h_0\)	is a minimal cycle in \(\SExp\).
		By the previous observation, we may assume without loss of generality this cycle is of the form
		\begin{mathpar}
			e_0(e*f)g_1\dots g_l \to e_1(e*f)g_1\dots g_l \to \cdots \to (e*f)g_1\dots g_l \to e_p(e*f)g_1\dots g_l \to \cdots \\ \cdots\to e_m(e*f)g_1\dots g_l \to e_0 (e*f)g_1\dots g_l
		\end{mathpar}
		with \(e_0 \to e_1 \to \cdots \to e_m \to e_0\).
		The desired result can be seen simply by shifting the indices ahead by \(p\) modulo \(m+1\).
	\end{proof}

	\restatestarexpressionwelllayered*

	\begin{proof}
		First, we check that \(\SExp\) is locally finite.
		To this end, define \(N : \SExp \to \N\) inductively, as
		\begin{mathpar}
			N(0) = N(a) = 1
			\and
			N(e_1 + e_2) = N(e_1*e_2) = N(e_1) + N(e_2)
			\and
			N(e_1e_2) = N(e_1) + N(e_1)N(e_2)
		\end{mathpar}
		It follows by induction on \(e\) that the number of states appearing in \(\langle e \rangle\) is at most \(N(e)\).
		If \(e = 0\) or \(e = a \in A\), then clearly \(|X| \le 1 = N(e)\).
		For the inductive step, let \(X_i = \langle e_i \rangle\) for \(i = 1,2\) and \(X = \langle e \rangle\), and \(|-|\) count the states in a coalgebra.
		There are three cases to consider.
		\begin{itemize}
			\item \(e = e_1 + e_2\).
			\begin{align*}
				|X| &= |\{e' \mid e_1 \to^+ e'\text{ or } e_2 \to^+ e' \text{ or } e' = e_1 + e_2\}| \\
				&\le |X_1| + |X_2| \le N(e_1) + N(e_2) = N(e_1 + e_2) =N(e). 
			\end{align*}
			\
			\item \(e = e_1e_2\).
			\begin{align*}
				|X| &= |\{e' \mid \text{\(e' = e_1'e_2\) and \(e_1 \to^+ e_1'\), or \(e_2 \to^+ e'\) and \(e_1 \to^*\Rightarrow a\), or \(e' = e\)}\}|\\
				&\le |X_1| + |X_1||X_2| \le N(e_1) + N(e_1)N(e_2) = N(e).
			\end{align*}
			\item \(e = e_1*e_2\).
			\begin{align*}
				|X| &= |\{e' \mid \text{\(e' = e_1'(e_1*e_2)\) and \(e_1 \to^+ e_1'\), or \(e_2 \to^+ e'\), or \(e' = e\)}\}|\\
				&\le |X_1| + |X_2| \le N(e_1) + N(e_2) = N(e).
			\end{align*}
		\end{itemize}
		This shows that \(\SExp\) is locally finite, and thus any entry/body labelling of \(\SExp\) must also be locally finite.

		Next, we provide \(\SExp\) with a layering witness.
		Define \(\SExp^\bullet\) via the proof rules below:
		\begin{mathpar}
			\infer{e_i \to f}{e_1 + e_2 \bo f}
			\and
			\infer{e_1 \eo f}{e_1e_2 \eo fe_2}
			\and
			\infer{e_1 \bo f}{e_1e_2 \bo fe_2}
			\and
			\infer{e_1 \Rightarrow}{e_1e_2 \bo e_2}
			\and
			\infer{e_1 \Rightarrow}{e_1*e_2 \eo e_1*e_2}
			\and
			\infer{e_2 \to f}{e_1*e_2 \bo f}
			\and
			\infer{e_1 \to f \and (\exists g)\ e_1 \to^* g \Rightarrow}{e_1*e_2 \eo f(e_1*e_2)}
			\and
			\infer{e_1 \to f \and \neg ((\exists g)\ e_1 \to^* g \Rightarrow)}{e_1*e_2 \bo f(e_1*e_2)}
		\end{mathpar}
		Here, \(e \to f\) if either \(e \eo f\) or \(e \bo f\), and \(e \Rightarrow\) in \(\SExp^\bullet\) if and only if \(e \Rightarrow\) in \(\underline{\SExp}\).
		It is easy to see this labelling is flat.
		By Lemma~\ref{lem:one_of_these_forms}, every minimal loop in \(\SExp\) is of the form
		\begin{equation*}\label{eq:loop form}
			(e*f)g_1\dots g_l \to e_1(e*f)g_1\dots g_l \to \cdots \to e_m(e*f)g_1\dots g_l \to (e*f)g_1\dots g_l ,\\
		\end{equation*}
		where \(e \to e_1 \to \cdots \to e_m \Rightarrow\).
		In \eqref{eq:loop form}, the initial transition is an \(\eo\)-transition, so \(\SExp^\bullet\) satisfies \(\neg(x \bo^+ x)\) for all \(x \in \SExp\).
		Conversely, an easy induction on the proof of \(e \eo f\) reveals that there is an \(l \ge 0\) such that \(e = (e_1*f_1)g_1\dots g_l\) for some \(e_1,f_1,g_i \in \SExp\), and such that either \(f = e_1'e\) and \(e_1' \to^+ \Rightarrow\), or \(f = e\).
		In the first case, since \(e_1' \to^+\Rightarrow\), there is a path \(f \to^+ e\).
		Note that there are no paths \(f \to^+ \Rightarrow\) that do not pass through \(e\), as \(e_1e_2\) does not terminate for any \(e_1,e_2 \in \SExp\), thus \(\SExp^\bullet\) is goto-free.
		In the second case, \(f \to e\) by assumption.
		This establishes that \(\SExp^\bullet\) is fully specified and goto-free.

		Lastly, we check that \(\SExp^\bullet\) is layered.
		For this, we follow \cite{grabmayerfokkink2020complete} and define the \define{loop depth} \(|e \to f|_{ld}\) of a transition in an entry/body labelling by the following rules:
		\begin{mathpar}
			\infer{e \bo f}{|e \to f|_{ld} = 0}
			\and
			\infer{|e_1 \to f|_{ld} = n}{|e_1e_2 \to fe_2|_{ld} = n}
			\and
			\infer{e_1 \to f \and (\exists g)\ e_1 \to^* g \Rightarrow}{|e_1*e_2 \to f(e_1*e_2)|_{ld} = |e_1|_* + 1}
			\and
			\infer{e_1 \Rightarrow}{|e_1*e_2 \to e_1*e_2|_{ld} = |e_1|_* + 1}
		\end{mathpar}
		where \(|e|_*\) is the \define{star height} of the expression \(e\), defined as
		\begin{mathpar}
			|0|_* = |a|_* = 0
			\and
			|e + f|_* = |ef|_* = \max\{|e|_*, |f|_*\}
			\and
			|e*f|_* = 1 + \max\{|e|_*, |f|_*\}
		\end{mathpar}
		It now suffices to see that for any minimal path\[
			e_1 \eo e_1' \bo e_1'' \bo \cdots \eo e_2 \bo e_2' \bo \cdots \bo e_3 \eo \cdots \bo e_n,
		\]
		we find \(|e_{i+1} \to e_{i+1}'|_{lb} < |e_{i} \to e_{i}'|_{lb}\) for each \(i\).
		For a fixed \(i\), \(e_i\) is necessarily of the form \((e * f)g_1\dots g_l\), and \(e_i' = e'(e * f)g_1\dots g_l\) for \(e \to e'\), and \(e \to^* \Rightarrow\).
		Moreover, \(e_{i+1} = e^{(k)}(e*f)g_1\dots g_l\) where \(e \to e' \to \cdots \to e^{(k)}\).
		So, it suffices to see that \(|e^{(k)}|_* \le |e|_*\), since \(|e_{i+1} \to e_{i+1}'|_{ld} \le |e^{(k)}|_* -1\).
		This can be shown by induction on construction of \(e\) as a term in \(\SExp\).
	\end{proof}

	\restateuniquesolutionslemma*
	
	\noindent\emph{Proof.} 
	We closely follow Grabmayer and Fokkink's proof in \cite{grabmayerfokkink2020extended}.
	Let \(X^\bullet\) be a layering witness for \(X\), and \(s_X\) be the map defined in \eqref{eq:canonical solution}.
	We need to do two things: First, show that \(s_X\) is a solution to \(X\), and second, show that every other solution \(s : X \to \SExp\) satisfies \(s(x) \equiv s_X(x)\) for all \(X\).
	Note that this implicitly proves that \(s_X\) does not depend on \(X^\bullet\), up to \({\equiv}\).
	We need the following lemma.
	
	\begin{lemma}[Lemma 5.4 of \cite{grabmayerfokkink2020extended}]\label{lem:first step to uniq}
		For any \(x \diredge y\) in \(X^\bullet\), \(s_X(y) \equiv t_X(y, x)~s_X(x)\).
	\end{lemma}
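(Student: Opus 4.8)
The plan is to argue by induction on the body-depth $|y|_b$, over all pairs $(x,y)$ with $x \diredge y$, exploiting one structural coincidence in the definitions: $s_X(y)$ and $t_X(y,x)$ have the \emph{same} left argument of $*$. Indeed, in both \eqref{eq:canonical solution} and the definition of $t_X$ this left $*$-argument is built only from the entry transitions out of $y$ together with the values $t_X(w,y)$, and does not depend on the ``target'' argument; call this common expression $E_y$, and write $B^s_y$ and $B^t_y$ for the corresponding right $*$-arguments, so that $s_X(y) \equiv E_y * B^s_y$ and $t_X(y,x) \equiv E_y * B^t_y$. By (BKS2), $t_X(y,x)\,s_X(x) \equiv (E_y * B^t_y)\,s_X(x) \equiv E_y * (B^t_y\,s_X(x))$, so by congruence of $*$ the goal reduces to proving $B^s_y \equiv B^t_y\,s_X(x)$.

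I would then simplify both sides of this reduced equation. Since $x \diredge y$, goto-freeness gives $\neg(y \Rightarrow)$, so the output summands disappear --- via (B6) on the left, and via (B7) on the right once the product with $s_X(x)$ has been distributed. Thus $B^s_y$ collapses to the sum of the $a\,s_X(w)$ over body transitions $y \bo w$, and, distributing with (B4) and reassociating with (B5), $B^t_y\,s_X(x)$ collapses to the sum of the $a\,(t_X(w,x)\,s_X(x))$ over the same transitions; a body transition $y \bo x$ back into the target $x$ contributes the \emph{identical} term $a\,s_X(x)$ to both sides, since reaching $x$ ends the $t_X$-traversal and that action is only followed up by the external $s_X(x)$. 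It remains to show $s_X(w) \equiv t_X(w,x)\,s_X(x)$ for each body successor $w \ne x$ of $y$. Fixing a $\diredge$-witness $x \eo v_1 \bo \cdots \bo v_k \bo y$, such a $w$ cannot equal $y$ or any $v_i$: in either case $y \bo w$ would close a body cycle $w \bo^+ w$, contradicting full specification. Hence $x \eo v_1 \bo \cdots \bo v_k \bo y \bo w$ witnesses $x \diredge w$, and $|w|_b < |y|_b$ because $y \bo w$, so the inductive hypothesis applies. The base case $|y|_b = 0$ is immediate, as both $B^s_y$ and $B^t_y\,s_X(x)$ then reduce to $0$.

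The step I expect to be the main obstacle is the simplification in the second paragraph, and in particular the handling of body transitions that return to the target $x$: one must be careful about exactly how $t_X$ records such a transition, so that the corresponding summand of $B^t_y\,s_X(x)$ comes out to be precisely the $a\,s_X(x)$ occurring in $B^s_y$. One must also confirm well-foundedness --- that $s_X$ and $t_X$ are themselves defined on all the states invoked, and that the recursion never circles back to the pair $(x,y)$ currently under consideration. Granting this, what is left is a routine accounting of finite sums by means of (B1)--(B7), with (BKS2) carrying the one genuinely ``star-theoretic'' move, namely absorbing the trailing factor $s_X(x)$ into the iteration.
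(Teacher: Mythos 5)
Your proposal is correct and follows essentially the same route as the paper's proof: induction on \(|y|_b\), using goto-freeness to discard the output summands, the inductive hypothesis on the body successors \(w \neq x\) of \(y\) (which do satisfy \(x \diredge w\) and \(|w|_b < |y|_b\)), and (BKS2) to absorb the trailing factor \(s_X(x)\) into the star. The only difference is presentational --- you reduce the goal to the right \(*\)-arguments via congruence, whereas the paper rewrites \(s_X(y)\) directly into \(t_X(y,x)\,s_X(x)\) --- and your handling of body transitions returning to the target \(x\) (contributing \(a\,s_X(x)\) on both sides) matches the paper's.
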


	\begin{proof}
		The proof proceeds by induction on \(|y|_b = \max\{m \mid (\exists y_i)~y \bo y_1 \bo \cdots \bo y_m\}\).
		If \(|y|_b = 0\), then 
		\begin{align*}
			s_X(y)
			\equiv \Big(\sum_{y \trans{a}_{\mathsf e} x} a + \sum_{\substack{y\trans{a}_{\mathsf e} v\\y \neq v}} a~t_X(v,y)\Big)*0
			\equiv \Big(\sum_{y \trans{a}_{\mathsf e} x} a + \sum_{\substack{y\trans{a}_{\mathsf e} v\\y \neq v}} a~t_X(v,y)\Big)*0~s_X(x)
			\equiv t_X(y,x) ~s_X(x)
		\end{align*}
		Now, since \(X^\bullet\) is goto-free, \(\neg(y \Rightarrow)\).
		It is not hard to see that if \(y \bo v\), then \(|v|_b < |y|_b\).
		From the induction hypothesis, we obtain
		\begin{align*}
			s_X(y)
			&\equiv \Big(\sum_{y \trans{a}_{\mathsf e} x} a + \sum_{\substack{y\trans{a}_{\mathsf e} v\\y \neq v}} a~t_X(v,y)\Big)
			*
			\Big(0 + \sum_{y\trans{a}_{\mathsf b} v} a~s_X(v)\Big)\\
			&\equiv \Big(\sum_{y \trans{a}_{\mathsf e} x} a + \sum_{\substack{y\trans{a}_{\mathsf e} v\\y \neq v}} a~t_X(v,y)\Big)
			*
			\Big(\sum_{y\trans{a}_{\mathsf b} x} a~s_X(x) + \sum_{\substack{y\trans{a}_{\mathsf b} v\\x \neq v}} a~s_X(v)\Big)\\
			&\equiv \Big(\sum_{y \trans{a}_{\mathsf e} x} a + \sum_{\substack{y\trans{a}_{\mathsf e} v\\y \neq v}} a~t_X(v,y)\Big)
			*
			\Big(\sum_{y\trans{a}_{\mathsf b} x} a~s_X(x) + \sum_{\substack{y\trans{a}_{\mathsf b} v\\x \neq v}} a~t_X(v,x)~s_X(x)\Big)\\	
			&\equiv \Big(\sum_{y \trans{a}_{\mathsf e} x} a + \sum_{\substack{y\trans{a}_{\mathsf e} v\\y \neq v}} a~t_X(v,y)\Big)
			*
			\Big(\sum_{y\trans{a}_{\mathsf b} x} a + \sum_{\substack{y\trans{a}_{\mathsf b} v\\x \neq v}} a~t_X(v,x)\Big)s_X(x)\\	
			&\equiv t_X(y,x)~s_X(x).
		\end{align*}
	\end{proof}

	This allows us to prove that \(s_X\) is a solution to \(X\) as follows: For any \(x \in X\),
	\begin{align*}
		s_X(x) &\equiv \Big(\sum_{x \trans{a}_{\mathsf e} x} a + \sum_{\substack{x\trans{a}_{\mathsf e} y\\x \neq y}} a~t_X(y,x)\Big)
		*
		\Big(\sum_{x \Rightarrow a} a + \sum_{x\trans{a}_{\mathsf b} y} a~s_X(y)\Big)	\tag{def.}\\
		&\equiv \Big(\sum_{x \trans{a}_{\mathsf e} x} a + \sum_{\substack{x\trans{a}_{\mathsf e} y\\x \neq y}} a~t_X(y,x)\Big) s_X(x)
		+ \sum_{x \Rightarrow a} a + \sum_{x\trans{a}_{\mathsf b} y} a~s_X(y) \tag{BSK1}\\
		&\equiv \sum_{x \Rightarrow a} a + \sum_{x \trans{a}_{\mathsf e} x} a~s_X(x) + \sum_{\substack{x\trans{a}_{\mathsf e} y\\x \neq y}} a~t_X(y,x)~s_X(x)
		+  \sum_{x\trans{a}_{\mathsf b} y} a~s_X(y)\tag{B1,4}\\
		&\equiv \sum_{x \Rightarrow a} a + \sum_{x \trans{a}_{\mathsf e} x} a~s_X(x) + \sum_{\substack{x\trans{a}_{\mathsf e} y\\x \neq y}} a~s_X(y)
		+  \sum_{x\trans{a}_{\mathsf b} y} a~s_X(y)\tag{Lemma~\ref{lem:first step to uniq}}\\
		&\equiv \sum_{x \Rightarrow a} a + \sum_{x \trans{a} y} a~s_X(y).
	\end{align*}
	We now turn to uniqueness, for which we need the next lemma.
	
	\begin{lemma}[Lemma 5.7 of \cite{grabmayerfokkink2020extended}]
		For any \(x \diredge y\) in \(X^\bullet\), and any solution \(s : X\to \SExp\) to \(X\), \(s(y) \equiv t_X(y,x)~s(x)\).
	\end{lemma}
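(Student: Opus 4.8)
\emph{Plan.} The statement is the counterpart of Lemma~\ref{lem:first step to uniq} with the canonical solution $s_X$ replaced by an arbitrary solution $s$, and the plan is to run essentially the same computation. The one genuine difference is that $s$ is known only to satisfy the solution equation~\eqref{eq:solution form1} at $y$, not the closed form~\eqref{eq:canonical solution}; consequently the entry transitions out of $y$ cannot simply be left in place (as they can for $s_X$, where they are part of the definition), but must be resolved by a recursive appeal to the present statement at the pair $(y,v)$ whenever $y \eo v$ with $v \neq y$. I would therefore induct on the lexicographically ordered pair $(|y|_{en}, |y|_{b}) \in \N \times \N$ used for the well-definedness of $t_X$, rather than on $|y|_{b}$ alone.

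First I would record the structural facts that make this measure decrease along the recursive calls. Since $x \diredge y$, goto-freeness gives $\neg(y \Rightarrow)$. If $y \eo v$ with $v \neq y$, then $y \diredge v$, and since $(X, \diredge)$ is acyclic, prepending $y \diredge v$ to a longest $\diredge$-chain out of $v$ yields a strictly longer one out of $y$, so $|v|_{en} < |y|_{en}$. If $y \bo v$, then $|v|_{b} < |y|_{b}$; moreover a body step does not increase loop-nesting depth, so $|v|_{en} \le |y|_{en}$; and if in addition $v \neq x$, then appending the edge $y \bo v$ to the path witnessing $x \diredge y$ witnesses $x \diredge v$. Thus in each case the target carries a pair strictly below that of $y$ in the lexicographic order.

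The main computation proceeds as in the proof of Lemma~\ref{lem:first step to uniq}. By flatness every transition of $y$ is uniquely an entry or a body transition, and $\neg(y \Rightarrow)$ kills the output summand of~\eqref{eq:solution form1}, so
\[
  s(y) \;\equiv\; \sum_{y \trans{a}_{\mathsf e} y} a\, s(y) \;+\; \sum_{\substack{y \trans{a}_{\mathsf e} v \\ v \neq y}} a\, s(v) \;+\; \sum_{y \trans{a}_{\mathsf b} v} a\, s(v).
\]
The induction hypothesis gives $s(v) \equiv t_X(v,y)\, s(y)$ for the entry transitions with $v \neq y$ and $s(v) \equiv t_X(v,x)\, s(x)$ for the body transitions with $v \neq x$. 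Substituting these, and using (B1)--(B4) to factor $s(y)$ out of the entry part and $s(x)$ out of the body part while matching the boundary summands (the entry self-loops $y \eo y$ and the back-edges $y \bo x$) against those of the two star-arguments of $t_X(y,x)$ exactly as in Lemma~\ref{lem:first step to uniq}, yields $s(y) \equiv E_y\, s(y) + B_{y,x}\, s(x)$, where $E_y$ and $B_{y,x}$ are the first and second star-arguments of $t_X(y,x)$. Then (RSP) gives $s(y) \equiv E_y * (B_{y,x}\, s(x))$, and (BKS2) gives $E_y * (B_{y,x}\, s(x)) \equiv (E_y * B_{y,x})\, s(x) = t_X(y,x)\, s(x)$. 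The case in which $y$ has no body transitions and no entry transitions to distinct states is the instance with all sums empty (using also (B6) and (B7)) and is subsumed.

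The main obstacle, relative to the proof of Lemma~\ref{lem:first step to uniq}, is exactly this strengthening of the induction: for $s_X$ the entry terms $a\, t_X(v,y)$ are already present in the definition and need no recursion, so $|y|_{b}$ suffices; here they must be produced, which forces the lexicographic pair and with it the (standard but not wholly trivial) fact that body steps never increase $|\cdot|_{en}$. Once that is in place, the remaining work — the bookkeeping of the boundary summands and the concluding (RSP)/(BKS2) manipulation — is the same as in Lemma~\ref{lem:first step to uniq}.
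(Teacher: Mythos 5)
Your overall strategy is the paper's: run the computation of Lemma~\ref{lem:first step to uniq} with $s$ in place of $s_X$, resolve the entry transitions $y \eo v$ ($v \neq y$) by a recursive appeal to the statement at the pair $(y,v)$, and finish with (RSP) and (BKS2). The gap is in the induction measure. You induct on $(|y|_{en},|y|_b)$, both components attached to the moving state $y$, and justify the body-transition case by the claim that $y \bo v$ implies $|v|_{en}\le|y|_{en}$. That claim is false: $|{\cdot}|_{en}$ counts the length of the longest $\diredge$-chain \emph{starting} at a state, i.e.\ the depth of loops entered \emph{from} it, and a body transition can lead from a non-header state into the header of a nested sub-loop. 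Concretely, take the layering witness with transitions $x \eo y$, $y \bo v$, $v \bo x$, $v \eo w$, $w \bo v$ (and no outputs); all conditions of Definition~\ref{def:layering witness} hold, yet $|y|_{en}=0$ while $|v|_{en}=1$. For the instance $x \diredge y$ of the lemma, the body transition $y \bo v$ with $v \neq x$ forces a recursive call at the instance $(x,v)$, whose measure $(|v|_{en},|v|_b)=(1,1)$ is strictly \emph{larger} than $(|y|_{en},|y|_b)=(0,2)$ in the lexicographic order, so your induction is not well-founded as stated.

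The repair is to tie the first component of the measure to the loop header rather than to the moving state, which is what the paper does: induct on $(|x|_{en},|y|_b)$ for the instance $x \diredge y$. A body call $y \bo v$, $v\neq x$, goes to the instance $(x,v)$, leaving the first component untouched while $|v|_b<|y|_b$; an entry call $y \eo v$, $v\neq y$, goes to the instance $(y,v)$ with first component $|y|_{en}<|x|_{en}$ (because $x \diredge y$ and $(X,\diredge)$ is acyclic), regardless of what happens to the second component. With that single change your argument — including the observation that $x \diredge v$ holds for the body targets $v \neq x$, the use of goto-freeness to kill the output summand, and the concluding (RSP)/(BKS2) step — goes through and coincides with the paper's proof.
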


	\begin{proof}
		Here, we proceed by induction on \((|x|_{en},|y|_b)\) wrt. the lexicographical ordering on \(\N\), where \(|x|_{en} = \max\{m \mid (\exists y_i)~x \diredge y_1 \diredge \cdots \diredge y_m\}\).
		Now, if \(y \eo v\), then \(|v|_{en} < |y|_{en}\) and consequently \((|y|_{en}, |v|_b) < (|x|_{en}, |y|_{b})\).
		Since \(x \diredge y\), \(\neg(y \Rightarrow)\), and applying the induction hypothesis yields
		\begin{align*}
			s(y) &\equiv 0 + \sum_{y \trans{a} v} a~s(v)\\
			&\equiv  \sum_{y \trans{a}_{\mathsf{e}} y} a~s(y) 
			+ \sum_{\substack{y \trans{a}_{\mathsf{e}} v \\ y \neq v}} a~s(v) 
			+ \sum_{y \trans{a}_{\mathsf{b}} x} a~s(x) 
			+ \sum_{\substack{y \trans{a}_{\mathsf{b}} v \\ x \neq v}} a~s(v) \\
			&\equiv  \sum_{y \trans{a}_{\mathsf{e}} y} a~s(y) 
			+ \sum_{\substack{y \trans{a}_{\mathsf{e}} v \\ y \neq v}} a~t_X(v,y)~s(y) 
			+ \sum_{y \trans{a}_{\mathsf{b}} x} a~s(x) 
			+ \sum_{\substack{y \trans{a}_{\mathsf{b}} v \\ x \neq v}} a~t_X(v,x)~s(x) \\
			&\equiv \Big(\sum_{y \trans{a}_{\mathsf{e}} y} a + \sum_{\substack{y \trans{a}_{\mathsf{e}} v \\ y \neq v}} a~t_X(v,y)\Big)s(y)
			+ \Big(\sum_{y \trans{a}_{\mathsf{b}} x} a + \sum_{\substack{y \trans{a}_{\mathsf{b}} v \\ x \neq v}} a~t_X(v,x)\Big) s(x) \\	
			&\equiv	\Big(\sum_{y \trans{a}_{\mathsf{e}} y} a + \sum_{\substack{y \trans{a}_{\mathsf{e}} v \\ y \neq v}} a~t_X(v,y)\Big)
			*
			\Big(\sum_{y \trans{a}_{\mathsf{b}} x} a + \sum_{\substack{y \trans{a}_{\mathsf{b}} v \\ x \neq v}} a~t_X(v,x)\Big) s(x)\\
			&\equiv t_X(y,x)~s(x)	
		\end{align*}
	\end{proof}

	Uniqueness of solutions can now be proven as follows: Let \(s : X \to \SExp\) be any solution to \(X\), and let \(X^\bullet\) be a layering witness for \(X\).
	We show that \(s(x) \equiv s_X(x)\) by induction on \(|x|_b\) as follows.
	\begin{align*}
		s(x) &\equiv \sum_{x\Rightarrow a} a + \sum_{x \trans{a} y} a~s(y)\\
		&\equiv \sum_{x\Rightarrow a} a
		+\sum_{x \trans{a}_{\mathsf{e}} x} a~s(x) 
		+ \sum_{\substack{x \trans{a}_{\mathsf{e}} y \\ x \neq y}} a~s(y) 
		+ \sum_{x \trans{a}_{\mathsf{b}} y} a~s(y) \\
		&\equiv \sum_{x\Rightarrow a} a
		+\sum_{x \trans{a}_{\mathsf{e}} x} a~s(x) 
		+ \sum_{\substack{x \trans{a}_{\mathsf{e}} y \\ x \neq y}} a~t_X(y,x)~s(x) 
		+ \sum_{x \trans{a}_{\mathsf{b}} y} a~s_X(y) \\
		&\equiv \sum_{x \trans{a}_{\mathsf{e}} x} a~s(x) 
		+ \sum_{\substack{x \trans{a}_{\mathsf{e}} y \\ x \neq y}} a~t_X(y,x)~s(x) 
		+ \sum_{x\Rightarrow a} a
		+ \sum_{x \trans{a}_{\mathsf{b}} y} a~s_X(y) \\
		&\equiv \Big(\sum_{x \trans{a}_{\mathsf{e}} x} a 
		+ \sum_{\substack{x \trans{a}_{\mathsf{e}} y \\ x \neq y}} a~t_X(y,x)\Big)s(x) 
		+ \sum_{x\Rightarrow a} a
		+ \sum_{x \trans{a}_{\mathsf{b}} y} a~s_X(y) \\
		&\equiv \Big(\sum_{x \trans{a}_{\mathsf{e}} x} a 
		+ \sum_{\substack{x \trans{a}_{\mathsf{e}} y \\ x \neq y}} a~t_X(y,x)\Big)* 
		\Big(\sum_{x\Rightarrow a} a
		+ \sum_{x \trans{a}_{\mathsf{b}} y} a~s_X(y)\Big) \\
		&\equiv s_X(x). \tag*{\qed}
	\end{align*}

	\subsection{Reroutings and closure under homomorphic images}

	\restatearbitrarybisimulations*

	\newcommand \scc {\operatorname{scc}}

	\begin{proof}
		We closely follow Grabmayer and Fokkink's proof of Proposition 6.4 of \cite{grabmayerfokkink2020complete}, making the necessary comments as we go along.

		Let \(R\) be a nontrivial bisimulation equivalence on \(X\), and \((u_1,u_2) \in R\) a pair of distinct states.
		For any \(x \in X\), let \(\scc(x) = \{x' \in X \mid x \to^* x'\ \text{and}\ x' \to^* x\}\), called the \emph{strongly connected component} of \(x\).
		We split the proof into two cases, \(\scc(u_1) = \scc(u_2)\), and \(\scc(u_1) \neq \scc(u_2)\).

		Assume \(\scc(u_1) \neq \scc(u_2)\), and without loss of generality that \(\neg (u_2 \to^* u_1)\).
		We argue that \((u_1,u_2) \to^* (w_1,w_2)\) in \(R\) such that \((w_1,w_2)\) satisfies (C1), by induction on \[
			|u_1|_{lb} = \max\left\{n \mathbin{\Big|} (\exists x_i)~\begin{array}{c}
			u_1 \bo x_1 \bo \cdots \bo x_n,\ u_1 \neq x_i \neq x_j\ \text{for}\ i\neq j,\\ \text{and}\
			\scc(u_1) = \scc(x_i)\ \text{for all}\ i
			\end{array}\right\}.
		\]  
		This number is well-defined because \(X\) is locally finite and fully specified.

		For the base case \(|u_1|_{lb} = 0\), assume that \(v \diredge u_1\) for some \(v \in X\).
		If \(u_2 \to^* \bullet \Rightarrow\), then \((u_1,u_2) \to^* \bullet \Rightarrow\), and therefore \(u_1 \to^* \bullet\Rightarrow\) as well.
		Since every body transition out of \(u_1\) leaves \(\scc(u_1)\), \(\neg(u_1 \loopright v)\) for any \(v\), meaning that \(u_1 \bo^* \bullet\Rightarrow\).
		This contradicts the assumption that \(X^\bullet\) is goto-free, so it must have been that \(\neg(u_2 \to^* \bullet \Rightarrow)\).
		Hence, with \(w_1 = u_1\) and \(w_2 = u_2\), the pair \((w_2,w_2)\) satisfies (C1).

		For the induction case, let \(|u_1|_{lb} > 0\), so that \(u_1 \bo u_1'\) for some \(u_1' \in \scc(u_1)\). 
		Since \(R\) is a bisimulation, \((u_1, u_2) \to (u_1', u_2')\) in \(R\), for some \(u_2' \in X\). 
		Since \(|u_1'|_{lb} < |u_1|_{lb}\), by the induction hypothesis it suffices to show that \(\neg(u_2' \to^* u_1')\).
		However, if this were the case, then \(u_2 \to u_2' \to^* u_1' \to u_1\), since \(u_1' \in \scc(u_1)\).
		This contradicts our assumption that \(\neg(u_2 \to^* u_1)\), so it must be the case that \(\neg(u_2' \to^* u_1')\).
		Whence, \((u_1,u_2) \to (u_1', u_2') \to^* (w_1, w_2)\) for some \((w_1,w_2)\) satisfying (C1).\medskip

		Now assume that \(\scc(u_1) = \scc(u_2)\).
		Since \(X^\bullet\) is layered, \((\scc(u_1), \diredge)\) is acyclic.
		This has two consequences: First of all, it means that \(x \diredge y\) if and only if \(y \loopright x\) for any \(x,y \in \scc(u_1)\).
		Second, it means that \((\scc(u_1), \loopright^*)\) is a join semilattice.
		Therefore, for some \(v \in \scc(u_1)\), \(u_1 \loopright^* v\) and \(u_2 \loopright^* v\), and \(v \loopright^* v'\) for any \(v'\) satisfying \(u_1 \loopright^* v'\) and \(u_2 \loopright^* v'\).
		We consider three distinct cases: \(u_1 = v\), \(u_2 = v\), and \(u_1 \neq v \neq u_2\).

		In the first two cases, simply take \(w_1 = v\) and let \(w_2\) be the \(u_i\) that is not \(v\).
		Since \(u_1 \neq u_2\), this means either that \(u_1 \loopright^+ u_2\) or \(u_2 \loopright^+ u_1\).
		In either case, \((w_1,w_2)\) satisfies (C2).
		Since \(R\) is symmetric, \((w_1,w_2) \in R\).

		In the third case, \(u_i \loopright^+ v\) for \(i = 1,2\).
		We will show that \((u_1,u_2) \to^* (w_1,w_2) \in R\) for some \((w_1,w_2)\) satisfying either (C1), (C2), or (C3).
		A further case analysis is needed, made possible by the following claim.\medskip

		\noindent\textbf{Claim 1.} Where \(i =1,2\), there is a state \(v_i\) such that \(u_i \loopright^* v_i \loopright v\), and for any other \(v'\) such that \(v_i \loopright v'\), \(v \loopright^* v'\).

		\noindent\textit{Proof of claim 1.}
		Where \([u \loopright] = \{v' \mid u \loopright v'\}\), we begin by showing that \([u \loopright]\) is linearly ordered by \(\loopright\).
		To this end, suppose that \(v' \loopleft u \loopright v''\) for distinct \(v',v''\), even though neither \(v' \loopright v''\) nor \(v'' \loopright v'\).
		Since strongly connected components are preserved accross \(\loopright\), \(\scc(v') = \scc(u) = \scc(v'')\), so using the join semilattice structure of \((\scc(u), \loopright^*)\) we find a \(w \in \scc(u)\) such that \(v' \loopright^+ w \loopleft^+ v''\).
		Since \(\neg(v' \loopright v'')\), there is a body path \(u \bo^* w\) that does not pass through \(v''\).
		Hence, \(w \loopright^+ u \loopright v'' \loopright^+ w\), contradicting layeredness of \(X^\bullet\).
		It must have been the case that either \(v' \loopright v''\) or \(v''\loopright v'\).

		Now let \(u_i \loopright^* v' \loopright v\).
		Since \([v' \loopright]\) is finite and linear, either we can take \(v_i = v'\) or there is a \(\loopright\)-maximal state \(v_i\) of \([v'\loopright]\) such that \(v_i\loopright v\).
		In either case, \(u_i \loopright^* v_i \loopright v\) and \(v_i\) satisfies the desired condition.
		This concludes the proof of the claim.\medskip

		According to Claim 1, where \(i =1,2\), there is a state \(v_i\) such that \(u_i \loopright^* v_i \loopright v\), and for any other \(v'\) such that \(v_i \loopright v'\), \(v \loopright^* v'\).
		Furthermore, in choosing \(v\) to be the \(\loopright^*\)-join of \(u_1\) and \(u_2\), we assumed that \(v_1 \neq v_2\).
		Since \(X^\bullet\) is fully specified, either \(\neg(v_2 \bo^+ v_1)\) or \(\neg(v_1 \bo^+ v_2)\).
		Without loss of generalisation, assume \(\neg(v_2 \bo^+ v_1)\).

		At present, our assumptions are that 
		\begin{mathpar}
			u_1 \loopright^* v_1 \loopright v \loopleft v_2 \loopleft^* u_2
			\qquad
			(\forall v')~\text{if \(v_i \loopright v'\), then \(v' = v\) or \(v \loopright v'\)}
			\qquad
			\neg(v_2 \bo^+ v_1)
		\end{mathpar}
		and we claim that it follows from these assumptions that \((u_1,u_2) \to^* (w_1,w_2)\) for some \((w_1,w_2)\) satisfying either (C1), (C2), or (C3).
		We proceed, again, by induction on \(|u_1|_{lb}\).

		If \(|u_1|_{lb} = 0\), then \(u_1 = v_1\).
		Setting \(w_1 = u_1\) and \(w_2 = u_2\), we see that \(v_1\) witnesses the existential quantifier in (C3) for the pair \((w_1,w_2)\).
		That is, \((w_1,w_2)\) satisfies (C3).

		Otherwise, \(u_1 \bo^* u_1'\) for some \(u_1' \in \scc(u_1)\), with \(|u_1'|_{lb} < |u_1|_{lb}\).
		Since \((u_1,u_2) \in R\), \((u_1,u_2) \to (u_1',u_2') \in R\) for some \(u_2'\).
		If \(\scc(u_1') \neq \scc(u_2')\), then we return to the very first case in the proof of this lemma to conclude that \((u_1',u_2') \to^* (w_1,w_2)\) for some \((w_1,w_2)\) satisfying (C1).

		If \(\scc(u_1') = \scc(u_2')\), we have \(\scc(u_2) = \scc(u_1) = \scc(u_1') = \scc(u_2')\).
		Note that because \(u_1 \loopright^+ v_1\) and \(u_1 \bo u_1'\), we have \(u_1' \loopright^* v_1\) as well.
		The rest of the proof operates by breaking the statement \(u_2 \loopright^* v_2\) into its disjuncts.\medskip

		In the first case, let \(u_2 \loopright^+ v_2\).
		Since \(u_2 \to u_2'\), either \(u_2' = v_2\), \(u_2 \eo u_2'\), or \(u_2' \loopright^+ v_2\).
		In any case, \(u_2' \loopright^* v_2\), so we are back in the situation
		\begin{equation}\label{eq:induction hypothesis}
			u_1' \loopright^* v_1 \loopright v \loopleft v_2 \loopleft^* u_2'
			\qquad
			(\forall v')~\text{if \(v_i \loopright v'\), then \(v' = v\) or \(v \loopright v'\)}
			\qquad
			\neg(v_2 \bo^+ v_1)
		\end{equation}
		except with \(|u_1'|_{lb} < |u_1|_{lb}\).
		By the induction hypothesis, \((u_1',u_2') \to^* (w_1,w_2)\) for some \((w_1,w_2)\) satisfying one of (C1), (C2), or (C3).

		In the second case, let \(u_2 = v_2\).
		There are four further subcases to consider.
		\begin{itemize}
			\item[(i)] \(u_2 \eo u_2'\).
			In this subcase, \(u_2'\loopright^* u_2\), and we are back in the situation of (\ref{eq:induction hypothesis}) with \(|u_1'|_{lb} < |u_1|_{lb}\).
			If \(u_2' = u_1'\), then \(v_1 = v_2\) as well, contradicting \(\loopright^*\)-minimality of \(v\). 
			Hence, applying the induction hypothesis completes this subcase.
			\item[(ii)] \(u_2 \bo u_2'\).
			In this subcase, we obtain \(u_1' \neq u_2'\) from the observations that there is a body path \(u_1' \bo^* v_1\), that \(v_2 = u_2\), and we have assumed that \(\neg(v_2 \bo^+ v_1)\).

			To finish this final subcase, observe that there is still the possibility that \(v = u_2'\).
			In such a case, \(u_1' \loopright^* v_1 \loopright v = u_2'\), so setting \(w_1 = u_1'\) and \(w_2 = u_2'\) produces a pair \((w_1,w_2)\) satisfying (C3).

			Thus, the remaining situation to consider is the one in which \(v \neq u_2'\).
			In this case, we reconstruct the situation (\ref{eq:induction hypothesis}), but with fresh states taking the places of \(v_1\) and \(v_2\).
			Toward this end, observe that \(u_2'\loopright^+ v\), so that by claim 1 there exists a state \(v_2'\) such that \(u_2' \loopright^* v_2' \loopright v\) and for any \(v' \loopleft v_2'\) we have \(v \loopright v'\).
			Since \(u_2 = v_2 \bo u_2'\) and \(u_2' \loopright^* v_2'\), we see that \(\neg(v_2' \bo^+ v_1)\) follows from \(X^\bullet\) being goto-free.
			Hence, where \(v_1' = v_1\), we have
			\begin{mathpar}
			u_1' \loopright^* v_1' \loopright v \loopleft v_2' \loopleft^* u_2'
			\qquad
			(\forall v')~\text{if \(v_i' \loopright v'\), then \(v' = v\) or \(v \loopright v'\)}
			\qquad
			\neg(v_2' \bo^+ v_1')
		\end{mathpar}
		and \(|u_1'|_{lb} < |u_1|_{lb}\).
		By the induction hypothesis, \((u_1,u_2) \to^* (w_1,w_2)\) for some pair \((w_1,w_2)\) satisfying one of (C1), (C2), or (C3).
		\end{itemize}
		This exhausting case analysis concludes the proof.
	\end{proof}

	\begin{lemma}[\cite{grabmayerfokkink2020complete}]
		Let \(X\) be a well-layered prechart with layering witness \(X^\bullet\), \(R\) be a bisimulation equivalence on \(X\), and \((w_1,w_2) \in R\) be a pair satisfying one of (C1), (C2), or (C3) in \(X^\bullet\).
		Then \(X[w_2/w_1]\) is well-layered.
	\end{lemma}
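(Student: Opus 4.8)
The plan is to build, for each of the three conditions, an explicit entry/body labelling on $X[w_2/w_1]$ and then verify the five clauses of Definition~\ref{def:layering witness}. Recall that the state space of $X[w_2/w_1]$ is $X-\{w_1\}$ and that its transitions are those of $X$ with every edge into $w_1$ redirected to $w_2$; write $j : X \onto X-\{w_1\}$ for the collapsing map. The candidate labelling simply inherits from $X^\bullet$ the entry/body type of each surviving transition, the only freedom being how to classify a rerouted edge $z \to w_2$ arising from an original $z \to w_1$, and the choice depends on which of (C1)--(C3) is in force. Since $R$ is a bisimulation equivalence and $(w_1,w_2)\in R$, we have $\ker(j)\subseteq R$, so Lemma~\ref{lem:bisimulation rerouting lemma} (equivalently, the discussion of the connect-through-to operation) gives a bisimulation between $X$ and $X[w_2/w_1]$, and in particular $w_1 \bisim w_2$; this lets us transport structural facts about $w_1$ to $w_2$.

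Two of the five clauses are essentially free. Local finiteness is immediate because $X[w_2/w_1]$ is a homomorphic image of a subcoalgebra of the locally finite prechart $X$, so each $\langle x\rangle$ is still finite. Flatness holds by construction, since the labelling never assigns both $\eo$ and $\bo$ to the same ordered pair.

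The real content is in full specification, layeredness, and goto-freeness, and this is where each of (C1)--(C3) is used --- precisely the clauses that the naive rerouting of Figure~\ref{fig:bad rerouting} violates. For full specification one shows (a) $\neg(x\bo^+ x)$ in the new labelling and (b) every new entry edge $x\eo y$ satisfies $y\to^+ x$. A body cycle in $X[w_2/w_1]$ lifts along $j$ to a body walk in $X^\bullet$ that either avoids $w_1$ --- contradicting (3)(a) for $X^\bullet$ --- or visits $w_2$ where the original visited $w_1$; the latter is excluded by the ``$\neg(w_2\bo^* w_1)$'' clauses of (C1) and (C3), while under (C2) the state $w_1$ sits on a genuine loop and the argument is local to that loop. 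For (b), a rerouted entry edge $x\to w_2$ coming from $x\eo w_1$ needs $w_2\to^+ x$; full specification of $X^\bullet$ gives $w_1\to^+ x$, and the active condition supplies the missing link (e.g.\ $w_2\loopright^+ w_1$ under (C2) yields $w_2\to^+ w_1\to^+ x$, with $w_1\bisim w_2$ bridging the remaining cases).

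The main obstacle is layeredness: showing that the rerouted $\diredge$ relation is still acyclic. A $\diredge$-cycle in $X[w_2/w_1]$ corresponds to a closed $\eo\bo^*$-walk in $X^\bullet$ in which some occurrences of $w_1$ have been replaced by $w_2$; if no such replacement occurs we contradict acyclicity of $\diredge$ in $X^\bullet$ outright, and otherwise one must reassemble a genuine $\diredge$-cycle in $X^\bullet$, which is where the conditions bite. Under (C1), $\neg(w_2\to^* w_1)$ forbids the walk from ever returning to a state that reached $w_1$, so no new cycle can close. Under (C2) and (C3) one uses the witnessing state $x$ (with $w_1\loopright x$, $w_2\loopright^+ x$, and $w_1\loopright y\Rightarrow x\loopright y$) to show that any putative new loop was already present, nested inside the loop headed by $x$, again contradicting acyclicity of $\diredge$ in $X^\bullet$. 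Goto-freeness is handled in the same spirit: a new $\diredge$-path to a terminating state pulls back to such a path in $X^\bullet$ unless it passes through $w_2$, and the extra clause of (C1) ($\neg(\exists y)(w_2\to^* y\Rightarrow)$ whenever something $\diredge$-precedes $w_1$), together with goto-freeness of $X^\bullet$, rules this out. Assembling these verifications yields a layering witness for $X[w_2/w_1]$. As the statement is a mild reformulation of the corresponding result in \cite{grabmayerfokkink2020complete}, the detailed combinatorial bookkeeping can be imported from there.
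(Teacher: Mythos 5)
Your overall strategy coincides with the paper's (and with Grabmayer--Fokkink's Proposition 6.8, to which the paper's own proof explicitly defers): exhibit an explicit entry/body labelling of \(X[w_2/w_1]\) in each of the three cases, check the clauses of Definition~\ref{def:layering witness}, and import the detailed combinatorics from \cite{grabmayerfokkink2020complete}. The outline you give for why (C1)--(C3) rescue full specification, layeredness and goto-freeness is in the right spirit.

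There is, however, a concrete flaw in the labelling you propose. You claim it ``simply inherits'' the entry/body type of every surviving transition, the only freedom being the classification of rerouted edges \(z \to w_2\). That is not enough. Deleting \(w_1\) can destroy the return paths witnessing clause (3)(b) for entry transitions \emph{not} incident to \(w_1\) or \(w_2\): if \(v \eo v'\) in \(X^\bullet\) and every path \(v' \to^+ v\) passes through \(w_1\), then after rerouting those paths are diverted into \(w_2\), and \(w_1 \bisim w_2\) only guarantees that \(w_2\) reaches some state bisimilar to \(v\), not \(v\) itself; the inherited label \(v \eo v'\) then violates full specification. The construction actually used therefore additionally demotes every entry transition \(v \eo v'\) with \(\neg(v' \to^+ v)\) in the rerouted chart to a body transition (safe, since the absence of any return path means no body cycle can be created), and in case (C2) it first promotes the body transitions out of an intermediate state \(w_2'\) with \(w_2 \loopright^* w_2' \loopright w_1\) to entry transitions, so that the loop formerly headed by the deleted state \(w_1\) acquires a new header inside \(X[w_2/w_1]\). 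A related point: even flatness is not ``by construction'' under pure inheritance, since \(z \eo w_1\) and \(z \bo w_2\) in \(X^\bullet\) would collapse to both \(z \eo w_2\) and \(z \bo w_2\) after rerouting; this too is resolved by the demotion step. With these global adjustments in place, your verification plan goes through as in \cite{grabmayerfokkink2020complete}; without them, the candidate labelling is in general not a layering witness.
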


	\begin{proof}(Sketch)
		Here, we simply mention the prechart versions of the entry/body labellings that appear in Grabmayer and Fokkink's proof of Proposition 6.8 from \cite{grabmayerfokkink2020complete}.
		Like for Lemma~\ref{lem:arbitrary bisimulations}, a nearly identical proof is sufficient, with only minor tweaks.

		There are three cases considered in Grabmayer and Fokkink's proof, corresponding to whether \((w_1,w_2)\) satisfies (C1), (C2), or (C3).
		\begin{itemize}
			\item[(C1)] Let \(Y^\bullet\) be the entry/body labelling of \(X[w_2/w_1]\) obtained from the rerouting \(X^\bullet[w_2/w_1]\) by replacing each \(v \eo v'\) such that \(\neg(v' \to^+ v)\) with a body transition \(v \bo v'\).
			\item[(C2)] Since \(w_2 \loopright^+ w_1\), there is a \(w_2'\) such that \(w_2 \loopright^* w_2' \loopright v\) and for any \(v' \loopleft w_2'\) we find \(v \loopright v'\).
			Let \(Y^\bullet\) be the entry/body labelling of \(X[w_2/w_1]\) obtained from the rerouting \(X^\bullet[w_2/w_1]\) by replacing each \(w_2' \bo v'\) with an entry transition \(w_2' \eo v'\), and then replacing each \(v \eo v'\) such that \(\neg(v' \to^+ v)\) with a body transition \(v \bo v'\).
			\item[(C3)] If \(v\) witnesses the quantifier in (C3), then \(v \neq w_1\).
			This means that \(v \in X[w_2/w_1]\).
			Let \(Y^\bullet\) be the entry/body labelling from the (C1) case.
		\end{itemize}
		In each case, \(Y^\bullet\) is a layering witness for \(X[w_2/w_1]\).
	\end{proof}

	\subsection{Generalized reroutings}
	Let \(X\) be a prechart, \((x_1, x_2)\) be a pair of states of \(X\), and define \(X[x_2/x_1]\) and \(X[i,j]\) as they were described in \autoref{sec:layered loop existence and elimination}, where \(i : X\setminus\{x_1\} \into X\) and \(j(x_1) = j(x_2)\).

	\begin{proposition}
		The connect-\(x_1\)-through-to-\(x_2\) construction coincides with the rerouting of \(X\) by \((i,j)\), ie. \(X[x_2/x_1] = X[i,j]\).
	\end{proposition}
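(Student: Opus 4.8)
The plan is to simply unwind both constructions. By construction the connect-$x_1$-through-to-$x_2$ prechart $X[x_2/x_1]$ and the rerouting $X[i,j]$ both have carrier set $X\setminus\{x_1\}$, so it suffices to verify that their structure maps $X\setminus\{x_1\}\to P(X\setminus\{x_1\})$ agree. Writing $\langle o,\partial\rangle$ for the structure map of $X$, the structure map of $X[i,j]$ is $P(j)\circ\langle o,\partial\rangle\circ i$ by the definition of rerouting, instantiated at $G=P$. Fix $x\in X\setminus\{x_1\}$ and $a\in A$. Since $i$ is the inclusion we have $i(x)=x$, and unfolding the action of $P$ on the morphism $j$ gives
\[
\bigl(P(j)\circ\langle o,\partial\rangle\circ i\bigr)(x)(a) \;=\; \bigl(\,o(x)(a),\ \Pfin(j)(\partial(x)(a))\,\bigr),
\]
where $\Pfin(j)(S)=\{j(y)\mid y\in S\}$ is the direct image of $S$ under $j$. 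The structure map of $X[x_2/x_1]$ sends $x$ to the pair with output component $o(x)$ and transition component $\partial[x_2/x_1](x)$, so the output components already match, and the claim reduces to the identity $\Pfin(j)(\partial(x)(a))=\partial[x_2/x_1](x)(a)$.

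To establish this last identity I would recall that $j$ fixes every state other than $x_1$ and sends $x_1$ to $x_2$, and then case split exactly as in the definition of $\partial[x_2/x_1]$. If $x_1\notin\partial(x)(a)$, then $j$ restricts to the identity on $\partial(x)(a)$, so $\Pfin(j)(\partial(x)(a))=\partial(x)(a)$, which is the value prescribed by the second clause. If $x_1\in\partial(x)(a)$, then applying $j$ pointwise leaves each $y\in\partial(x)(a)$ with $y\neq x_1$ unchanged and replaces $x_1$ by $x_2$, so $\Pfin(j)(\partial(x)(a))=\{x_2\}\cup(\partial(x)(a)\setminus\{x_1\})$, which is the value prescribed by the first clause. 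In both cases the transition components coincide, completing the verification.

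This is a pure definition chase, so there is no genuine obstacle; the only point worth attention is that the $\Pfin$-component of $P(j)$ is the direct-image map, which is precisely what makes the two cases in the piecewise definition of $\partial[x_2/x_1]$ fall out of whether or not $x_1$ belongs to the set $\partial(x)(a)$ being pushed forward. Indeed, the commuting square displayed just before the Proposition already records exactly this equality, and the Proposition is its formal restatement.
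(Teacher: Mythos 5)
Your proposal is correct and follows essentially the same route as the paper's proof: both are a direct definition chase that unfolds \(P(j)\circ\langle o,\partial\rangle\circ i\) pointwise and observes that the direct image of \(\partial(x)(a)\) under \(j\) reproduces the two clauses of \(\partial[x_2/x_1]\). Your version is slightly more explicit about the output component and the case split on whether \(x_1\in\partial(x)(a)\), but there is no substantive difference.
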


	\begin{proof}
		Given \(y \neq x_1\) in \(X\), \(a \in A\),
		\[
			P(j) \circ \partial \circ i (y) (a)
			= P(j) \circ \partial(y)(a)
			= j(\partial(y)(a))
			= \begin{cases}
				\{x_2\} \cup (\partial(y)(a) \setminus \{x_1\}) &\text{if \(x_1 \in \partial(y)(a)\)}\\
				\partial(y)(a) &\text{otherwise}.
			\end{cases}
		\]
		This is precisely the definition of \(\partial[x_2/x_1](y)(a)\).
	\end{proof}

	\restatebisimulationreroutinglemma*

	\begin{proof}
		Let \((R, \delta_R)\) be the coalgebra structure on \(R\), and define \(i_Q : Q \into R\) and \(j_Q : R \onto Q\) to be the maps \(i_Q(x,y) = (x, i(y))\) and \(j_Q(x, z) = (x, j(z))\).
		We need to check that \(j_Q\) is, indeed, a map into \(Q\).
		This follows from the observation that \(j(z) = j\circ i\circ j(z)\), and therefore since \(\ker(j) \subseteq R\), \((z, i \circ j(z)) \in R\).
		Because \(R\) is transitive and \((x, z), (z, i \circ j(z)) \in R\), \((x, i\circ j(z)) \in R\) as well. 
		This means \(j_Q(x, z) = (x, j(z)) \in Q\).

		Define the coalgebra structure \((Q, \delta_Q[i_Q,j_Q])\), where
		\[
			\delta_Q[i_Q,j_Q] = G(j_Q) \circ \delta_R \circ i_Q.
		\] 
		By definition, \(i_Q\) and \(j_Q\) satisfy \(\pi_1^R \circ i_Q = \pi_1^Q\), \(\pi_2^R \circ i_Q = i \circ \pi_2^Q\), and \(\pi_2^R \circ j_Q = j \circ \pi_2^Q\).
		On the one hand, \(\pi_1^Q : Q \to X\) is a coalgebra homomorphism by definition.
		On the other, 
		\begin{align*}
			\delta[i,j] \circ \pi_2^Q &= G(j) \circ \delta_X \circ i \circ \pi_2^Q \tag{def. of \(\delta[i,j]\)}\\
			&= G(j) \circ \delta_X \circ \pi_2^R \circ i_Q \tag{def. of \(i_Q\)} \\
			&= G(j) \circ G(\pi_2^R) \circ \delta_R \circ i_Q \tag{\(R\) is a bisim.}\\
			&= G(j \circ \pi_2^R) \circ \delta_R \circ i_Q \tag{\(G\) is a functor}\\
			&= G(\pi_2^Q \circ j_Q) \circ \delta_R \circ i_Q \tag{def. of \(j_Q\)}\\
			&= G(\pi_2^Q) \circ G(j_Q) \circ \delta_R \circ i_Q \tag{\(G\) is a functor}\\
			&= G(\pi_2^Q) \circ \delta[i_Q,j_Q]. \tag{def. of \(\delta[i_Q,j_Q]\)}
		\end{align*}
		Thus, \(Q\) is a bisimulation between \(X\) and \(X[i,j]\).
	\end{proof}

	\restatefinitarybisimulationreroutingtheorem*
	
	\begin{proof}
		As (ii) follows from (i), it suffices to show (i).
		Let \(R\) be a bisimulation equivalence on a finite \(G\)-coalgebra \(X \in \mathcal C\).
		We proceed by induction on the number \(n = |R - \Delta_X|\), with the base case being vacuous.
		
		Let \((i,j)\) be a split pair, such that \(U = X[i,j] \in \mathcal C\) is a nontrivial rerouting of \(X\).
		Then, where \(Q = R \cap (U \times U)\), \(|Q-\Delta_U| < n\), as \(j\) identifies some pair of distinct states in \(R\). 
		By Lemma~\ref{lem:bisimulation rerouting lemma}, \(Q\) is a bisimulation equivalence on \(U\), and \(U \in \mathcal C\), so the induction hypothesis tells us \(U/Q \in \mathcal C\).
		Whence, it suffices to show that the composition \([-]_Q \circ j : X \to U \to U/Q\) has precisely \(R\) as its kernel.
		
		Clearly, \(\ker([-]_Q \circ j) \subseteq \ker(j) \subseteq R\).
		To see the converse, let \((x,y) \in R\).
		It suffices to show that \((j(x), j(y)) \in R\), since then \((j(x),j(y)) \in Q\) and hence \([j(x)]_Q = [j(y)]_Q\).
		To this end, observe that \(j\circ i \circ j(x) = j(x)\), so that \((i \circ j(x), x) \in R\) by assumption.
		Similarly, \(j(y) = j\circ i \circ j(y)\), so that \((y, i\circ j(y)) \in R\).
		Since \(R\) is transitive and \((x,y) \in R\), \((i \circ j(x), i \circ j(y)) \in R\).
		But \(i = \text{in}_U : U \into X\), telling us that \((j(x), j(y)) \in R\) as desired.
	\end{proof}

\section{Proofs from \autoref{sec:a global approach}}
	
\restateclosureunderarbitraryhomomorphicimages*

	\begin{proof}
		Since \(X\) locally finite, and local finiteness is preserved under homomorphic images, \(Y\) is locally finite as well.
		Let \(U\) be a finite subcoalgebra of \(Y\).
		By Theorem \ref{thm:closure under homomorphic images}, it suffices to show that \(U\) is the homomorphic image of a well-layered prechart.

		To this end, let \(U = \{y_1, \dots, y_n\}\) and let \(\{x_1, \dots, x_n\} \subseteq X\) be such that \(q(x_i) = y_i\) for \(i = 1,\dots n\).
		If \(V\) is the smallest subcoalgebra containing \(\{x_1, \dots, x_n\}\), then \(q\) restricts to a subcoalgebra homomorphism on \(V\), and by definition \(U\) is the smallest subcoalgebra of \(Y\) containing \(\{y_1,\dots,y_n\}\).
		Whence, the restriction \(q|_V : V \onto U\) is a surjective homomorphism.

		Since \(\langle x_i\rangle\) is finite for each \(i\), \(V\) is finite and therefore well-layered.
		Hence, \(U\) the homomorphic image of a finite well-layered prechart, and by Theorem \ref{thm:closure under homomorphic images} is well-layered.
	\end{proof}

	Let \(G\) be an endofunctor on \(\Sets\) that preserves weak pullbacks.
	Where \(E\) is a locally finite \(G\)-coalgebra and \(\equiv\) is a bisimulation equivalence on \(E\), assume that in the four steps of the local approach we have obtained a class \(\mathcal C\) of finite \(G\)-coalgebras such that
	\begin{itemize}
		\item[(a)] each \(X \in \mathcal C\) admits a unique homomorphism into \(E/\equiv\), 
		\item[(b)] \(\langle e \rangle \in \mathcal C\) for any \(e \in E\), and 
		\item[(c)] \(\mathcal C\) is closed under binary coproducts and homomorphic images.
	\end{itemize}
	Then the class \(\mathcal C_{loc}\) of \emph{locally \(\mathcal C\) coalgebras}, locally finite coalgebras \(X\) such that every finite subcoalgebra of \(X\) is in \(\mathcal C\), satisfies the necessary conditions for steps 2 through 4 of the global approach.
	
	\restatelocaltoglobal*
	
	We prove this in three parts.
	
	\begin{lemma}\label{lem:C_loc is closed under}
		\(\mathcal C_{loc}\) is closed under coproducts and homomorphic images.
	\end{lemma}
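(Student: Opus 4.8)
The plan is to establish the two closure properties separately, in the order \emph{homomorphic images} then \emph{coproducts}, each time reducing the claim to hypothesis (c) applied to finite subcoalgebras.

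For homomorphic images I would simply mirror the proof of Lemma~\ref{lem:closure under arbitrary homomorphic images}, replacing ``well-layered'' by ``a member of \(\mathcal C\)'' throughout. Given a surjective homomorphism \(q : X \onto Y\) with \(X \in \mathcal C_{loc}\), first note that \(Y\) is locally finite, since \(\langle q(x)\rangle = q(\langle x\rangle)\) is a quotient of the finite coalgebra \(\langle x\rangle\). Then, for a finite subcoalgebra \(U = \{y_1,\dots,y_n\}\) of \(Y\), pick preimages \(x_i\) with \(q(x_i)=y_i\); the subcoalgebra \(V = \bigcup_i \langle x_i\rangle\) of \(X\) is finite, hence \(V \in \mathcal C\) because \(X \in \mathcal C_{loc}\), and \(q\) restricts to a surjection \(V \onto U\): indeed \(q(V) = \bigcup_i \langle y_i\rangle\) is a subcoalgebra of \(Y\) containing every \(y_i\), so it contains \(U\) (the subcoalgebra they generate), and it is contained in \(U\) since \(U\) is a subcoalgebra. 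By closure of \(\mathcal C\) under homomorphic images, part (c), we get \(U \in \mathcal C\), so \(Y \in \mathcal C_{loc}\).

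For coproducts, let \(\{X_k\}_{k\in K}\) be a family in \(\mathcal C_{loc}\) and \(X = \coprod_k X_k\), computed on carriers as a disjoint union, with each injection \(\mathrm{in}_k : X_k \to X\) a homomorphism exhibiting \(X_k\) as a subcoalgebra of \(X\) that no transition leaves. Local finiteness of \(X\) is immediate, since for \(x\) lying in the component \(X_k\) the subcoalgebra \(\langle x\rangle\) generated in \(X\) is already contained in \(X_k\), hence finite. Now let \(U\) be a finite subcoalgebra of \(X\). For each \(k\) the trace \(U_k := U \cap X_k\) equals \(\bigcup_{u\in U_k}\langle u\rangle\) (each \(\langle u\rangle\) lies in both subcoalgebras \(U\) and \(X_k\)), so \(U_k\) is a finite subcoalgebra of \(X_k\), whence \(U_k \in \mathcal C\). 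Only finitely many \(U_k\) are nonempty, say for \(k\in\{k_1,\dots,k_m\}\), because \(U\) is finite and the components are disjoint; and since distinct components are mutually unreachable, \(U \cong \coprod_{j=1}^m U_{k_j}\) as a coalgebra. Finitely iterating closure of \(\mathcal C\) under binary coproducts then yields \(U \in \mathcal C\) (the case \(m=0\), i.e.\ \(U=\emptyset\), being trivial), so every finite subcoalgebra of \(X\) lies in \(\mathcal C\) and therefore \(X \in \mathcal C_{loc}\).

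The only real friction I anticipate is bookkeeping in the coproduct case: checking that a finite subcoalgebra of a coproduct decomposes as the coproduct of its traces on the components. This rests on three routine facts about \(\Sets\)-coalgebras — subcoalgebras are exactly the transition-closed subsets, every subcoalgebra is the union of the subcoalgebras generated by its elements, and distinct coproduct components share no transitions — plus the harmless nullary-coproduct edge case. Everything else is a direct invocation of hypothesis (c).
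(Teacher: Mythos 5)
Your proof is correct and follows essentially the same route as the paper's: both halves reduce a finite subcoalgebra of the image or coproduct to finite subcoalgebras of the inputs and then invoke hypothesis (c), with the homomorphic-image half being exactly the preimage-and-union argument of Lemma~\ref{lem:closure under arbitrary homomorphic images}. The only (harmless) deviations are that you treat arbitrary rather than binary coproducts, and that you show the trace \(U \cap X_k\) is a subcoalgebra by writing it as a union of generated subcoalgebras, whereas the paper instead invokes weak-pullback preservation to see that intersections of subcoalgebras are subcoalgebras---your version needs one hypothesis fewer at that step.
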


	\begin{proof}
		Let \(X,Y \in \mathcal C_{loc}\).
		Since \(X \sqcup Y\) is locally finite, it suffices to check that every finite subcoalgebra of \(X \sqcup Y\) is in \(\mathcal C\).
		To this end, let \(U\) be a finite subcoalgebra of \(X \sqcup Y\), and define \(U_1 = U \cap X\) and \(U_2 = U \cap Y\).
		Since \(G\) preserves weak pullbacks, intersections of subcoalgebras of \(X \sqcup Y\) are subcoalgebras, so \(U_1\) is a subcoalgebra of \(X\) and \(U_2\) is a subcoalgebra of \(Y\).
		Since \(U_1\) and \(U_2\) are subsets of \(U\), they are finite.
		This puts \(U_1,U_2 \in \mathcal C\) by assumption.
		Since \(\mathcal C\) is closed under binary coproducts, \(U = U_1 \sqcup U_2 \in \mathcal C\).
		\(U\) was arbitrary, so \(X \sqcup Y \in \mathcal C_{loc}\).
		
		Now suppose only \(X \in \mathcal C_{loc}\), but that \(q : X \onto Y\) is a surjective coalgebra homomorphism.
		Since \(X\) is locally finite, and local finiteness is preserved by surjective homomorphisms, \(Y\) is locally finite.
		Let \(V = \{y_1,\cdots,y_m\}\) be a finite subcoalgebra of \(Y\).
		Choose an \(x_i \in q^{-1}(y_i)\) for each \(i \le m\), and let \(U = \bigcup \langle x_i\rangle\).
		\(U\) is a finite union of finite sets, and subcoalgebras are preserved under unions in general, so \(U\) is a finite subcoalgebra of \(X\).
		This puts \(U \in \mathcal C\).
		Since \(\mathcal C\) is closed under homomorphic images, it suffices to show that \(q(U) = V\).
		
		We begin by observing that \(q(\langle x_i \rangle) \supseteq \langle y_i \rangle\) for each \(i \le m\) by definition, as \(q(\langle x_i \rangle)\) is a subcoalgebra of \(Y\) containing \(y_i\).
		Conversely, \(q^{-1}(\langle y_i \rangle) \supseteq \langle x_i \rangle\), by definition, as \(q^{-1}(\langle y_i \rangle)\) is a subcoalgebra of \(X\) containing \(x_i\).
		Hence, \(\langle y_i \rangle = q(q^{-1}(\langle y_i \rangle)) \supseteq q(\langle x_i \rangle)\), and therefore \(q(\langle x_i \rangle) = \langle y_i \rangle\) for each \(i \le m\).
		Now, images preserve unions, so \[
			q(U) = q(\bigcup \langle x_i\rangle) = \bigcup q(\langle x_i\rangle) = \bigcup \langle y_i\rangle = V.
		\]
		It follows that \(V \in \mathcal C\), so \(Y \in \mathcal C_{loc}\). 
	\end{proof}

	\begin{lemma}
		Every \(X \in \mathcal C_{loc}\) admits a unique homomorphism \(X \to E/{\equiv}\).
	\end{lemma}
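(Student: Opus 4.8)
The plan is to define the homomorphism $s_X : X \to E/{\equiv}$ locally — exactly as was done in the prechart case just above — using the unique solutions on finite subcoalgebras provided by hypothesis (a), and then to verify uniqueness by restriction.

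First I would fix $X \in \mathcal C_{loc}$. Since $X$ is locally finite, for every $x \in X$ the generated subcoalgebra $\langle x\rangle$ is finite, hence $\langle x\rangle \in \mathcal C$ by definition of $\mathcal C_{loc}$, so by (a) there is a \emph{unique} homomorphism $s_{\langle x\rangle} : \langle x\rangle \to E/{\equiv}$. Define $s_X : X \to E/{\equiv}$ by $s_X(x) = s_{\langle x\rangle}(x)$. This needs no case distinction: for any finite subcoalgebra $U$ with $x \in U$ we have $\langle x\rangle \subseteq U$, and restricting the unique homomorphism $s_U : U \to E/{\equiv}$ (which exists since $U \in \mathcal C$) to $\langle x\rangle$ yields a homomorphism $\langle x\rangle \to E/{\equiv}$, so by uniqueness $s_U(x) = s_{\langle x\rangle}(x) = s_X(x)$; that is, $s_X|_U = s_U$ for every finite subcoalgebra $U$ of $X$.

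Next I would show $s_X$ is itself a coalgebra homomorphism via the patching principle: a map $h$ between locally finite $G$-coalgebras is a homomorphism if and only if its restriction to every finite subcoalgebra is one. The forward direction is trivial; for the converse one checks the homomorphism square pointwise at each $x$, using that $\langle x\rangle$ is a genuine subcoalgebra, so $\delta_X(x) = G(i)(\delta_{\langle x\rangle}(x))$ for the inclusion $i : \langle x\rangle \into X$, and then $G(h)(\delta_X(x)) = G(h\circ i)(\delta_{\langle x\rangle}(x)) = G(h|_{\langle x\rangle})(\delta_{\langle x\rangle}(x)) = \delta_{E/{\equiv}}(h|_{\langle x\rangle}(x)) = \delta_{E/{\equiv}}(h(x))$. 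Since $s_X|_U = s_U$ is a homomorphism for every finite subcoalgebra $U$, $s_X$ is a homomorphism. Uniqueness is then immediate: if $h : X \to E/{\equiv}$ is any homomorphism, then $h|_{\langle x\rangle}$ is a homomorphism $\langle x\rangle \to E/{\equiv}$, so by (a) $h|_{\langle x\rangle} = s_{\langle x\rangle}$, whence $h(x) = s_X(x)$ for all $x$.

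I do not expect a serious obstacle here; the argument is a mild generalisation of the prechart discussion following Lemma~\ref{lem:SExp is final}, with Lemma~\ref{thm:unique solutions lemma} replaced by the abstract hypothesis (a). The one point that warrants care is the patching principle — the "if" direction of the claim that a map is a homomorphism precisely when it is one on every finite generated subcoalgebra — which relies on $G$ being a functor and on $\langle x\rangle$ being a subcoalgebra of $X$, together with a one-line diagram chase; everything else is bookkeeping with uniqueness from (a).
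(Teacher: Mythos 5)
Your proof is correct and follows essentially the same route as the paper: define the solution locally from the unique homomorphisms on finite subcoalgebras guaranteed by (a), check consistency by restriction and uniqueness, verify the homomorphism square pointwise, and get uniqueness by restricting to generated subcoalgebras. The only (harmless, arguably cleaner) difference is that you anchor everything at $\langle x\rangle$ rather than intersecting two arbitrary finite subcoalgebras $U\cap V$ as the paper does, which sidesteps the need to know that intersections of subcoalgebras are again subcoalgebras.
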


	\begin{proof}
		Let \(X \in \mathcal C_{loc}\).
		Since \(X\) is locally \(\mathcal C\), \(\langle x \rangle \in \mathcal C\) for every \(x \in X\).
		By assumption, every finite subcoalgebra \(U\) of \(X\) admits a unique solution \(s_U : U \to E/{\equiv}\).
		This allows us to define the relation \(R = \{(x, s_{U}(x)) \mid x \in U \subseteq X, \text{ \(U\) a finite subcoalgebra of \(X\)}\}\).
		We argue that \(R\) is the graph of a homomorphism into \(E/{\equiv}\).
		
		To see that it is the graph of a function, suppose \((x, s_{U}(x)), (x, s_V(x)) \in R\).
		Then \(U \cap V\) is a finite subcoalgebra of \(X\), putting \(U \cap V \in \mathcal C\).
		Therefore, there is a unique solution \(s_{U \cap V} : U \cap V \to E/{\equiv}\) to \(U \cap V\).
		By composing the inclusion homomorphism \(U \cap V \into U \xrightarrow{s_U} E/{\equiv}\), we see that \(s_{U}|_{U \cap V}\) is a solution to \(U \cap V\), and similarly \(s_V|_{U \cap V}\) is a solution.
		We know that \(x \in X\), so by uniqueness of solutions we have\[
			s_U(x) = s_U|_{U \cap V}(x) = s_{U \cap V}(x) = s_V|_{U \cap V}(x) = s_V(x).
		\]
		Hence, \(R\) is the graph of a function, call it \(s : X \to E/{\equiv}\).
		
		Finally, to see that \(s\) is a coalgebra homomorphism, let \(x \in U \subseteq X\), where \(U\) is a finite subcoalgebra of \(X\).
		Then \[
			\delta_{E/{\equiv}} \circ s(x) = \delta_{E/{\equiv}} \circ s_U(x) = G(s_U) \circ \delta_U(x) = G(s) \circ G(\text{in}_U) \circ \delta_U(x) = G(s) \circ \delta_X \circ \text{in}_U(x) = G(s) \circ \delta_X(x).
		\]
		Hence, \(\delta_{E/{\equiv}} \circ s(x) = G(s) \circ \delta_X\), making \(s\) a solution to \(X\).
	\end{proof}

	\begin{lemma}
		\(E/{\equiv} \in \mathcal C_{loc}\)
	\end{lemma}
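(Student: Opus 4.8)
The plan is to reproduce, in the general setting, the short argument given just before Lemma~\ref{lem:SExp is final} for the prechart case: there $\SExp/{\equiv}$ was shown to be locally well-layered by observing that $\SExp$ itself lies in $\mathcal C_{loc}$ and that $\SExp/{\equiv}$ is a surjective homomorphic image of $\SExp$. Here the roles will be played by $E$ and by the quotient map $[-]_{\equiv}$, and the closure result we lean on is Lemma~\ref{lem:C_loc is closed under}.

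First I would check that $E \in \mathcal C_{loc}$. By hypothesis $E$ is locally finite, so it remains to show that every finite subcoalgebra $U$ of $E$ lies in $\mathcal C$. Writing $U = \{e_1,\dots,e_n\}$ for its set of states, we have $U = \langle e_1\rangle \cup \cdots \cup \langle e_n\rangle$, with the union formed inside $E$: each $\langle e_i\rangle$ is a subcoalgebra of $E$ contained in $U$, and conversely every state of $U$ lies in some $\langle e_i\rangle$. This union is exactly the image of the cotupling homomorphism $\langle e_1\rangle \sqcup \cdots \sqcup \langle e_n\rangle \to E$, and the image of a coalgebra homomorphism is a subcoalgebra of its codomain. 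By (b) each $\langle e_i\rangle$ is in $\mathcal C$, and since $E$ is locally finite each $\langle e_i\rangle$ is finite; iterating closure under binary coproducts puts the finite coproduct $\langle e_1\rangle \sqcup \cdots \sqcup \langle e_n\rangle$ in $\mathcal C$, and then closure under homomorphic images (both from (c)) puts $U$ in $\mathcal C$. Hence $E \in \mathcal C_{loc}$.

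Second, since $\equiv$ is a bisimulation equivalence on $E$, the quotient $E/{\equiv}$ carries a $G$-coalgebra structure for which $[-]_{\equiv} : E \onto E/{\equiv}$ is a surjective coalgebra homomorphism; thus $E/{\equiv}$ is a homomorphic image of $E$. As $E \in \mathcal C_{loc}$ and $\mathcal C_{loc}$ is closed under homomorphic images by Lemma~\ref{lem:C_loc is closed under}, we conclude $E/{\equiv} \in \mathcal C_{loc}$. There is no genuine obstacle in this argument; the only step that is more than a one-liner is the passage from hypothesis (b), which supplies only the generated subcoalgebras $\langle e\rangle$, to arbitrary finite subcoalgebras of $E$ — and this is dispatched by closure under finite coproducts and homomorphic images.
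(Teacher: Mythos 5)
Your proposal is correct and follows essentially the same route as the paper: the paper's proof is a one-liner invoking Lemma~\ref{lem:C_loc is closed under} together with the facts that $[-]_{\equiv}$ is a surjective homomorphism and that $E$ is locally of the form $\langle e\rangle$ with $\mathcal C$ closed under finite unions. Your only addition is to spell out why finite subcoalgebras of $E$ land in $\mathcal C$ (a finite union of generated subcoalgebras is the homomorphic image of their coproduct), which is exactly the justification the paper leaves implicit.
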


	\begin{proof}
		This follows from Lemma~\ref{lem:C_loc is closed under}, as \([-]_{\equiv} : E \to E/{\equiv}\) is a surjective coalgebra homomorphism and \(E\) is locally of the form \(\langle e \rangle \in \mathcal C\) for \(e \in \mathcal C\) and \(\mathcal C\) is closed under finite unions. 
	\end{proof}

	It follows from the three lemmas above that \(E/{\equiv} \in \mathcal C_{loc}\), and that every \(X \in \mathcal C_{loc}\) admits a unique homomorphism into \(E/{\equiv}\).
	This is what it means for \(E/{\equiv}\) to be final in \(\mathcal C_{loc}\).

\end{document}